\documentclass[a4paper,11pt]{article}
\pdfoutput=0

\usepackage{amssymb,amsmath,latexsym,amsthm}
\usepackage{color,mathrsfs}
\usepackage{url}
\usepackage{enumerate}
\usepackage[numbers]{natbib}
\usepackage{dsfont}
\usepackage{diagbox}

\usepackage{colortbl}

\usepackage{endnotes}

\usepackage[dvips]{graphics}
\usepackage[xetex]{graphicx}
\usepackage{fancyhdr}
\usepackage{pstricks,pst-plot,pst-node,pstricks-add}

\title{2D Theta Functions and Crystallization among Bravais Lattices }

\author{Laurent B\'{e}termin\thanks{laurent.betermin@u-pec.fr}\\ \\ Universit\'e Paris-Est Cr\'eteil \\ LAMA - CNRS UMR 8050 \\ 61, Avenue du G\'en\'eral de Gaulle, \\ 94010 Cr\'eteil, France  }

\setlength{\oddsidemargin}{0in}
\setlength{\textwidth}{6.5in}
\setlength{\topmargin}{-.5in}
\setlength{\textheight}{9.5in}

\newtheorem{thm}{Theorem}[section]
\newtheorem{defi}{Definition}[section]
\newtheorem{corollary}[thm]{Corollary}
\newtheorem{prop}[thm]{Proposition}

\newtheorem{lemma}[thm]{Lemma}

\theoremstyle{definition}
\newtheorem{remark}[thm]{Remark}

\newtheorem{example}[thm]{Example}
\newtheorem{examples}[thm]{Examples}

\newcommand{\R}{\mathbb R}
\newcommand{\Q}{\mathbb Q}
\newcommand{\Z}{\mathbb Z}
\newcommand{\N}{\mathbb N}
\newcommand{\C}{\mathbb C}
\numberwithin{equation}{section}

\def\XXint#1#2#3{{\setbox0=\hbox{$#1{#2#3}{\int}$}
    \vcenter{\hbox{$#2#3$}}\kern-.5\wd0}}

\allowdisplaybreaks

\begin{document}
\maketitle
\begin{abstract} In this paper, we study minimization problems among Bravais lattices for finite energy per point. We prove -- as claimed by Cohn and Kumar -- that if a function is completely monotonic, then the triangular lattice minimizes energy per particle among Bravais lattices with density fixed for any density. Furthermore we give an example of convex decreasing positive potential for which triangular lattice is not a minimizer for some densities. We use the Montgomery method presented in our previous work to prove minimality of triangular lattice among Bravais lattices at high density for some general potentials. Finally, we deduce global minimality among all Bravais lattices, i.e. without a density constraint, of a triangular lattice for some parameters of Lennard-Jones type potentials and attractive-repulsive Yukawa potentials. 
\end{abstract}

\noindent
\textbf{AMS Classification:}  Primary 82B20 ; Secondary 52C15, 35Q40 \\
\textbf{Keywords:} Lattice energy ; Theta functions ; Triangular lattice ; Crystallization ; Interaction potentials ; Lennard-Jones potential ; Yukawa potential ; Completely monotonic functions ; Ground state. \\

\tableofcontents

\section{Introduction and statement of the main results}
The two-dimensional crystallization phenomenon -- that is to say the formation of periodic structures in matter, most of the time at very low temperatures, -- is well known and observed. For instance, similarly to \cite{DustyPlasma}, the following may be mentioned : Langmuir monolayers, Wigner crystal\footnote{In a system of interacting electrons, where the coulomb interaction energy between them sufficiently dominates the kinetic energy or thermal fluctuations}, rare gas atoms adsorbed on graphite, colloidal suspensions, dusty plasma and, from another point of view, vortices in superconductors. In all these cases, particle interactions are complex (quantum effects, kinetic energy, forces related to the environment) and this implies that the physical and mathematical understanding of this kind of problem is highly complicated. However, we would like to know the precise mechanisms that favour the emergence of these periodic structures in order to predict crystal shapes or to build new materials.\\ \\
Semiempirical model potential with experimentally determined parameters are widely used in various physical and chemical problems, and for instance in Monte Carlo simulation studies of clusters and condensed matter. A widespread model is the radial potential, also called ``two-body potential", which corresponds to interaction only depending on distances between particles. This kind of potential, based on approximations, seems to be effective to show the behaviour of matter at very low temperature, when potential energy dominates the others. There are many examples, that can be found in \cite{Kaplan}, but they are usually constructed, except for very simple models such as Hard-sphere, with inverse power laws and exponential functions, easily calculated with a computer if we consider a very large number of particles. For instance we can cite :\begin{itemize}
\item the Lennard-Jones potential $\displaystyle r\mapsto \frac{a_2}{r^{x_2}}-\frac{a_1}{r^{x_1}}$, where the attractive term corresponds to the dispersion dipole-dipole (van der Waals : $\sim r^{-6}$) interaction, initially proposed by Lennard-Jones in \cite{LJ} to study the thermodynamic properties of rare gases and now widely used to study various systems, the best know being for $(x_1,x_2)=(6,12)$;
\item the Buckingham potential $\displaystyle r\mapsto a_1 e^{-\alpha r}-\frac{a_2}{r^6}-\frac{a_3}{r^8}$ proposed by Buckingham in \cite{Buck} and including attractive terms due to the dispersion dipole-dipole ($\sim r^{-6}$) and dipole-quadrupole ($\sim r^{-8}$) interactions, and repulsive terms approximated by an exponential function;
\item the purely repulsive screened Coulomb potential $\displaystyle r\mapsto a\frac{e^{-br}}{r}$, also called ``Yukawa potential", proposed by Bohr in \cite{Bohr} for short atom-atom distances and used for describing interactions in colloidal suspensions, dusty plasmas and Thomas-Fermi model for solids \cite{YBLB,Betermin:2014fy};
\item the Born-Mayer potential $\displaystyle r\mapsto ae^{-br}$ used by Born and Mayer in \cite{BornMayer} in their study of the properties of ionic crystals in order to describe the repulsion of closed shells of ions.
\end{itemize}
Many mathematical works\footnote{We cite only papers about 2D problems.} were conducted with various assumptions on particles interaction : hard sphere potentials \cite{Rad2,Rad3}; oscillating potentials \cite{Suto1}; radial (parametrized or not) potentials \cite{VN3,Crystal,ELi,FriSchmidtYeung}; molecular simulations with radial potentials \cite{Torquato1,TorquatoHCb,YBLB}; three-body (radial and angle parts) potentials \cite{Stef1,Stef2,MainPiovanoStef}; radial potentials and crystallization among Bravais lattices (Number Theory results and applications) \cite{Eno2,Rankin,Cassels,OPS,Gruber,Coulangeon:2010uq,Coulangeon:kx,CoulLazzarini,SarStromb,Mont,Betermin:2014fy}; vortices, in superconductors, among Bravais lattice configurations \cite{Sandier_Serfaty,Zhang,Betermin:2014rr}. Writing these problems in terms of energy minimization is common to all these studies. Furthermore, in many cases, triangular lattices (also called ``Abrikosov lattices" in Ginzburg-Landau theory \cite{Abrikosov}, or sometimes ``hexagonal lattices"), which achieves the best-packing configuration in two dimensions, is a minimizer for the corresponding energy. \\ \\

\begin{center}
\includegraphics[width=10cm,height=80mm]{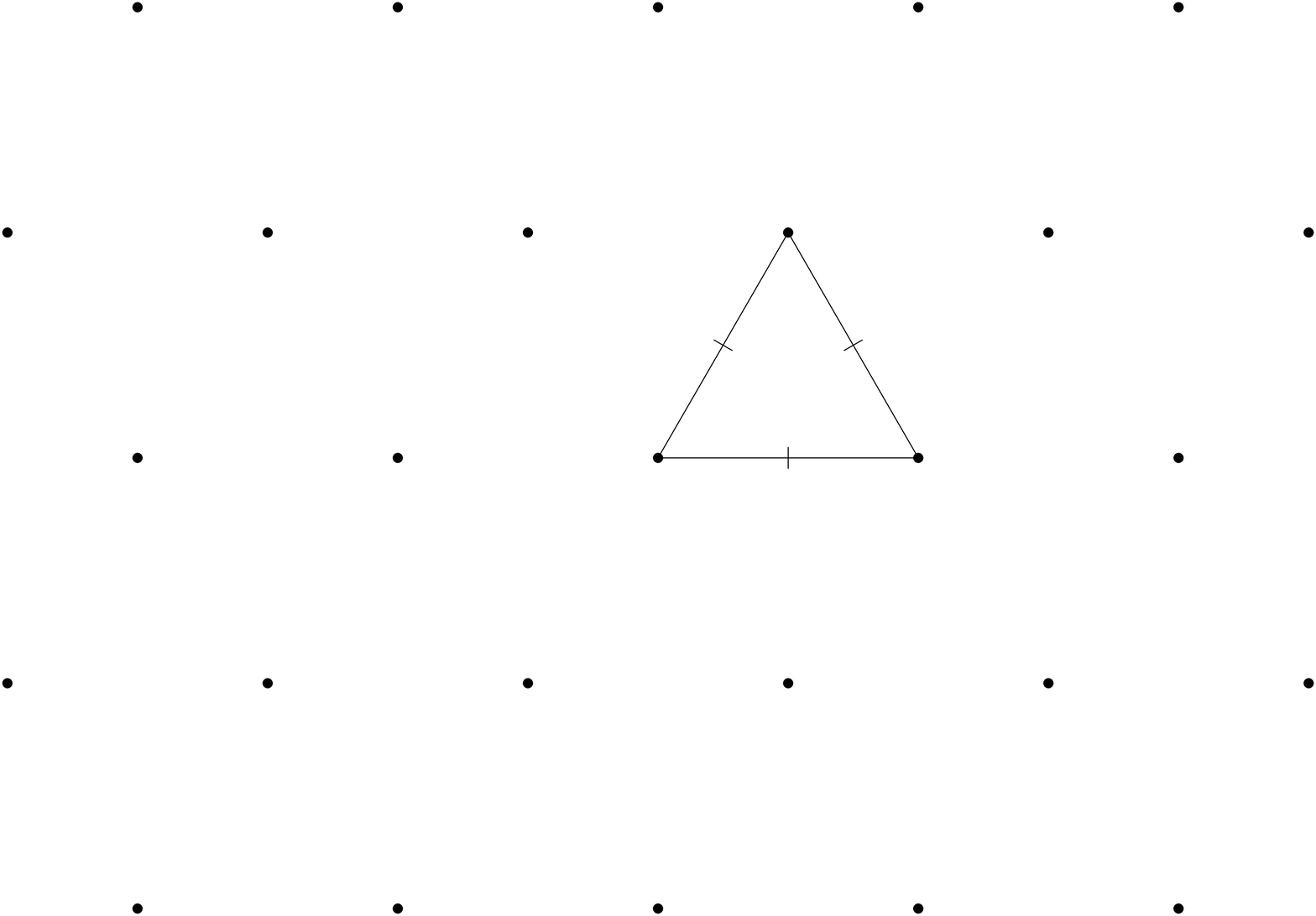}\\
\textbf{Fig. 1 :} Triangular lattice
\end{center}
A clue to understanding this optimality, which is claimed in \cite[p. 139]{CohnKumar}\footnote{A proof of this assertion will be given in Section 3.1.}, is the fact that triangular lattice minimizes, among Bravais lattices, at fixed density, energies
$$
L\mapsto E_{f}[L]:=\sum_{p\in L\backslash\{0\}}f(\|p\|^2)
$$
where $\|.\|$ denotes Euclidean norm in $\R^2$ and $f:\R_+^*\to \R$ is a completely monotonic function, i.e. $\forall k\in\N, \forall r\in \R_+^*, (-1)^k f^{(k)}(r)\geq 0$. Moreover, Cohn and Kumar conjectured, in \cite[Conjecture 9.4]{CohnKumar}, that the triangular lattice seems to minimize energies $E_{f}$ among complex lattices, i.e. union of Bravais lattices, with a fixed density. Hence, it is not surprising, as for the Lennard-Jones potential we studied in \cite{Betermin:2014fy}, that some non-convex sums of completely monotonic functions give triangular minimizer for their energies at high fixed density. We observed this behaviour in works of Torquato et al. \cite{Torquato1,TorquatoHCb}. However, it is important to distinguish mathematical results and physical consistency. Indeed, at very high density i.e. when particles are sufficiently close, kinetic and quantum effects cannot be ignored and our model fails. For instance, Wigner crystal appears if the density is sufficiently low and matter obviously cannot be too condensed. Nevertheless, this kind of result is interesting, whether in Number Theory or in Mathematical Physics and this study of energy among Bravais lattices is the first important step in the search for global ground state, i.e. minimizer among all configurations. For instance, we have recently found in \cite{Betermin:2014rr} a deep connexion between behaviour of vortices in the Ginzburg-Landau theory, and more precisely works of Sandier and Serfaty \cite{Sandier_Serfaty,2DSandier,Serfaty:2013fk}, and optimal logarithmic energy on the unit sphere related to Smale $7$th Problem. Thus the optimality of triangular lattice, among Bravais lattices, for a renormalized energy $W$, which is a kind of Coulomb energy between points in the whole plane, gives important information about optimal asymptotic expansion of spherical logarithmic energy thanks to works by Saff et al. \cite{2400418,Brauchart}. \\ \\
The aim of this paper is to prove this minimality of triangular lattice at high density, with the same strategy as in our previous work \cite{Betermin:2014fy}, that is to say the use of Montgomery result \cite{Mont} about optimality of triangular lattice at a fixed density for theta functions
$$
L\mapsto \theta_L(\alpha):=\sum_{p\in L\backslash\{0\}} e^{-2\pi \alpha \|p\|^2},
$$
for some general admissible\footnote{A rigorous definition will be given in preliminaries.} potentials $f$, summable on lattices and such that their inverse Laplace transform $\mu_f$ exists on $[0,+\infty)$. Hence, as in the classical ``Riemann's trick" that we used in \cite{Betermin:2014fy}, we can write an integral representation of energy $E_f$ which we deduce a sufficient condition for minimality of triangular lattice among Bravais lattices of fixed density\footnote{Actually, as in \cite{Betermin:2014fy}, we will write all our results in terms of area, that is to say the inverse of the density.}. This is precisely the aim of our first main theorem,  which we now state.
\vspace{7mm}

\begin{thm}\label{THM1} For any admissible potential $f$, for any $A>0$ and any Bravais lattice $L$ of area $A$, there exists a constant $C_A$, which not depends on $L$, such that
\begin{equation}\label{Intrepresent}
E_f[L]=\frac{\pi}{A}\int_1^{+\infty}\left[\theta_L\left(\frac{y}{2A}  \right)-1  \right]\left[y^{-1}\mu_f\left( \frac{\pi}{yA}  \right)+\mu_f\left( \frac{\pi y}{A} \right)  \right]dy + C_A
\end{equation}
where $\mu_f$ is the inverse Laplace transform of $f$. Moreover, if 
\begin{equation}\label{SuffiCond}
y^{-1}\mu_f\left( \frac{\pi}{yA}  \right)+\mu_f\left( \frac{\pi y}{A} \right) \geq 0 \quad \text{ a.e. on } [1,+\infty)
\end{equation}
then the triangular lattice of area $A$, i.e. $\displaystyle\Lambda_A=\sqrt{\frac{2A}{\sqrt{3}}}\left[\Z (1,0) \oplus \Z (1/2,\sqrt{3}/2)\right]$, is the unique minimizer of $L\mapsto E_f[L]$, up to rotation, among Bravais lattices of fixed area $A$.
\end{thm}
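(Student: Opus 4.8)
The plan is to turn the energy into an average of theta functions and then invoke Montgomery's minimality. First I would use that, by definition of the inverse Laplace transform, $f(r)=\int_0^{+\infty}\mu_f(t)e^{-rt}\,dt$, so that summing over $p\in L\setminus\{0\}$ and exchanging sum and integral (legitimate once $f$ is admissible, i.e.\ summable on $L$ with $\mu_f$ integrable against the lattice theta series) gives the clean identity
\[
E_f[L]=\int_0^{+\infty}\mu_f(t)\,\theta_L\!\left(\frac{t}{2\pi}\right)dt=\frac{\pi}{A}\int_0^{+\infty}\mu_f\!\left(\frac{\pi y}{A}\right)\theta_L\!\left(\frac{y}{2A}\right)dy,
\]
the last step being the change of variables $t=\pi y/A$, i.e.\ $\alpha=y/(2A)$.

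The core of the argument is then a ``Riemann trick'': I split the integral at $y=1$ and fold the piece over $(0,1)$ onto $(1,+\infty)$ via $y\mapsto 1/y$. To re-express $\theta_L(1/(2Ay))$ I would use the self-duality of the lattice theta function at fixed area, namely
\[
\theta_L\!\left(\frac{1}{2Ay}\right)=y\,\theta_L\!\left(\frac{y}{2A}\right)+y-1,
\]
whose fixed point is exactly the splitting value $y=1$. This relation is the combination of the Jacobi/Poisson functional equation $\theta_L(\alpha)+1=\tfrac{1}{2A\alpha}\bigl[\theta_{L^*}(1/(4\alpha))+1\bigr]$ with the modular invariance of the planar theta function, which identifies the rescaled dual lattice $AL^*$ with $L$ (they are modular images of one another, hence carry the same theta function). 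Substituting, the folded piece contributes $y^{-1}\mu_f(\pi/(yA))\,\theta_L(y/(2A))$ plus an $L$-independent remainder, so that after recombination the weight of $\theta_L(y/(2A))$ on $(1,+\infty)$ is precisely $y^{-1}\mu_f(\pi/(yA))+\mu_f(\pi y/A)$.

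It then remains to bookkeep the constants: writing $\theta_L(y/(2A))=[\theta_L(y/(2A))-1]+1$ and absorbing the ``$+1$'' together with the remainder from the folding into a single quantity $C_A$, which visibly depends only on $A$ and $f$ (not on the shape of $L$), yields the representation \eqref{Intrepresent}. For the minimality, observe that under \eqref{SuffiCond} the bracket $y^{-1}\mu_f(\pi/(yA))+\mu_f(\pi y/A)$ is nonnegative a.e.\ on $[1,+\infty)$; by Montgomery's theorem the triangular lattice $\Lambda_A$ minimizes $L\mapsto\theta_L(\alpha)$ among area-$A$ Bravais lattices for every $\alpha>0$, so the integrand in \eqref{Intrepresent} is minimized pointwise in $y$ by $\Lambda_A$. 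Integrating gives $E_f[L]\ge E_f[\Lambda_A]$, and the strict uniqueness in Montgomery's result (together with continuity of $L\mapsto\theta_L$) upgrades this to uniqueness of $\Lambda_A$ up to rotation.

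I expect the main obstacle to be analytic rather than structural: justifying the interchange of summation and integration and the convergence/finiteness of all the integrals (in particular of $C_A$ and of the folded term near $y\to+\infty$), which is exactly what the notion of \emph{admissible} potential is designed to guarantee and which I would verify using the decay encoded in that definition. The one genuinely non-routine identity is the self-duality relation above; everything downstream is a pointwise comparison supplied by Montgomery's theorem, so the proof hinges on having that functional equation in hand.
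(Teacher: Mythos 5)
Your overall route is the paper's: represent $E_f$ through the inverse Laplace transform, exchange sum and integral, split the resulting integral at the self-dual point, fold $(0,1)$ onto $(1,+\infty)$ by $y\mapsto 1/y$ using the theta functional equation, and finish with Montgomery's pointwise comparison (the paper first normalizes to $|L|=1/2$, where the identity reads $\theta_L(y^{-1})=y\,\theta_L(y)$, and rescales afterwards; you work directly at area $A$, which is only a cosmetic difference). However, as written your argument contains a genuine error caused by a clash of conventions. The $\theta_L$ of the theorem (and of the paper's definition) \emph{includes} the origin, so $\sum_{p\in L\setminus\{0\}}e^{-t\|p\|^2}=\theta_L\left(\frac{t}{2\pi}\right)-1$. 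With that convention your first display $E_f[L]=\int_0^{+\infty}\mu_f(t)\,\theta_L\left(\frac{t}{2\pi}\right)dt$ is false --- indeed divergent, since $\int_1^{+\infty}\mu_f(t)\,dt=+\infty$ already for $f(r)=r^{-x}$ --- and the correct functional equation is the clean $\theta_L\left(\frac{1}{2Ay}\right)=y\,\theta_L\left(\frac{y}{2A}\right)$, with no ``$+y-1$''. Your displays are instead correct for the origin-free series $\theta_L-1$; read consistently that way, your recombination on $(1,+\infty)$ \emph{already is} \eqref{Intrepresent}, with $C_A=\frac{\pi}{A}\int_1^{+\infty}\mu_f\left(\frac{\pi}{yA}\right)(y^{-1}-y^{-2})\,dy$, and nothing further should be subtracted.

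This is where your last bookkeeping step breaks down: ``writing $\theta_L=[\theta_L-1]+1$ and absorbing the $+1$ into $C_A$'' forces you to absorb $\frac{\pi}{A}\int_1^{+\infty}\left[y^{-1}\mu_f\left(\frac{\pi}{yA}\right)+\mu_f\left(\frac{\pi y}{A}\right)\right]dy$, and $\int_1^{+\infty}\mu_f\left(\frac{\pi y}{A}\right)dy$ diverges for the most basic admissible potentials (any inverse power law, hence any $V_{a,x}$), so the ``constant'' you construct is infinite. The repair is exactly the paper's bookkeeping: carry the $-1$ from the very first identity, use the clean functional equation for the full theta function, and the only constant produced is the folded remainder above; its finiteness still requires proof, which the paper supplies via the substitution $t=1/y$ and the bound $\left|\int_1^{+\infty}\mu_f\left(\frac{\pi}{yA}\right)(y^{-1}-y^{-2})\,dy\right|\leq\int_0^1\left|\mu_f\left(\frac{\pi t}{A}\right)\right|(t^{-1}-1)\,dt<+\infty$, using that $\mu_f$ is continuous with $\mu_f(0)=0$ --- a verification you explicitly defer. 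Your derivation of the functional equation from Poisson summation together with the identification of $AL^*$ with $L$ up to rotation, and the concluding Montgomery comparison, are correct and are the same key ingredients the paper uses.
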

\vspace{7mm}

\noindent Sufficient condition \eqref{SuffiCond} can be applied for some general functions $f$. More precisely we will consider the following potentials\footnote{It is important to distinguish potential $f$ and the function $r\mapsto f(r^2)$ that we sum on lattices to compute $E_{f}$.}, defined for $r>0$, which we will explain the interest throughout the paper : 
\begin{itemize}
\item Sums of screened coulombian potentials : 
$$
 \varphi_{a,x}(r)= \sum_{i=1}^n a_i \frac{e^{-x_i r}}{r},
$$
 with $0<x_1<x_2...<x_n$, $a_i\in \R^*$ for all $1\leq i\leq n$ and $\displaystyle \sum_{i=1}^n a_i\geq 0$;
\item Sums of inverse power laws : 
$$
V_{a,x}(r)=\sum_{i=1}^n \frac{a_i}{r^{x_i}},
$$
with $1<x_1<x_2<...<x_n$, $a_i\in \R^*$ for all $1\leq i\leq n$ and $a_n>0$; 

\item Potentials with exponential decay : 
$$
 f_{a,x,b,t}(r)=V_{a,x}(r)+\sum_{i=1}^m b_i e^{-t_i\sqrt{r}},
 $$
 with $3/2<x_1<x_2<...<x_n$, $a_i\in\R_+^*$ for all $1\leq i\leq n$, $a_n>0$, $b_j,t_j\in\R^*$ for all $1\leq j\leq m$.
\end{itemize}
Thus, even though our method is not optimal, we will give explicit area bounds in Propositions \ref{SCGen}, \ref{HDInvPower}, \ref{LJones} and \ref{Genexpo}, with respect to potential parameters, above which minimizer is triangular and we give conditions on parameters, for potentials $\varphi_{a,x}^{AR}$ and $V_{a,x}^{LJ}$ in order to get a triangular global minimizer, i.e. without area constraint, in particular when the potential has a well. This is the aim of our second theorem.
\vspace{7mm}

\begin{thm}\label{THM2} Let functions $\varphi_{a,x},\varphi_{a,x}^{AR}, V_{a,x}, V_{a,x}^{LJ}$ and $f_{a,x,b,t}$ be defined as before.\\ \\ 
\noindent \textnormal{\textbf{A. Minimality at high density.}} If $f\in \{\varphi_{a,x}, V_{a,x},f_{a,x,b,t}\}$ then there exists $A_0>0$ such that for any $0<A\leq A_0$, $\Lambda_A$ is the unique minimizer, up to rotation, of $L\mapsto E_f[L]$ among Bravais lattices of fixed area $A$. \\ \\
\noindent \textnormal{\textbf{B. Global optimality without an area constraint.}} We have the following two cases
\begin{enumerate}
\item  Let $\varphi_{a,x}^{AR}$ be the attractive-repulsive potential defined by 
$$
 \varphi_{a,x}^{AR}(r)=a_2\frac{e^{-x_2 r}}{r}-a_1\frac{e^{-x_1 r}}{r},
 $$
  where $0<a_1<a_2$ and $0<x_1<x_2$. If $a_1,a_2,x_1,x_2$ satisfy 
\begin{equation}\label{CondMinGlob}
\frac{a_1\left(1+\frac{x_1}{x_2}\pi  \right)}{a_2(1+\pi)}e^{\left( 1-\frac{x_1}{x_2} \right)\pi}\geq 1 \quad \text{ and } \quad \frac{a_1\left( a_1x_2+x_1(a_2-a_1)\pi \right)}{a_2x_2\left(a_1+(a_2-a_1)\pi\right)}e^{\left(1-\frac{x_1}{x_2}  \right)\left( \frac{a_2}{a_1}-1 \right)\pi}\geq 1,
\end{equation}
then the minimizer of $L\mapsto E_{\varphi_{a,x}^{AR}}[L]$ among all Bravais lattices is unique, up to rotation, and triangular. In particular it is true if $a_2=2a_1$ and $x_1\leq 0.695 x_2$.
\item Let $V_{a,x}^{LJ}$ be the Lennard-Jones type potential defined by
$$
V_{a,x}^{LJ}(r)=\frac{a_2}{r^{x_2}}-\frac{a_1}{r^{x_1}},
$$
with $1<x_1<x_2$ and $(a_1,a_2)\in (0,+\infty)^2$. We set $h(t)=\pi^{-t}\Gamma(t) t$. If $h(x_2)\leq h(x_1)$ then the minimizer $L_{a,x}$ of $L\mapsto E_{V_{a,x}^{LJ}}[L]$ among all Bravais lattices is unique, up to rotation, and triangular. Moreover its area is
$$
|L_{a,x}|=\left(\frac{a_2x_2\zeta_{\Lambda_1}(2x_2)}{a_1x_1\zeta_{\Lambda_1}(2x_1)}  \right)^{\frac{1}{x_2-x_1}}.
$$
In particular, it is true if $(x_1,x_2)\in \left\{(1.5,2);(1.5,2.5);(1.5,3);(2,2.5);(2;3) \right\}$\footnote{See Section 6.4 for numerical values.}.
\end{enumerate}
\end{thm}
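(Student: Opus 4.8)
For the high-density statement \textbf{A} the plan is to apply Theorem \ref{THM1}, so the first step is to compute the inverse Laplace transform $\mu_f$ of each family. Using $r^{-1}e^{-xr}=\int_x^{+\infty}e^{-sr}\,ds$ one gets $\mu_{\varphi_{a,x}}(t)=\sum_{i=1}^n a_i\mathbf{1}_{[x_i,+\infty)}(t)$; the power laws give $\mu_{V_{a,x}}(t)=\sum_{i=1}^n\frac{a_i}{\Gamma(x_i)}t^{x_i-1}$; and for $f_{a,x,b,t}$ one adds the (signed) one-sided stable densities $\mathcal{L}^{-1}[e^{-t_j\sqrt{r}}](t)=\frac{t_j}{2\sqrt{\pi}}t^{-3/2}e^{-t_j^2/(4t)}$. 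In every case $\mu_f$ is eventually nonnegative: it tends to $\sum_i a_i\ge0$ for $\varphi_{a,x}$, and is dominated at $+\infty$ by the positive term $\frac{a_n}{\Gamma(x_n)}t^{x_n-1}$ for the other two, since $x_n$ is the largest exponent, $a_n>0$, and $x_1>3/2$ makes this power beat the $t^{-3/2}$ decay of the stable densities. Thus there is $T$ with $\mu_f\ge0$ on $[T,+\infty)$ and $\mu_f$ bounded below on $(0,T)$. I would then check \eqref{SuffiCond} for small $A$: once $\pi/A\ge T$ the term $\mu_f(\pi y/A)$ is nonnegative for all $y\ge1$, whereas the possibly negative values of $y^{-1}\mu_f(\pi/(yA))$ occur only for $y>\pi/(AT)$, where the prefactor $y^{-1}=O(A)$ suppresses them and the (strictly positive, or divergent) second term dominates. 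A small enough $A_0$ then secures \eqref{SuffiCond}, and Theorem \ref{THM1} gives uniqueness of $\Lambda_A$.

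For the Lennard-Jones part of \textbf{B} there is no useful $\mu_f$, so I would argue by scaling. Writing $L=\sqrt{A}\,L_1$ with $|L_1|=1$ and $\zeta_L(s)=\sum_{p\in L\setminus\{0\}}\|p\|^{-s}$, one has $E_{V_{a,x}^{LJ}}[L]=a_2A^{-x_2}\zeta_{L_1}(2x_2)-a_1A^{-x_1}\zeta_{L_1}(2x_1)$. For fixed $L_1$ this function of $A>0$ tends to $+\infty$ as $A\to0$ and to $0^-$ as $A\to+\infty$, and the unique zero of its derivative is the minimizer $A^*(L_1)=\big(\frac{a_2x_2\zeta_{L_1}(2x_2)}{a_1x_1\zeta_{L_1}(2x_1)}\big)^{1/(x_2-x_1)}$, which at $L_1=\Lambda_1$ is exactly $|L_{a,x}|$. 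Substituting back, $\min_{A>0}E[\sqrt{A}\,L_1]$ equals a negative constant times $\Phi(L_1):=\zeta_{L_1}(2x_1)^{x_2/(x_2-x_1)}\zeta_{L_1}(2x_2)^{-x_1/(x_2-x_1)}$, so the density-free problem reduces to showing that $\Lambda_1$ \emph{maximizes} $\Phi$. The Mellin identity $\zeta_L(s)=\frac{(2\pi)^{s/2}}{\Gamma(s/2)}\int_0^{+\infty}\alpha^{s/2-1}\theta_L(\alpha)\,d\alpha$ writes $\zeta_L(s)$ as a positive superposition of the $\theta_L(\alpha)$, so Montgomery's pointwise minimality of $\Lambda_1$ gives $\zeta_{L_1}(2x_i)\ge\zeta_{\Lambda_1}(2x_i)$ for $i=1,2$. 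Since $\Phi$ increases in $\zeta_{L_1}(2x_1)$ but decreases in $\zeta_{L_1}(2x_2)$ these two inequalities pull in opposite directions, and I expect the hypothesis $h(x_2)\le h(x_1)$, with $h(t)=\pi^{-t}\Gamma(t)t$, to be exactly the balance condition — read off from the functional equation of the Epstein zeta at the (essentially self-dual) lattice $\Lambda_1$ — that makes $\Lambda_1$ the maximizer.

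For the Yukawa part of \textbf{B} the transform is $\mu(t)=a_2\mathbf{1}_{[x_2,+\infty)}(t)-a_1\mathbf{1}_{[x_1,+\infty)}(t)$, negative on $[x_1,x_2)$; a short computation shows \eqref{SuffiCond} holds at high density but fails exactly once $A>\pi/\sqrt{x_1x_2}$. I would therefore study the scalar map $A\mapsto E_{\varphi_{a,x}^{AR}}[\Lambda_A]$: because $a_2>a_1$ gives a repulsive core and $x_1<x_2$ an attractive tail, it diverges as $A\to0$, is strictly negative at its minimum, and returns to $0$ as $A\to+\infty$, hence attains its global minimum at a finite area $A^*$. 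The crux is to prove, under \eqref{CondMinGlob}, that $A^*$ falls inside the high-density range where \eqref{SuffiCond} holds — so that Theorem \ref{THM1} makes $\Lambda_{A^*}$ optimal at area $A^*$ — and that no lower-density lattice undercuts $E[\Lambda_{A^*}]$; the two inequalities in \eqref{CondMinGlob} are the two quantitative conditions that achieve this, one locating $A^*$ below the failure threshold and the other securing the corresponding energy comparison.

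The main obstacle throughout part \textbf{B} is the tension between the global, density-free minimization and the fact that the pointwise condition \eqref{SuffiCond} is only available at sufficiently high density. For Lennard-Jones the scaling reduction sidesteps this, relocating the difficulty to the competition inside $\Phi$ that $h(x_2)\le h(x_1)$ must resolve; for the attractive-repulsive Yukawa potential the difficulty is unavoidable, and the sharp explicit thresholds \eqref{CondMinGlob} are needed precisely to confine the global minimizer to the regime where triangular optimality has been established.
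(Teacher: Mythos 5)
Your part \textbf{A} is essentially the paper's own approach: compute $\mu_f$ for each family and verify \eqref{SuffiCond} for small $A$ so that Theorem \ref{THM1} applies. The paper does this quantitatively (explicit bounds $A_0$ via Cauchy's rule, Propositions \ref{SCGen}, \ref{HDInvPower} and \ref{Genexpo}), while your soft ``suppression'' argument only yields existence of some $A_0$, which is all that part A asserts; when fleshed out for $\varphi_{a,x}$ it reproduces exactly the paper's bound. (One caveat you share with the paper: if $\sum_i a_i=0$ while some partial sum is negative, the second term of $g_A$ vanishes identically for $A\le\pi/x_n$ and \eqref{SuffiCond} fails for every $A$, so neither your argument nor Proposition \ref{SCGen} covers that degenerate case.)

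Part \textbf{B} has a genuine gap, and it is the same gap in both halves: you never control the area of the \emph{global} minimizer. The paper's key step---absent from your proposal---is an a priori bound on $|L_{a,x}|$ coming from the shape of the potential: since $r\mapsto f(r^2)$ is increasing beyond the well, any lattice with $\|u\|$ or $\|v\|$ larger than the well location can be contracted so as to strictly decrease the energy, whence every global minimizer satisfies $|L_{a,x}|\le\|u\|\,\|v\|\le\alpha$, with $\alpha$ the well location (STEP 3 in the proof of B.1; Lemma \ref{UBArea} for B.2). The hypotheses \eqref{CondMinGlob} and $h(x_2)\le h(x_1)$ are then precisely the statements that this bound $\alpha$ lies \emph{below} the high-density thresholds of Proposition \ref{SCGen} and Proposition \ref{LJones} respectively, so the fixed-area result applies at the minimizer's own area and forces it to be triangular. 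Your plan for B.1 sees the difficulty (``no lower-density lattice undercuts $E[\Lambda_{A^*}]$'') but supplies no mechanism for it: optimality of $\Lambda_{A^*}$ within its area class plus optimality of $A^*$ along the triangular fiber does not exclude a non-triangular lattice of larger area with strictly lower energy, and the two inequalities of \eqref{CondMinGlob} are not, as you guess, one statement about $A^*$ and one energy comparison---they are the two sign conditions $\varphi_{a,x}'(\pi/x_2)\ge0$ and $\varphi_{a,x}'\bigl((\pi/x_2)(a_2/a_1-1)\bigr)\ge0$ that place the well below the threshold $\min\{\pi/x_2,(\pi/x_2)(a_2/a_1-1)\}$ (your claimed failure threshold $\pi/\sqrt{x_1x_2}$ for \eqref{SuffiCond} is also incorrect). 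In B.2 your scaling reduction to maximizing $\Phi(L_1)=\zeta_{L_1}(2x_1)^{x_2/(x_2-x_1)}\zeta_{L_1}(2x_2)^{-x_1/(x_2-x_1)}$ is correct and elegant, but it relocates the problem rather than solving it: Montgomery's theorem pushes the two zeta values in opposite directions inside $\Phi$, and your assertion that $h(x_2)\le h(x_1)$ is ``the balance condition read off from the functional equation'' is unsubstantiated---in the paper this inequality has nothing to do with duality; it is exactly the inequality $\left(a_2x_2/(a_1x_1)\right)^{1/(x_2-x_1)}\le\pi\left(a_2\Gamma(x_1)/(a_1\Gamma(x_2))\right)^{1/(x_2-x_1)}$ between the well-location bound of Lemma \ref{UBArea} and the threshold of Proposition \ref{LJones}. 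Without the contraction argument, neither half of your part B closes.
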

\vspace{7mm}

\noindent We proceed as follows, we start below with some preliminaries where we recall Montgomery result about optimality of $\Lambda_A$ for theta functions $\theta_L$ and we give the definition of admissible potential. Then we prove in Section 3 the optimality of $\Lambda_A$ for every $A$ when $f$ is completely monotonic and we give an example of strictly convex, decreasing and positive potential $V$ such that $\Lambda_A$ is not a minimizer of $E_f$ for some $A$. Theorem \ref{THM1} is proved in Section 4, with some general applications. Furthermore we discuss optimality and improvement of this method. Finally we prove our Theorem \ref{THM2} in next sections where we present the interest, in molecular simulation, of studied potentials and we prove additional results. Throughout the paper, we give numerical results and examples.

\section{Preliminaries}
\subsection{Bravais lattices, zeta functions and theta functions}
We briefly recall our notations in \cite{Betermin:2014fy}. Throughout this paper, $\|.\|$ will denote the Euclidean norm in $\R^2$. Let $L=\Z u\oplus\Z v$ be a \textbf{Bravais lattice} of $\R^2$, then by Engel's theorem (see \cite{Engel}), we can choose $u$ and $v$ such that $\|u\|\leq \|v\|$ and $\displaystyle (\widehat{u,v})\in\left[\frac{\pi}{3},\frac{\pi}{2}\right]$ in order to obtain the unicity of the lattice, up to rotations and translations and the fact that the lattice is parametrized by its both first lengths $\|u\|$ and $\|v\|$. We note $|L|=\|u\land v\|=\|u\|\|v\|\left|\sin(\widehat{u,v})\right|$ the area\footnote{We choose, as in \cite{Betermin:2014fy}, to write results in terms of area and not in terms of density (which is its inverse).} of $L$ which is in fact the area of its primitive cell. Let $\displaystyle\Lambda_A=\sqrt{\frac{2A}{\sqrt{3}}}\left[\Z (1,0) \oplus \Z (1/2,\sqrt{3}/2)\right]$ be the triangular lattice of area $A$, then $\|u\|$ is called the length of this lattice.\\ \\
For real $s>2$, the \textbf{Epstein zeta function} of a Bravais lattice $L$ is defined by
$$
\zeta_L(s)=\sum_{p\in L^*}\frac{1}{\|p\|^s}.
$$
where $L^*:=L\backslash\{0\}$. As proved in \cite[Proposition 10.5.5 and Proposition 10.5.7]{Cohen}, we can write $\zeta_L(s)$ in term of $L$-function or Hurwitz zeta-function. More precisely, for $L=\Z^2$ and $L=\Lambda_1$ the triangular lattice of area $1$, we have, for any $s>1$,
\begin{align}
\label{zetasquare}&\zeta_{\Z^2}(2s)=4L_{-4}(s)\zeta(s)=4^{-s+1}\zeta(s)\left[ \zeta(s,1/4)-\zeta(s,3/4) \right],\\
\label{zetatriang}&\zeta_{\Lambda_1}(2s)=6\left(\frac{\sqrt{3}}{2}  \right)^s \zeta(s)L_{-3}(s)=6\left(\frac{\sqrt{3}}{2}  \right)^s 3^{-s}\zeta(s)\left[\zeta(s,1/3)-\zeta(s,2/3)  \right],
\end{align}
where $\zeta$ is the classical Riemann zeta function $\displaystyle \zeta(s):=\sum_{i=1}^{+\infty}n^{-s}$, $L_D$ defined by
$$
L_D(s):=\sum_{n=1}^{+\infty}\left( \frac{D}{n} \right)n^{-s}
$$
is the Dirichlet $L$-function associated to quadratic field $\Q(i\sqrt{-D})$, with $\displaystyle \left( \frac{D}{n} \right)$ the Legendre symbol, and, for $x>0$,
$$
\zeta(s,x):=\sum_{n=0}^{+\infty} (n+x)^{-s}
$$
is the Hurwitz zeta function. Hence both these special values are easily computable. \\ \\
Now we recall fundamental Montgomery's Theorem about optimality of $\Lambda_A$ among Bravais lattices for theta functions :
\begin{thm}\label{Mgt} \textnormal{(Montgomery, \cite{Mont})} For any real number $\alpha>0$ and a Bravais lattice $L$, let
\begin{equation} \label{Thetadef}
\theta_L(\alpha):=\Theta_L(i\alpha)=\sum_{p\in L}e^{-2\pi\alpha\|p\|^2},
\end{equation}
where $\Theta_L$ is the Jacobi \textbf{theta function} of the lattice $L$ defined for $Im(z)>0$. Then, for any $\alpha>0$, $\Lambda_A$ is the unique minimizer of $L\to \theta_L(\alpha)$, up to rotation, among Bravais lattices of area $A$.
\end{thm}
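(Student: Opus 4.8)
The plan is to pass to the upper-half-plane parametrization of planar lattices and reduce the minimization to a one-parameter analysis on the boundary of a fundamental domain, in the spirit of Montgomery's original argument. After quotienting by rotations, every Bravais lattice of area $A$ is of the form $\sqrt A\,L_\tau$ with $L_\tau=(\mathrm{Im}\,\tau)^{-1/2}(\Z\oplus\tau\Z)\subset\mathbb{C}\cong\R^2$ of area $1$ and $\tau=u+iv\in\mathbb{H}$. Since scaling gives $\theta_{\sqrt A L_\tau}(\alpha)=\theta_{L_\tau}(A\alpha)$ and $A\alpha$ is merely a fixed positive parameter, it suffices to minimize, for each fixed $\beta>0$,
\[
\Theta(\tau):=\theta_{L_\tau}(\beta)=\sum_{(m,n)\in\Z^2}\exp\!\left(-\frac{2\pi\beta}{v}\big((m+nu)^2+n^2v^2\big)\right)
\]
over $\tau\in\mathbb{H}$, and to show the unique minimizer is the hexagonal point $\tau_\ast=e^{i\pi/3}=\tfrac12+i\tfrac{\sqrt3}2$, which corresponds to $\Lambda_A$.

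Next I would exploit modular invariance. For $\gamma\in SL_2(\Z)$ the lattice $L_{\gamma\tau}$ coincides with $L_\tau$ up to rotation (a change of basis), so $\Theta$ is $PSL_2(\Z)$-invariant and may be studied on the fundamental domain $\mathcal F=\{\tau:\ |\tau|\ge1,\ 0\le u\le\tfrac12\}$, using also that $\Theta$ is even in $u$ because complex conjugation is an isometry. Since $\Theta(\tau)\to+\infty$ as $v\to+\infty$ --- the lattice degenerates and accumulates short vectors in one direction --- the infimum is attained inside $\mathcal F$.

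The analytic heart is to establish two strict monotonicities. First, for each fixed $v$, $u\mapsto\Theta(u+iv)$ is strictly decreasing on $[0,\tfrac12]$, so the minimum over $\mathcal F$ lies on the wall $u=\tfrac12$ (equivalently $v\ge\tfrac{\sqrt3}2$); I would obtain this by Poisson summation in the variable $m$, rewriting $\Theta$ through one-dimensional Jacobi theta functions and extracting the sign of $\partial_u\Theta$. Second, along $u=\tfrac12$ the function $v\mapsto\Theta(\tfrac12+iv)$ is minimized exactly at $v=\tfrac{\sqrt3}2$, i.e. at $\tau_\ast$: the functional equation $\Theta(-1/\tau)=\Theta(\tau)$ together with the reflection symmetry singles out $\tau_\ast$ as the stationary point on the wall, and I would confirm it is a minimum by controlling the sign of the $v$-derivative on either side. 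Combined with $PSL_2(\Z)$-invariance, this shows $\Lambda_A$ is the unique minimizer, up to rotation.

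The main obstacle is precisely this last step: the strict monotonicity in $u$, and then in $v$ on the wall $u=\tfrac12$. Both reduce to delicate sign and log-convexity estimates for one-dimensional theta functions and their logarithmic derivatives, which form the technical core of the argument; by contrast the parametrization and the modular reduction of the first two paragraphs are comparatively routine.
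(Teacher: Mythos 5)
The paper itself offers no proof of this statement: Theorem \ref{Mgt} is quoted directly from Montgomery \cite{Mont} and used as a black box throughout. So the only meaningful comparison is with Montgomery's original argument, and your outline does follow its architecture: pass to the unit-area normalization $L_\tau$, $\tau=u+iv\in\mathbb{H}$, use $PSL_2(\Z)$-invariance and evenness in $u$ to restrict to a fundamental domain, prove that $u\mapsto\Theta(u+iv)$ decreases on $[0,\tfrac12]$, and then minimize along the wall $u=\tfrac12$. The problem is that what you present is a road map, not a proof. The two monotonicity statements that you yourself label ``the analytic heart'' are exactly Montgomery's lemmas, and they are where all of the mathematical content lies: establishing $\partial_u\Theta\le 0$ requires, after Poisson summation in $m$, proving positivity of delicate combinations of one-dimensional theta values, to which Montgomery devotes several pages of careful estimation; nothing in your sketch indicates how those sign conditions would actually be verified. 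A proposal that defers precisely the steps where the difficulty is concentrated has a genuine gap, by your own admission.

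Moreover, the one concrete mechanism you do propose for the wall analysis is incorrect. The inversion $S:\tau\mapsto-1/\tau$ does not map the line $u=\tfrac12$ to itself (it sends it onto the unit arc $|\tau|=1$, as does the order-3 element $\tau\mapsto 1-1/\tau$ fixing $e^{i\pi/3}$), so ``the functional equation $\Theta(-1/\tau)=\Theta(\tau)$ together with reflection'' cannot single out $\tau_\ast$ as a stationary point of the restriction of $\Theta$ to the wall. The intrinsic symmetry of the wall --- complex conjugation composed with an integral change of basis, which in the unit-area normalization reads $v\mapsto 1/(4v)$ on $u=\tfrac12$ --- has its fixed point at $v=\tfrac12$, i.e.\ at the \emph{square} lattice, not at $v=\tfrac{\sqrt3}2$. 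The hexagonal point is indeed a critical point of $\Theta$ on all of $\mathbb{H}$ because of its order-3 stabilizer in $PSL_2(\Z)$, but that symmetry does not preserve the wall, and minimality along the wall cannot be obtained from symmetry considerations at all: one must prove strict monotonicity of $v\mapsto\Theta(\tfrac12+iv)$ for $v\ge\tfrac{\sqrt3}2$, which Montgomery does via the $\alpha\leftrightarrow 1/\alpha$ duality (the identity $\theta_L(y^{-1})=y\,\theta_L(y)$ recalled in the paper) combined with explicit estimates. Until those inequalities are established, the theorem is not proved.
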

\begin{remark} This result implies that the triangular lattice is the unique minimizer, up to rotation, of $L\mapsto\zeta_L(s)$ among Bravais lattices with density fixed for any $s>2$ which is also proved by Rankin in \cite{Rankin}, Cassels in \cite{Cassels}, Ennola in \cite{Ennola} and Diananda in \cite{Diananda}. Montgomery deduced this fact by the famous ``Riemann's trick" (see \cite{Terras} or \cite{Betermin:2014fy} for a proof): for any $L$ such that $D=1$,
\begin{equation} \label{Riemann}
\text{for } \textnormal{Re}(s)>1,\quad  \zeta_L(2s)\Gamma(s)(2\pi)^{-s}=\frac{1}{s-1}-\frac{1}{s}+\int_1^\infty(\theta_L(\alpha)-1)(\alpha^s+\alpha^{1-s})\frac{d\alpha}{\alpha}.
\end{equation}
In Section 4.2 we will prove general Riemann's trick \ref{Intrepresent}, which we call integral representation of energy, for admissible potentials in order to use Montgomery method in a general case.
\end{remark} 

\begin{remark}
We can find in \cite[Appendice A]{NonnenVoros} other proof of minimality of some theta functions based on result of Osgood, Phillips and Sarnak \cite[Corollary 1(b) and Section 4]{OPS} about Laplacian's determinant of flat torus, which has some deep connection with other energies (for instance, see \cite[Theorem 2.3]{Betermin:2014rr}).
\end{remark}

\subsection{Admissible potential, inverse Laplace transform and lattice energies}
\begin{defi}\label{Admis}
We say that $f:\{Re(z)>0\}\to \R$ is \textbf{admissible} if :
\begin{enumerate}
\item there exists $\eta>1$ such that $|f(z)|=O(|z|^{-\eta})$ as $|z|\to +\infty$;
\item $f$ is analytic on $\{z\in \C; Re(z)>0 \}$.
\end{enumerate}
If $f$ is admissible, we define, for any Bravais lattice $L$ of $\R^2$,
$$
E_f[L]:=\sum_{p\in L^*}f(\|p\|^2)
$$
which is \textbf{the quadratic energy per point of lattice $L$ created by potential $f$}.

\end{defi}
\begin{remark}
As a consequence of \cite[Theorem 5.17, Theorem 5.18]{Poularikas}, we get, by direct application of inversion integral formula :
\begin{itemize}
\item There exists an unique \textbf{inverse Laplace transform} $\mu_f$\footnote{We will sometimes write $\mathcal{L}$ and $\mathcal{L}^{-1}$ for Laplace and inverse Laplace operators.}, which is continuous on $(0,+\infty)$;
\item We have $\mu_f(0)=0$.
\end{itemize}
\end{remark}

\begin{remark}
This definition excludes two-dimensional Coulomb potential $r\mapsto -\log r$ because all its quadratic energies are infinite. However we can define a renormalized energy as in \cite{Sandier_Serfaty} or in \cite{SaffLongRange}.
\end{remark}

\subsection{Completely monotonic functions}
The class of completely monotonic functions is central in our work. Indeed, as we will see in Sec. 3, these functions have good properties for our problem of minimization among lattices with fixed area thanks to the Montgomery theorem \ref{Mgt}.

\begin{defi}
A $C^\infty$ function $f:(0,+\infty)\to \R_+$ is said to be \textbf{completely monotonic} if, for any $k\in\N$ and any $r>0$,
$$
(-1)^kf^{(k)}(r)\geq 0.
$$
\end{defi}
\begin{examples}We can find a lot of examples of completely monotonic functions in \cite{ComplMonotonic}. Here we give only some interesting classical admissible potentials $f$ :
\begin{itemize}
\item $V_x(r)=r^{-x},x>1$;
\item $\displaystyle V_{a,x}(r)=\sum_{i=1}^n a_i r^{-x_i}$ where $a_i>0$ and $x_i>1$ for all $i$;
\item $f_\alpha(r)=e^{-a r^{\alpha}},a>0,\alpha\in (0,1]$, see \cite[Corollary 1]{ComplMonotonic};
\item Modified Bessel function, i.e. one of the two solutions of $r^2 y''+r y'-(r^2+\nu^2)y=0 
$ which goes to $0$ at infinity, is $K_\nu(r)=\int_0^{+\infty}e^{-r\cosh t}\cosh(\nu t)dt$, $\nu\in \R$. Moreover, $r\mapsto K_\nu(\sqrt{r})$ is also completely monotonic (despite we thought in \cite{Betermin:2014fy}).
\item $\displaystyle V_{SC}(r)=\frac{e^{-a\sqrt{r}}}{\sqrt{r}}$, $a>0$;
\item $\displaystyle \varphi_a(r)=\frac{e^{-ar}}{r}$, $a>0$.
\end{itemize}
\end{examples}

\begin{remark}
We remark that if $r\mapsto f(r)$ is completely monotonic, it is not generally the case for $r\mapsto f(r^2)$. For instance $r\mapsto e^{-r}$ is completely monotonic, but $r\mapsto e^{-r^2}$ does not check this property.
\end{remark}
\noindent Now we give the famous connection between completely monotonic function and Laplace transform due to Bernstein in \cite{Bernstein}.

\begin{thm}\label{Bernstein} \textbf{(Hausdorff-Bernstein-Widder Theorem)} A function $f:\R_+^*\to\R$ is completely monotonic on $\R_+$ if and only if it is the Laplace transform of a finite non-negative Borel measure $\mu$ on $\R_+$, i.e.
$$
f(r)=\mathcal{L}[\mu](r)=\int_0^{+\infty}e^{-rt}d\mu(t).
$$
\end{thm}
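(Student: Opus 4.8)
The plan is to prove the two implications separately; the forward one is elementary and the converse is the substantial part.

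For the sufficiency, suppose $f(r)=\int_0^{+\infty}e^{-rt}\,d\mu(t)$ with $\mu$ a finite non-negative Borel measure. I would fix $r_0>0$ and differentiate under the integral sign on the half-line $\{r\geq r_0\}$: since $t\mapsto t^k e^{-r_0 t}$ is bounded on $[0,+\infty)$ and $\mu$ is finite, dominated convergence justifies differentiating $k$ times, giving $f^{(k)}(r)=\int_0^{+\infty}(-t)^k e^{-rt}\,d\mu(t)$. Hence $(-1)^k f^{(k)}(r)=\int_0^{+\infty}t^k e^{-rt}\,d\mu(t)\geq 0$ for all $k\in\N$ and all $r>0$, since the integrand is non-negative; this is exactly complete monotonicity.

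For the necessity I would follow the classical approximation argument of Widder, whose mechanism is that the sign conditions $(-1)^k f^{(k)}\geq 0$ make the Post--Widder inversion kernel non-negative. For each $k\in\N$ I would introduce the density
$$
g_k(t)=\frac{(-1)^k}{k!}\left(\frac{k}{t}\right)^{k+1} f^{(k)}\!\left(\frac{k}{t}\right)\geq 0,\qquad t>0,
$$
and the measure $d\mu_k(t)=g_k(t)\,dt$. A change of variables shows the total masses $\mu_k((0,+\infty))$ are uniformly bounded, controlled by $f(0^+):=\lim_{r\to 0^+}f(r)$. Helly's selection theorem then yields a subsequence $(\mu_{k_j})$ converging vaguely to a non-negative measure $\mu$, and the final step is to pass to the limit in $f_{k_j}(r)=\int_0^{+\infty}e^{-rt}\,d\mu_{k_j}(t)$, where $f_k$ denotes the Post--Widder approximant; since $f_k(r)\to f(r)$ pointwise for each $r>0$ and $t\mapsto e^{-rt}$ lies in $C_0([0,+\infty))$, the vague limit gives $f(r)=\int_0^{+\infty}e^{-rt}\,d\mu(t)$.

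The main obstacle lies entirely in this converse: one must establish the genuine (not merely formal) pointwise convergence $f_k\to f$ of the Post--Widder approximants, and one must rule out loss of mass at infinity when passing to the vague limit, which requires a tightness estimate on the $\mu_k$ rather than only the uniform mass bound. A final remark concerns the finiteness hypothesis: the total mass of $\mu$ equals $f(0^+)$, so $\mu$ is finite precisely when $\lim_{r\to 0^+}f(r)<+\infty$; for completely monotonic functions with $f(0^+)=+\infty$, such as $r\mapsto r^{-x}$, the same scheme produces a non-negative but infinite measure for which the representation still holds with the integral converging on $(0,+\infty)$. Alternatively, one could reach the same representation through the Hausdorff moment problem, discretizing $f$ into a completely monotonic sequence and assembling the resulting moment measures, but the continuous approximation above is the most direct route.
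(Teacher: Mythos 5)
The paper does not actually prove this statement: it is quoted as a classical result with a citation to Bernstein, and the surrounding remarks only record its consequences for admissible potentials. So your attempt can only be measured against the standard literature proofs. Your forward direction is complete and correct: dominated convergence with dominating functions $t^k e^{-r_0t}$ (bounded on $[0,+\infty)$, $\mu$ finite) legitimizes differentiation under the integral on $\{r>r_0\}$, and the sign of $\int_0^{+\infty}t^ke^{-rt}\,d\mu(t)$ is manifest.

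The converse is where the genuine gap lies, and you name it yourself: the pointwise convergence $f_k(r)\to f(r)$ of the approximants is not a technical detail to be deferred, it \emph{is} the theorem, and your setup gives no handle on it, because $f_k(r)=\int_0^{+\infty}e^{-rt}g_k(t)\,dt$ is defined abstractly with no identity relating it to $f$. The missing idea in the classical (Widder) argument is to manufacture such an identity from Taylor's formula with integral remainder: for $0<r<b$,
$$
f(r)=\sum_{j=0}^{n-1}\frac{(-1)^jf^{(j)}(b)}{j!}(b-r)^j+\int_r^b\frac{(s-r)^{n-1}}{(n-1)!}\,(-1)^nf^{(n)}(s)\,ds,
$$
where every summand and the remainder are non-negative by complete monotonicity; the remainder is nondecreasing in $b$ and bounded by $f(r)$, so one may let $b\to+\infty$, and the substitution $s=n/t$ turns the limiting remainder into
$$
\int_0^{+\infty}\Bigl(1-\frac{rt}{n}\Bigr)_+^{n-1}g_n(t)\,dt
$$
with exactly your density $g_n$. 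This is an exact identity, not a limit to be established; all that remains is the elementary kernel estimate $\bigl(1-\tfrac{rt}{n}\bigr)_+^{n-1}\nearrow$-type convergence to $e^{-rt}$, plus Helly as you describe. Your second worry, by contrast, is not an obstruction at all: since $t\mapsto e^{-rt}$ lies in $C_0([0,+\infty))$, vague convergence together with the uniform mass bound already gives $\int e^{-rt}\,d\mu_{k_j}\to\int e^{-rt}\,d\mu$ (cut off on a large compact set and estimate the tail by the mass bound); tightness is needed only if one additionally wants $\mu(\R_+)=f(0^+)$. Finally, your closing remark about finiteness is correct and pertinent: the representing measure is finite precisely when $f(0^+)<+\infty$, and the paper's own applications (e.g.\ $V_x(r)=r^{-x}$, whose inverse Laplace transform $t^{x-1}/\Gamma(x)$ has infinite mass) rely on the infinite-measure version, so the statement as printed is narrower than the use made of it.
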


\begin{remark} If $f$ is admissible and completely monotonic, then 
$$
d\mu(t)=\mu_f(t)dt \quad \text{ and }\quad \mu_f(t)\geq 0, \text{ a.e. on }(0,+\infty).
$$
\end{remark}
\begin{remark}\label{remarkBochner}
Actually Schoenberg proved in \cite{Schoenberg} that $r\mapsto f(r)$ is completely monotonic if and only if $r\mapsto f(r^2)$ is a positive definite function in $\R$, i.e. for any $N\in \N\backslash\{0,1\}$, any $x_1,...,x_N \in \R$ and any $c_1,...,c_N \in \R$, we have
$$
\sum_{i,j=1}^N c_i c_j f(|x_i-x_j|^2)\geq 0
$$
or, by Bochner Theorem (see \cite{Bochner}), if and only if $r\mapsto f(r^2)$ is the Fourier transform of a positive finite Borel measure on $\R$. \\ \\
Positivity of Fourier transform of a radial potential is a key point in crystallization problems. Indeed Nijboer and Ventevogel proved in \cite{VN3} that it is a necessary condition for a periodic ground state (Bravais lattices) and S\"{u}to studied in his work \cite{Suto1} potentials $f$ such that $\hat{f}(k)\geq 0$ and $\hat{f}(k)=0$ for any $\|k\|>R_0$ and proved some interesting crystallization results at high densities. Unfortunately, as Likos explained in \cite{Likos}, this kind of potential, oscillating and with inverse power law decay, seems to be difficult to achieve physically.\\ \\
Actually it is more common to use Fourier transform in problems of minimization of lattice energy because we have the Poisson summation formula and the natural periodicity of sinus and cosinus. Furthermore, applications of classical formula allows to obtain some interesting results, as in \cite[Proposition 9.3]{CohnKumar}. However we will show in Section 4 that inverse Laplace transform also seems well adapted to our problem and gives simple calculations. Indeed, Fourier methods as in \cite{CohnKumar,Suto1,Suto2} is good for more general minimization problems and our method is a better choice for minimization among Bravais lattices because of integral representation \eqref{Intrepresent}.
\end{remark}

\subsection{Cauchy's bound for positive root of a polynomial}
In this part, we recall Cauchy's rule explained in \cite[Note III, Scolie 3, page 388]{Cauchy} for upper bound of polynomial's positive roots (see also \cite{Vigklas} for simple proof).
\begin{thm} \label{Cauchy} \textbf{(Cauchy's rule)}
Let $P$ a polynomial of degree $n>0$ defined by
$$
P(X)=\sum_{i=0}^n\alpha_i X^i, \quad \alpha_n>0
$$
where $\alpha_i<0$ for at least one $i$, $0\leq i\leq n-1$. If $\lambda$ is the number of negative coefficients, then an upper bound on the values of the positive roots of $P$ is given by
$$
M_P=\max_{i;\alpha_i<0} \left\{\left(\frac{-\lambda \alpha_i}{\alpha_n} \right)^{\frac{1}{n-i}}\right\}
$$
\end{thm}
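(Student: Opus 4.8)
The plan is to show directly that every positive real $x > M_P$ satisfies $P(x) > 0$, so that no such $x$ can be a root; this immediately yields that every positive root of $P$ is at most $M_P$. The whole argument rests on a single bookkeeping device: splitting the positive leading term $\alpha_n x^n$ into $\lambda$ equal pieces $\frac{\alpha_n}{\lambda} x^n$, one piece to be paired against each of the $\lambda$ negative coefficients.

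First I would fix $x > M_P$ and let $i$ be any index with $\alpha_i < 0$. By definition of $M_P$ as a maximum, $x > \left( \frac{-\lambda \alpha_i}{\alpha_n} \right)^{1/(n-i)}$, and since $n - i \geq 1$ and both sides are positive, raising to the power $n-i$ preserves the inequality, giving $x^{n-i} > \frac{-\lambda \alpha_i}{\alpha_n}$. Multiplying through by $\frac{\alpha_n}{\lambda} x^i > 0$ and rearranging yields the pointwise estimate $\frac{\alpha_n}{\lambda} x^n + \alpha_i x^i > 0$, valid for each negative index $i$.

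Next I would sum these $\lambda$ inequalities over all indices $i$ with $\alpha_i < 0$. The $\lambda$ copies of $\frac{\alpha_n}{\lambda} x^n$ reassemble into $\alpha_n x^n$, so the sum reads $\alpha_n x^n + \sum_{i : \alpha_i < 0} \alpha_i x^i > 0$. Finally, writing $P(x)$ as this quantity plus the remaining terms $\sum_{i < n,\ \alpha_i \geq 0} \alpha_i x^i$, and observing that the latter is a sum of nonnegative numbers (each such $\alpha_i \geq 0$ and $x^i > 0$), I conclude $P(x) > 0$. Hence $x$ is not a root, which proves that $M_P$ bounds the positive roots from above.

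There is no genuine obstacle in this argument; the one point requiring care is the combinatorial split of $\alpha_n x^n$ into exactly $\lambda$ pieces, since it is precisely this device that makes the bound depend on the number $\lambda$ of negative coefficients rather than on a cruder universal constant. I would also note that the hypotheses $\alpha_n > 0$ and the existence of at least one negative coefficient are exactly what guarantee that $M_P$ is well defined (the maximum is taken over a nonempty set) and that the leading-term split is meaningful.
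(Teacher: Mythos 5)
Your proof is correct, and it is precisely the standard argument (the one in the reference the paper cites for a ``simple proof''): split $\alpha_n x^n$ into $\lambda$ equal pieces, pair one piece against each negative term via the defining inequality of $M_P$, and absorb the remaining nonnegative terms, concluding $P(x)>0$ for every $x>M_P$. The paper itself gives no proof of this theorem (it only cites Cauchy and Vigklas), so your argument supplies exactly the intended justification; note also that it carries over verbatim to the paper's generalization \eqref{UBoundRoot} with real exponents, since it only uses monotonicity of $t\mapsto t^{\nu}$ on $(0,+\infty)$ for $\nu>0$.
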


\begin{remark}
This Theorem stays true for upper bound on the values of the positive zero of any function $p$ defined by
$$
p(y)=\sum_{i=1}^n\alpha_i y^{\nu_i},\quad \alpha_n>0
$$
where $0<\nu_1<...<\nu_n$ are real numbers and we obtain
\begin{equation}\label{UBoundRoot}
M_p=\max_{i;\alpha_i<0} \left\{\left(\frac{-\lambda \alpha_i}{\alpha_n} \right)^{\frac{1}{\nu_n-\nu_i}}\right\}.
\end{equation}
This result will be useful for technical reasons in the following sections, because we will want positive zeros less than $1$ to apply our sufficient condition inTheorem \ref{THM1} and to prove Theorem \ref{THM2}.A.
\end{remark}

\section{Completely monotonic functions and optimality of $\Lambda_A$}
In this part we begin to state a simple fact connecting positivity of inverse Laplace transform and minimality among lattices at fixed area. Furthermore we will give an example of strictly convex, decreasing, positive potential for which there exists areas so that the triangular lattice is not a minimizer among Bravais lattices with fixed area.

\subsection{Optimality at any density}
The following proposition, claimed by Cohn and Kumar in \cite[page 139]{CohnKumar}, is a natural consequence of Montgomery and Hausdorff-Bernstein-Widder Theorems.
\begin{prop} \label{CrystCM} Let $f$ be an admissible potential. If $f$ is completely monotonic then, for any $A>0$, $\Lambda_A$ is the unique minimizer, up to rotation, of 
$$
L \mapsto E_f[L]=\sum_{p\in L^*}f(\|p\|^2)
$$
among lattices of fixed area $A$.
\end{prop}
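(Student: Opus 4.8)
The plan is to reduce the whole statement to Montgomery's Theorem \ref{Mgt} by invoking the Bernstein representation of completely monotonic functions. First I would use Theorem \ref{Bernstein}: since $f$ is completely monotonic, there is a finite non-negative Borel measure $\mu$ on $\R_+$ with $f(r)=\int_0^{+\infty}e^{-rt}\,d\mu(t)$. Evaluating at $r=\|p\|^2$ and summing over $p\in L^*$ gives
$$
E_f[L]=\sum_{p\in L^*}\int_0^{+\infty}e^{-t\|p\|^2}\,d\mu(t).
$$

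Next I would interchange the sum and the integral. Because every term $e^{-t\|p\|^2}$ is non-negative and $\mu\geq 0$, Tonelli's theorem applies with no absolute-convergence subtlety, and finiteness of the result is guaranteed by admissibility of $f$: the decay $|f(r)|=O(r^{-\eta})$ with $\eta>1$ makes the lattice sum $\sum_{p\in L^*}f(\|p\|^2)$ convergent in dimension two. Recognizing $\sum_{p\in L^*}e^{-t\|p\|^2}=\theta_L(t/2\pi)-1$ from the definition \eqref{Thetadef}, I obtain the integral representation
$$
E_f[L]=\int_0^{+\infty}\left[\theta_L\!\left(\frac{t}{2\pi}\right)-1\right]d\mu(t).
$$

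The conclusion then follows by monotonicity. For each fixed $t>0$ the parameter $\alpha=t/2\pi$ is positive, so Montgomery's Theorem \ref{Mgt} gives $\theta_{\Lambda_A}(t/2\pi)\leq\theta_L(t/2\pi)$ for every Bravais lattice $L$ of area $A$, with equality if and only if $L=\Lambda_A$ up to rotation. Since $\mu$ is non-negative, integrating this pointwise inequality against $d\mu$ immediately yields $E_f[\Lambda_A]\leq E_f[L]$.

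The delicate point — the one I expect to be the main obstacle — is uniqueness, which requires excluding the degenerate case where $\mu$ is concentrated at $t=0$. I would argue from the difference $E_f[L]-E_f[\Lambda_A]=\int_0^{+\infty}[\theta_L(t/2\pi)-\theta_{\Lambda_A}(t/2\pi)]\,d\mu(t)$, whose integrand is non-negative and \emph{strictly} positive on $(0,+\infty)$ whenever $L\neq\Lambda_A$. Admissibility forces $f$ to decay at infinity and hence to be non-constant, so $\mu$ cannot be a multiple of the Dirac mass at $0$; equivalently, the fact that $\mu_f(0)=0$ ensures $\mu$ carries positive mass on $(0,+\infty)$. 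Therefore the integral is strictly positive for $L\neq\Lambda_A$, which establishes that $\Lambda_A$ is the unique minimizer up to rotation.
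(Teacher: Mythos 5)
Your proof is correct and follows essentially the same route as the paper: represent $f$ via the Hausdorff--Bernstein--Widder theorem as the Laplace transform of a non-negative measure, interchange sum and integral, and apply Montgomery's Theorem~\ref{Mgt} pointwise in $t$ against the non-negative measure. The only difference is cosmetic -- the paper works with the continuous density $\mu_f$ given by admissibility (so $d\mu(t)=\mu_f(t)\,dt$ with $\mu_f\geq 0$ a.e.) rather than an abstract Borel measure, and your explicit discussion of why $\mu$ cannot concentrate at $t=0$ is actually a more careful treatment of the uniqueness step than the paper provides.
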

\begin{proof}
As $f$ is admissible, we can write,
$$
f(r)=\int_0^{+\infty} e^{-rt}\mu_f(t)dt
$$
and it follows that
\begin{align*}
E_f[L]&=\sum_{p\in L^*}f(\|p\|^2)=\sum_{p\in L^*}\int_0^{+\infty}e^{-t\|p\|^2}\mu_f(t)dt=\int_0^{+\infty}\sum_{p\in L^*}e^{-t\|p\|^2}\mu_f(t)dt\\
&=\int_0^{+\infty}\left[\theta_L\left(\frac{t}{2\pi}\right)-1\right]\mu_f(t)dt
\end{align*}
and
$$
E_f[L]-E_f[\Lambda_A]=\int_0^{+\infty}\left[\theta_L\left(\frac{t}{2\pi}\right)-\theta_{\Lambda_A}\left(\frac{t}{2\pi}\right)\right]\mu_f(t)dt.
$$
If $f$ is completely monotonic, by Theorem \ref{Bernstein}, $\mu_f(r)\geq 0$ for almost every $r\in (0,+\infty)$. Moreover, by Montgomery Theorem \ref{Mgt}, for any $t> 0$ and any Bravais lattice $L$ of area $A$, 
$$
\theta_L\left(\frac{t}{2\pi}\right)-\theta_{\Lambda_A}\left(\frac{t}{2\pi}\right)\geq 0,
$$
hence $E_f[L]\geq E_f[\Lambda_A]$ for any $L$ such that $|L|=A$ and $\Lambda_A$ is the unique minimizer of the energy among Bravais lattices of fixed area $A$.
\end{proof}

\begin{remark}
We can imagine that the reciprocal is true, i.e. if $f$ is not completely monotonic, then there exists $A_0$ such that $\Lambda_{A_0}$ is not a minimizer among Bravais lattices of area $A_0$ fixed. In next subsection will give an explicit example correlated with Marcotte, Stillinger and Torquato results in \cite{Torquato1} about the existence of unusual ground states with convex decreasing positive potential.
\end{remark}
\begin{examples} A direct consequence of this theorem is the minimality of triangular lattice among lattices for any fixed area for the following energies :
\begin{itemize}
\item $E_{V_x}[L]=\zeta_L(2x)$, $x>1$ is the first natural example given by Montgomery in \cite{Mont},
\item $\displaystyle E_{V_{a,x}}[L]=\sum_{i=1}^n a_i\zeta_L(2x_i)$ where $a_i>0$ and $x_i>1$ for all $i$,  
\item $\displaystyle E_{f_\alpha}[L]=\sum_{p\in L^*}e^{-a\|p\|^{2\alpha}}, \alpha\in (0,1]$, in particular $\displaystyle E_{f_{1/2}}[L]=\sum_{p\in L^*}e^{-a\|p\|}$,
\item $\displaystyle E_{K_\nu(\sqrt{.})}[L]=\sum_{p\in L^*} K_\nu(\|p\|)$ , $\nu \in \R$ which generalizes our study of lattice energy with potential $K_0$ in \cite{Betermin:2014fy} in Thomas-Fermi model case;
\item $\displaystyle E_{V_{SC}}[L]=\sum_{p\in L^*} \frac{e^{-a\|p\|}}{\|p\|}$, $a>0$, which corresponds to lattice energy for screened Coulomb potential interaction and can explain formation of triangular Wigner crystal at low density \cite{GrimesAdams};
\item $\displaystyle E_{\varphi_a}[L]=\sum_{p\in L^*} \frac{e^{-a\|p\|^2}}{\|p\|^2}$, $a>0$.
\end{itemize}
\end{examples}

\subsection{Repulsive potential and triangular lattice}
\noindent In this section we give an example of stricly convex decreasing positive radial potential $V$ so that, for some densities, a minimizer of $E_V$ among Bravais lattices of density fixed cannot be triangular. As Ventevogel and Nijboer proved in \cite{VN1}, a convex decreasing positive potential allows to obtain, in one dimension and for any fixed density, a dilated of lattice $\Z$ as unique minimizer among all configurations. Thus the two-dimensional case is deeply different.\\ \\
Let 
\begin{equation}\label{CE}
V(r)=\frac{14}{r^2}-\frac{40}{r^3}+\frac{35}{r^4}
\end{equation}
be the potential and 
$$
E_V[L]=14\zeta_L(4)-40\zeta_L(6)+35\zeta_L(8)
$$
the energy per point of a Bravais lattice $L$. 
\begin{center}
\includegraphics[width=10cm,height=80mm]{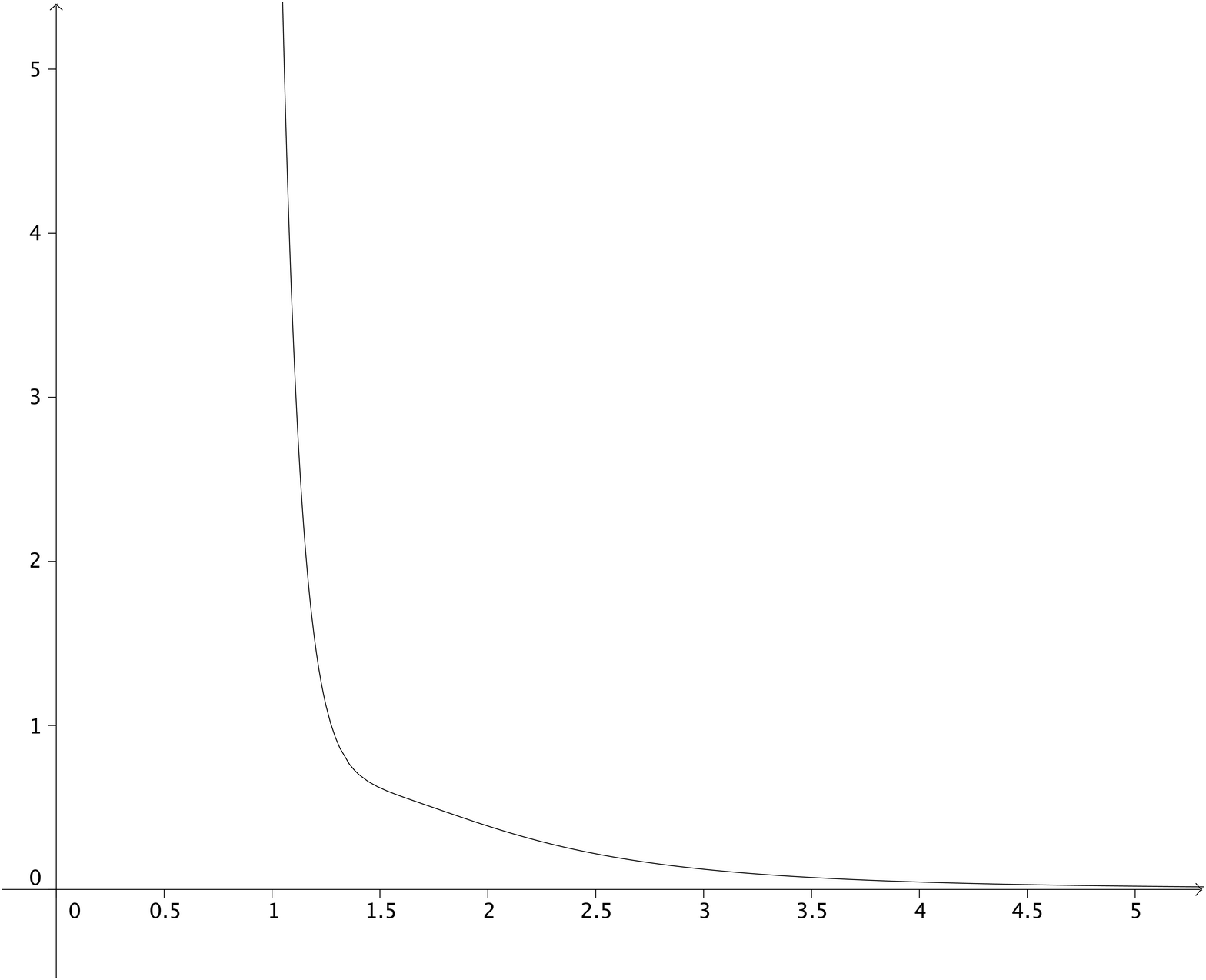}\\
\textbf{Fig. 2 :} Graph of $r\mapsto V(r^2)$
\end{center}

\begin{prop} \label{nocryst} \textbf{(Strictly convex potential and non optimality of triangular lattice)} 
Let $V$ be given by $\eqref{CE}$, then
\begin{itemize}
\item $V$ is strictly positive, strictly decreasing and strictly convex on $(0,+\infty)$;
\item There exists $A_1,A_2$ such that \textbf{$\Lambda_A$ is not a minimizer} of $E_V$ among all Bravais lattices of area $A\in (A_1,A_2)$.
\end{itemize}
\end{prop}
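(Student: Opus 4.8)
The statement splits into two independent parts, and I would treat them separately.

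For the analytic properties, the plan is to reduce each claim to the sign of a single quadratic. Writing $V(r)=14r^{-2}-40r^{-3}+35r^{-4}$ and multiplying by the appropriate power of $r$, positivity of $V$, of $-V'$, and of $V''$ become positivity of the three explicit quadratics $r^4V(r)=14r^2-40r+35$, $-r^5V'(r)=28r^2-120r+140$, and $r^6V''(r)=84r^2-480r+700$. Each has positive leading coefficient, so it suffices to check that the three discriminants are negative, namely $40^2-4\cdot14\cdot35=-360$, $120^2-4\cdot28\cdot140=-1280$, and $480^2-4\cdot84\cdot700=-4800$. This gives strict positivity of all three quadratics on $\R$, hence $V>0$, $V'<0$, $V''>0$ on $(0,+\infty)$, i.e. $V$ is strictly positive, strictly decreasing and strictly convex.

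For the non-optimality, the idea is to exhibit an explicit competitor of the same area with strictly smaller energy, namely the square lattice, which avoids any second-variation computation at the triangular point. Since $E_V[L]=14\zeta_L(4)-40\zeta_L(6)+35\zeta_L(8)$ and $\zeta_{\lambda L}(2s)=\lambda^{-2s}\zeta_L(2s)$, the area-$A$ values are obtained from the area-$1$ ones by $\zeta_{\Lambda_A}(2s)=A^{-s}\zeta_{\Lambda_1}(2s)$ and $\zeta_{\sqrt A\,\Z^2}(2s)=A^{-s}\zeta_{\Z^2}(2s)$. Setting $d_{2s}:=\zeta_{\Lambda_1}(2s)-\zeta_{\Z^2}(2s)$, which is strictly negative for each $s\in\{2,3,4\}$ by Montgomery's Theorem \ref{Mgt} (the triangular lattice strictly minimizes each $\zeta_L(2s)$ at fixed area), I get
\begin{equation*}
A^4\bigl(E_V[\Lambda_A]-E_V[\sqrt A\,\Z^2]\bigr)=14\,d_4\,A^2-40\,d_6\,A+35\,d_8=:P(A).
\end{equation*}
Because $d_4<0$ and $35d_8<0$ while $-40d_6>0$, the parabola $P$ opens downward, is negative at $A=0$ and as $A\to+\infty$, and has its vertex at a positive abscissa; hence $P>0$ on a nonempty interval $(A_1,A_2)\subset(0,+\infty)$ (whose endpoints are the two roots of $P$) exactly when its discriminant $\Delta=1600\,d_6^2-1960\,d_4d_8$ is positive. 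On that interval $E_V[\Lambda_A]>E_V[\sqrt A\,\Z^2]$, so $\Lambda_A$ is not a minimizer among Bravais lattices of area $A$.

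It remains to verify $\Delta>0$, and this is the delicate point. Using the closed forms \eqref{zetasquare}--\eqref{zetatriang} I would evaluate $\zeta_{\Z^2}(2s)=4\zeta(s)L_{-4}(s)$ and $\zeta_{\Lambda_1}(2s)=6(\sqrt3/2)^s\zeta(s)L_{-3}(s)$ for $s=2,3,4$ in terms of $\zeta(s)$ and the $L$-values $L_{-4}(s)$, $L_{-3}(s)$, form $d_4,d_6,d_8$, and check $1600\,d_6^2>1960\,d_4d_8$. The hard part will be controlling precision: the two sides of this inequality are of comparable size, so I would keep enough digits — and, for a fully rigorous statement, error bounds on the truncated alternating $L$-series — to make the strict inequality certain. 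Once $\Delta>0$ is secured, $A_1<A_2$ are the two roots of $P$, which completes the proof.
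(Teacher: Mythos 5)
Your proposal follows essentially the same route as the paper: the same three quadratics (up to constant factors) with negative discriminants for the analytic part, and the same square-lattice competitor for non-optimality, reducing the claim to positivity of the discriminant $1600\,d_6^2-1960\,d_4 d_8$ of a quadratic in $A$ (your $P(A)$ is just $-P_{\Z^2}(A)$ in the paper's notation). The numerical check you defer is carried out in the paper, which finds $\Delta(\Z^2)\approx 24.23>0$ with roots $A_1\approx 2.3152307$ and $A_2\approx 3.759353$, so your pending verification indeed succeeds.
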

\begin{proof}
We have
$$
V(r)=\frac{14r^2-40r+35}{r^4}
$$
and the discriminant of polynomial $14X^2-40X+35$ is $\Delta_1=-360<0$, hence $V(r)>0$ on $(0,+\infty)$.\\
We compute
$$
V'(r)=\frac{-4(7r^2-30r+35)}{r^5}
$$
and the discriminant of $7X^2-30X+35$ is $\Delta_2=-80<0$, therefore $V'(r)<0$, i.e. $V$ is strictly decreasing on $(0,+\infty)$.\\
Moreover, we have
$$
V''(r)=\frac{4(21r^2-120r+175)}{r^6}
$$
and the discriminant of $21X^2-120X+175$
 is $\Delta_3=-300<0$, then $V''(r)>0$ on $(0,+\infty)$, i.e. $V$ is strictly convex on $(0,+\infty)$.\\
For the second point, we have the following equivalences
\begin{align*}
& E_V[L]\geq E_V[\Lambda_A] \text{ for any } |L|=A\\
&\iff 14\zeta_L(4)-40\zeta_L(6)+35\zeta_L(8) \geq 14\zeta_{\Lambda_A}(4)-40\zeta_{\Lambda_A}(6)+35\zeta_{\Lambda_A}(8)\geq 0  \text{ for any } |L|=A\\
& \iff \frac{14}{A^2}\left(\zeta_L(4)-\zeta_{\Lambda_1}(4)   \right)+\frac{40}{A^3}\left(\zeta_L(6)-\zeta_{\Lambda_1}(6)   \right)+\frac{35}{A^4}\left(\zeta_L(8)-\zeta_{\Lambda_1}(8)   \right)\geq 0 \text{ for any } |L|=1\\
& \iff 14\left(\zeta_L(4)-\zeta_{\Lambda_1}(4)   \right)A^2-40\left(\zeta_L(6)-\zeta_{\Lambda_1}(6)   \right)A+35\left(\zeta_L(8)-\zeta_{\Lambda_1}(8)   \right)\geq 0 \text{ for any } |L|=1\\
&\iff P_{L}(A)\geq 0 \text{ for any } |L|=1
\end{align*}
where the discriminant of $P_L(A)=14\left(\zeta_L(4)-\zeta_{\Lambda_1}(4)   \right)A^2-40\left(\zeta_L(6)-\zeta_{\Lambda_1}(6)   \right)A+35\left(\zeta_L(8)-\zeta_{\Lambda_1}(8)   \right)$ is
$$
\Delta(L)=1600\left(\zeta_L(6)-\zeta_{\Lambda_1}(6)   \right)^2-1960\left(\zeta_L(4)-\zeta_{\Lambda_1}(4)   \right)\left(\zeta_L(8)-\zeta_{\Lambda_1}(8)   \right).
$$
For $L=\Z^2$ the square lattice of area $1$, we obtain $\Delta(\Z^2)\approx 24.231435>0$ then there exist two positive numbers $A_1$ and $A_2$ such that $P_{\Z^2}(A)<0$ for any $A_1<A<A_2$. Hence, $\Lambda_A$ is not a minimizer of $E_V$ among Bravais lattices with fixed area $A$ if $A_1<A<A_2$. More precisely we get
$$
A_1\approx 2.3152307 \text{ and } A_2\approx 3.759353.
$$
\end{proof}
 \begin{remark}
 It follows, from the previous proof, that function $r\mapsto V(r^2)$ is also strictly positive, strictly decreasing and strictly convex on $(0,+\infty)$.
 \end{remark}
\begin{remark}
Actually, the previous proof implies that, for any $A\in (A_1,A_2)$,
$$
E_V[\sqrt{A}\Z^2]<E_V[\Lambda_A].
$$
Moreover, this interval seems numerically to be optimal, i.e. for any $A\not\in [A_1,A_2]$, $\Lambda_A$ seems to be the unique minimizer, up to rotation, of $L\mapsto E_f[L]$ among Bravais lattices of fixed area $A$.
\end{remark}

\section{Sufficient condition and first applications}
Now we study the case of non completely monotonic potential $f$, i.e. $\mu_f$ is negative on a subset of $(0,+\infty)$ of positive Lebesgue measure.

\subsection{Integral representation and sufficient condition : Proof of Theorem \ref{THM1} }

\begin{proof}
Let $L$ be a Bravais lattice of area $A$ and $f$ an admissible potential. Firstly we prove the integral representation \ref{Intrepresent} of energy $E_f[L]$ :
\begin{align*}
E_f[L]:=\sum_{p\in L^*}f(\|p\|^2)=&\frac{\pi}{A}\int_1^{+\infty}\left[\theta_L\left(\frac{y}{2A}  \right)-1  \right]\left(y^{-1}\mu_f\left(\frac{\pi}{yA}\right)+\mu_f\left(\frac{\pi y}{A}  \right)   \right)dy\\
&+\frac{\pi}{A}\int_1^{+\infty}\mu_f\left(\frac{\pi}{yA}  \right)(y^{-1}-y^{-2})dy.
\end{align*}
Indeed, for a Bravais lattice $L$ of $\R^2$ with $|L|=1/2$, we have, as in \cite{Betermin:2014fy}, by $t=2\pi u$, $u=y^{-1}$ and Montgomery's identity $\theta_L(y^{-1})=y\theta_L(y)$ (proved in \cite{Mont}) :
\begin{align*}
&E_f[L]:=\sum_{p\in L^*}f(\|p\|^2)=\sum_{p\in L^*}\int_0^{+\infty} e^{-t\|p\|^2}\mu_f(t)dt=2\pi\sum_{p\in L^*} \int_0^{+\infty}e^{-2\pi u\|p\|^2}\mu_f(2\pi u)dt\\
&=2\pi\int_0^{+\infty} \left[\theta_L(u)-1  \right]\mu_f(2\pi u)du\\
&=2\pi\int_0^{1} \left[\theta_L(u)-1  \right]\mu_f(2\pi u)du+2\pi\int_1^{+\infty} \left[\theta_L(u)-1  \right]\mu_f(2\pi u)du\\
&= 2\pi\int_1^{+\infty}\left[\theta_L(y^{-1})-1\right]\mu_f\left( \frac{2\pi}{y} \right)\frac{dy}{y^2}+2\pi\int_1^{+\infty} \left[\theta_L(u)-1  \right]\mu_f(2\pi u)du\\
&=2\pi\int_1^{+\infty}\left[y\theta_L(y)-1\right]\mu_f\left( \frac{2\pi}{y} \right)\frac{dy}{y^2}+2\pi\int_1^{+\infty} \left[\theta_L(u)-1  \right]\mu_f(2\pi u)du\\
&=2\pi\int_1^{+\infty}\theta_L(y) \mu_f\left( \frac{2\pi}{y} \right)\frac{dy}{y}+2\pi\int_1^{+\infty} \left[\theta_L(u)-1  \right]\mu_f(2\pi u)du-2\pi\int_1^{+\infty}\mu_f\left( \frac{2\pi}{y} \right)\frac{dy}{y^2}\\
&=2\pi\int_1^{+\infty}[\theta_L(y)-1]\left(y^{-1}\mu_f\left( \frac{2\pi}{y} \right) +\mu_f(2\pi y) \right)dy+2\pi\int_1^{+\infty}\mu_f\left(\frac{2\pi}{y}  \right)(y^{-1}-y^{-2})dy
\end{align*}
Now we have, by change of variable $t=y^{-1}$,
\begin{align*}
\left|\int_1^{+\infty}\mu_f\left(\frac{2\pi}{y}  \right)(y^{-1}-y^{-2})dy\right|&\leq \int_0^1\left|\mu_f(2\pi t)\right|(t^{-1}-1)dt<+\infty
\end{align*}
because $\mu_f$ is continuous on $\R_+^*$, $\mu_f(0)=0$ and $t\mapsto t^{-1}$ is integrable at the neighbourhood of $0$.\\
Hence, for $L$ such that $|L|=A$, we have 
$$
E_f[L]=\sum_{p\in L^*}f(\|p\|^2)=\sum_{p\in \tilde{L}^*}f(2A\|p\|^2)
$$
where $L=\sqrt{2A}\tilde{L}$, $|\tilde{L}|=1/2$. By identities $\displaystyle \mu_{f(k.)}=\frac{1}{k}\mu_f\left( \frac{.}{k} \right)$ and $\displaystyle \theta_{\tilde{L}}(s)=\theta_L\left(  \frac{s}{2A}\right)$, we get
\begin{align*}
E_f[L]=&\frac{\pi}{A}\int_1^{+\infty}\left[\theta_L\left(\frac{y}{2A}  \right)-1  \right]\left(y^{-1}\mu_f\left(\frac{\pi}{yA}\right)+\mu_f\left(\frac{\pi y}{A}  \right)   \right)dy\\
&+\frac{\pi}{A}\int_1^{+\infty}\mu_f\left(\frac{\pi}{yA}  \right)(y^{-1}-y^{-2})dy
\end{align*}
and $\displaystyle C_A:=\frac{\pi}{A}\int_1^{+\infty}\mu_f\left(\frac{\pi}{yA}  \right)(y^{-1}-y^{-2})$ is a finite constant which not depends on $L$. Now our sufficient condition is clear because, for any Bravais lattice $L$ of area $A$, we have
$$
E_f[L]-E_f[\Lambda_A]=\frac{\pi}{A}\int_1^{+\infty}\left[\theta_L\left(\frac{y}{2A}  \right)-\theta_{\Lambda_A}\left(\frac{y}{2A}  \right)  \right]g_A(y)dy.
$$
By Montgomery theorem, $\theta_L(u)-\theta_{\Lambda_A}(u)\geq 0$ for any $u\geq 1$ and any $L$. Thus, if
$$
y^{-1}\mu_f\left(\frac{\pi}{yA}\right)+\mu_f\left(\frac{\pi y}{A}  \right)\geq 1 \quad \text{ for a.e. } y\geq 1
$$
then it follows that
$$
E_f[L]-E_f[\Lambda_A]\geq 0
$$
and $\Lambda_A$ is the unique minimizer of $E_f$, up to rotation, among lattices of fixed area $A$. 
\end{proof}

\subsection{Minimization at high density for differentiable inverse Laplace transform}

In this part we give two results, in the case of differentiable inverse Laplace transform, which are based on our Theorem \ref{THM1}.

\begin{prop}\label{PropLapinv}
Let $f$ be an admissible potential such that $\mu_f$ is $C^1$ with derivative $\mu_f'$. If
\begin{enumerate}
\item $\mu_f(y)\geq 0$ on $\displaystyle \left[\frac{\pi}{A},+\infty\right)$,
\item $\displaystyle \mu_f'\left( \frac{\pi}{A}y\right)\geq \frac{1}{y^3}\mu_f'\left(\frac{\pi}{Ay} \right)$ for any $y\geq 1$,
\end{enumerate}
then $\Lambda_A$ is the unique minimizer, of $E_f$, up to rotation,  among Bravais lattices of fixed area $A$.
\end{prop}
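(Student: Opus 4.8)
The plan is to verify the sufficient condition \eqref{SuffiCond} of Theorem \ref{THM1}. Setting $s=\pi/A$ for brevity, the weight appearing in the integral representation is
$g_A(y):=y^{-1}\mu_f(s/y)+\mu_f(sy)$, and by Montgomery's Theorem \ref{Mgt} the theta-difference $\theta_L(\tfrac{y}{2A})-\theta_{\Lambda_A}(\tfrac{y}{2A})$ is nonnegative for $y\geq 1$. Hence once I show $g_A(y)\geq 0$ on $[1,+\infty)$, Theorem \ref{THM1} delivers the conclusion verbatim, including uniqueness up to rotation, so there is nothing further to prove about the minimizer itself.

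The difficulty is the first summand $y^{-1}\mu_f(s/y)$: for $y\geq 1$ its argument $s/y\leq s$ lies in the region where hypothesis 1 gives no sign information, whereas only the second summand $\mu_f(sy)$ is controlled, being nonnegative since $sy\geq s$. Differentiating $g_A$ directly is unhelpful, because differentiating the prefactor $y^{-1}$ reintroduces $\mu_f(s/y)$ with unknown sign; in fact one computes $g_A'(1)=-\mu_f(s)\leq 0$, so $g_A$ need not be monotone and cannot be handled by a naive monotonicity argument. The key step is therefore to study instead the rescaled quantity
$$
G(y):=y\,g_A(y)=\mu_f(s/y)+y\,\mu_f(sy),
$$
for which the awkward prefactor derivative is absorbed.

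I would then differentiate (legitimate since $\mu_f\in C^1$) to get
$G'(y)=\mu_f(sy)+\big[\,sy\,\mu_f'(sy)-\tfrac{s}{y^2}\,\mu_f'(s/y)\,\big]$, and read off nonnegativity term by term. Multiplying hypothesis 2, namely $\mu_f'(sy)\geq y^{-3}\mu_f'(s/y)$, by the positive factor $sy$ shows precisely that the bracketed term is $\geq 0$ for all $y\geq 1$; and hypothesis 1 gives $\mu_f(sy)\geq 0$ because $sy\geq s$ when $y\geq 1$. Hence $G'\geq 0$ on $[1,+\infty)$, so $G$ is nondecreasing there, while $G(1)=2\mu_f(s)\geq 0$ again by hypothesis 1. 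Therefore $G(y)\geq 0$, i.e. $g_A(y)=G(y)/y\geq 0$ for every $y\geq 1$, which is exactly \eqref{SuffiCond}.

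The remaining work is purely bookkeeping: the appeal to Theorem \ref{THM1} already packages the strict inequality and uniqueness via Montgomery's theorem, so no separate strictness argument is needed. The whole proof rests on the single observation that multiplying by $y$ converts the non-monotone $g_A$ into a monotone $G$ whose derivative splits cleanly into the $\mu_f(sy)$ term governed by hypothesis 1 and the bracketed term governed by hypothesis 2; this is the only genuinely nontrivial idea, the rest being routine differentiation.
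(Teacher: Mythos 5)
Your proof is correct and is essentially identical to the paper's: your auxiliary function $G(y)=y\,g_A(y)$ is exactly the paper's $u_A(y)=\mu_f\bigl(\tfrac{\pi}{yA}\bigr)+y\,\mu_f\bigl(\tfrac{\pi y}{A}\bigr)$, and both arguments differentiate it, use hypothesis 2 (scaled by $\tfrac{\pi y}{A}$) together with hypothesis 1 to get $G'\geq 0$ on $[1,+\infty)$, check $G(1)=2\mu_f(\pi/A)\geq 0$, and conclude via Theorem \ref{THM1}. No gaps.
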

\begin{proof}
We write, for any $y\geq 1$, 
$$
g_{A}(y):=y^{-1}\mu_f\left(\frac{\pi}{yA}\right)+\mu_f\left(\frac{\pi y}{A}  \right)=\frac{u_A(y)}{y}
$$
with
$$
u_A(y):=\mu_f\left(\frac{\pi}{yA}\right)+y\mu_f\left(\frac{\pi y}{A}  \right).
$$
Therefore, we get
$$
u_A'(y)=\mu_f\left(\frac{\pi y}{A}  \right)+\frac{\pi y}{A}\left[\mu_f'\left( \frac{\pi}{A}y\right)-y^{-3}\mu_f'\left(\frac{\pi}{Ay} \right]  \right).
$$
Assumption 1. implies that $\mu_f\left(\frac{\pi y}{A}  \right)\geq 0$ for any $y\geq 1$. Moreover, it is clear that point 2. means that $\mu_f'\left( \frac{\pi}{A}y\right)-y^{-3}\mu_f'\left(\frac{\pi}{Ay} \right)\geq 0$ for any $y\geq 1$, hence $u_A'(y)\geq 0$ for any $y\geq 1$. As
$$
u_A(1)=2\mu_f\left(\frac{\pi}{A}  \right)\geq 0
$$
we have $u_A(y)\geq 0$ for any $y\geq 1$ and it follows that
$$
g_A(y)\geq 0 \quad \forall y\geq 1
$$
and by Theorem \ref{THM1}, $\Lambda_A$ is the unique minimizer, up to rotation, of $E_f$ among Bravais lattices of fixed area $A$. 
\end{proof}

\begin{corollary}
If $f$ is an admissible potential such that its inverse Laplace transform $\mu_f$ is convex on $(0,+\infty)$, then there exists $A_0>0$ such that for any $A\in (0,A_0)$, $\Lambda_A$ is the unique minimizer of $E_f$, up to rotation, among Bravais lattices of fixed area $A$.
\end{corollary}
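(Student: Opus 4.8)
The plan is to deduce the corollary from Proposition \ref{PropLapinv} by verifying its two assumptions for all sufficiently small $A$. The role of convexity is entirely structural: $\mu_f$ convex on $(0,+\infty)$ means $\mu_f'$ is non-decreasing, and this is exactly the monotonicity needed to control assumption 2. Since $\mu_f'$ is monotone it has a limit $L\in(-\infty,+\infty]$ at infinity, so $\mu_f$ is eventually monotone; combined with $\mu_f(0)=0$ (a consequence of admissibility recorded in the preliminaries), this pins down the qualitative shape of $\mu_f$ that the two assumptions probe.

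First I would dispatch assumption 2. For $y\geq 1$ one has $\frac{\pi y}{A}\geq\frac{\pi}{A}\geq\frac{\pi}{yA}$, so monotonicity of $\mu_f'$ gives $\mu_f'\!\left(\frac{\pi y}{A}\right)\geq\mu_f'\!\left(\frac{\pi}{yA}\right)$. In the range where $\mu_f'\!\left(\frac{\pi}{yA}\right)\geq 0$ the required inequality $\mu_f'\!\left(\frac{\pi y}{A}\right)\geq y^{-3}\mu_f'\!\left(\frac{\pi}{yA}\right)$ is then immediate, because $y^{-3}\leq 1$ only shrinks a non-negative quantity. The delicate range is where $\mu_f'$ is still negative, which by monotonicity is confined to small arguments $\frac{\pi}{yA}$, that is to large $y$; there $y^{-3}\mu_f'\!\left(\frac{\pi}{yA}\right)$ is a bounded negative quantity tending to $0$, while the competing term $\mu_f'\!\left(\frac{\pi y}{A}\right)$ is evaluated at the large argument $\frac{\pi y}{A}$, so taking $A$ small drives this argument into the region where $\mu_f'$ has settled to its limiting sign $L$.

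Second, for assumption 1 I would use that $\mu_f\geq 0$ on $[\pi/A,+\infty)$ reduces, as $A\downarrow 0$ and the threshold $\pi/A\to+\infty$, to the non-negativity of the convex function $\mu_f$ for large arguments. Once an $A_0>0$ is produced below which both assumptions hold, Proposition \ref{PropLapinv} directly gives that $\Lambda_A$ is the unique minimizer of $E_f$ up to rotation among Bravais lattices of area $A\in(0,A_0)$, and the quantitative choice of $A_0$ (in terms of where $\mu_f$ and $\mu_f'$ become non-negative) is a routine matter.

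The main obstacle is precisely this sign bookkeeping at infinity: convexity together with $\mu_f(0)=0$ controls the \emph{monotone} behaviour of $\mu_f$ but does not by itself fix the sign of the limit $L$ of $\mu_f'$, nor of $\mu_f$ itself, for large arguments. The heart of the argument is therefore to show that the asymptotics of the convex inverse Laplace transform force $\mu_f\geq 0$ and $\mu_f'\geq 0$ on the relevant large-argument ranges, after which the weight $y^{-3}$ appearing in assumption 2 is harmless; this is the step to which I would devote the most care.
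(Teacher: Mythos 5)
Your plan follows the same route as the paper: verify the two hypotheses of Proposition \ref{PropLapinv} for all sufficiently small $A$, using that convexity makes $\mu_f'$ non-decreasing, so that $\mu_f'\left(\frac{\pi y}{A}\right)\geq \mu_f'\left(\frac{\pi}{Ay}\right)$ for $y\geq 1$, and that the weight $y^{-3}\leq 1$ is harmless on the side where $\mu_f'$ is non-negative. But the proposal stops exactly where the proof has to start: you never establish that $\mu_f\geq 0$ (hence $\mu_f'\geq 0$) on $[\pi/A,+\infty)$ for small $A$; you only announce that this is ``the heart of the argument'' and that you would devote care to it. That is the entire content of the corollary, so as written this is a plan with its core missing. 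For comparison, the paper disposes of this point in one sentence -- ``as $\mu_f$ is convex, there exists $r_0>0$ such that $\mu_f(r)\geq 0$ for all $r\geq r_0$'' -- sets $A_0=\pi/r_0$, and concludes. Note also that once eventual non-negativity of $\mu_f$ is granted, your discussion of the limit $L$ of $\mu_f'$ is unnecessary: since $\mu_f$ is convex and $\mu_f(0)=0$, the supporting line at $t$ gives $0=\mu_f(0)\geq \mu_f(t)-t\mu_f'(t)$, i.e. $\mu_f'(t)\geq \mu_f(t)/t\geq 0$ at every $t$ where $\mu_f(t)\geq 0$; hence for $A\leq \pi/r_0$ and $y\geq 1$ one has $\mu_f'\left(\frac{\pi y}{A}\right)\geq 0$, and hypothesis 2 follows from your own two cases.

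The deeper problem is that your closing worry is not only legitimate but fatal: the step you postponed cannot be proved, because convexity together with $\mu_f(0)=0$ does not force $\mu_f\geq 0$ near $+\infty$. Take $f(r)=\frac{1}{r+1}-\frac{1}{r}=-\frac{1}{r(r+1)}$: it is admissible (analytic on the right half-plane, $O(|z|^{-2})$ at infinity), and $\mu_f(t)=e^{-t}-1$ is convex with $\mu_f(0)=0$, yet $\mu_f<0$ on all of $(0,+\infty)$. For this potential, $-f(r)=\frac{1}{r(r+1)}$ is completely monotonic (its inverse Laplace transform is $1-e^{-t}\geq 0$), so by Proposition \ref{CrystCM} the lattice $\Lambda_A$ is the unique \emph{maximizer} of $E_f$ among Bravais lattices of any fixed area $A$; in particular $E_f[\sqrt{A}\,\Z^2]<E_f[\Lambda_A]$, and the conclusion of the corollary fails for every $A$. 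A convex $\mu_f$ with $\mu_f(0)=0$ is either non-negative on some $[r_0,+\infty)$ or strictly negative on a neighbourhood of $+\infty$, and the corollary is correct precisely under the additional hypothesis excluding the second case. So the gap you isolated is a genuine gap in your proposal; it is also, as it happens, a gap in the paper's own proof, whose first sentence asserts the eventual non-negativity of $\mu_f$ without justification, and falsely in this generality.
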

\begin{proof}
As $\mu_f$ is convex, there exists $r_0>0$ such that, for any $r\geq r_0$, $\mu_f(r)\geq 0$. Moreover, for any $y\geq 1$,
$$
\mu_f'\left( \frac{\pi}{A}y\right)\geq \mu_f'\left(\frac{\pi}{Ay} \right)
$$
because $\displaystyle \frac{\pi}{Ay}\leq \frac{\pi y}{A}$ and $\mu_f$ is convex. Hence, as $y^{-3}\leq 1$ for any $y\geq 1$, we get both points 1. and 2. of Proposition \ref{PropLapinv} for any $A$ such that $\displaystyle 0<A\leq A_0:=\frac{\pi}{r_0}$. 
\end{proof}

\subsection{Remarks about our method}

As we saw in \cite{Betermin:2014fy}, for Lennard-Jones case, our method is not optimal to finding all areas such that $\Lambda_A$ is the unique minimizer, up to rotation, of $E_f$ among Bravais lattices of fixed area $A$. The general and difficult main problem is to find all $A$ such that, for any Bravais lattice $L$ of area $A$,
$$
E_f[L]-E_f[\Lambda_A]=\frac{\pi}{A}\int_1^{+\infty}\left[\theta_L\left(\frac{y}{2A}  \right)-\theta_{\Lambda_A}\left(\frac{y}{2A}  \right)  \right]g_A(y)dy\geq 0
$$
where $\displaystyle g_A(y):=y^{-1}\mu_f\left(\frac{\pi}{yA}\right)+\mu_f\left(\frac{\pi y}{A}  \right)$.
We can imagine that even if $g_A$ is not positive almost everywhere on $[1,+\infty)$, the positive part of this integral can compensate the negative one. For instance, if we consider, as in \cite{Betermin:2014fy}, $f(r)=r^{-6}-2r^{-3}$, then
$$
g_A(y)=\frac{\pi^2}{A^2}\left[\frac{\pi^3}{A^35!}(y^6+y^{-5})-y^3-y^{-2}\right]
$$
and we plot graphs of $\displaystyle y\mapsto \frac{\pi^3}{A^35!}(y^6+y^{-5})-y^3-y^{-2} $ for $A=0.8$ (on the left) and $A=1$ (on the right).
\begin{center}
\includegraphics[width=8cm,height=60mm]{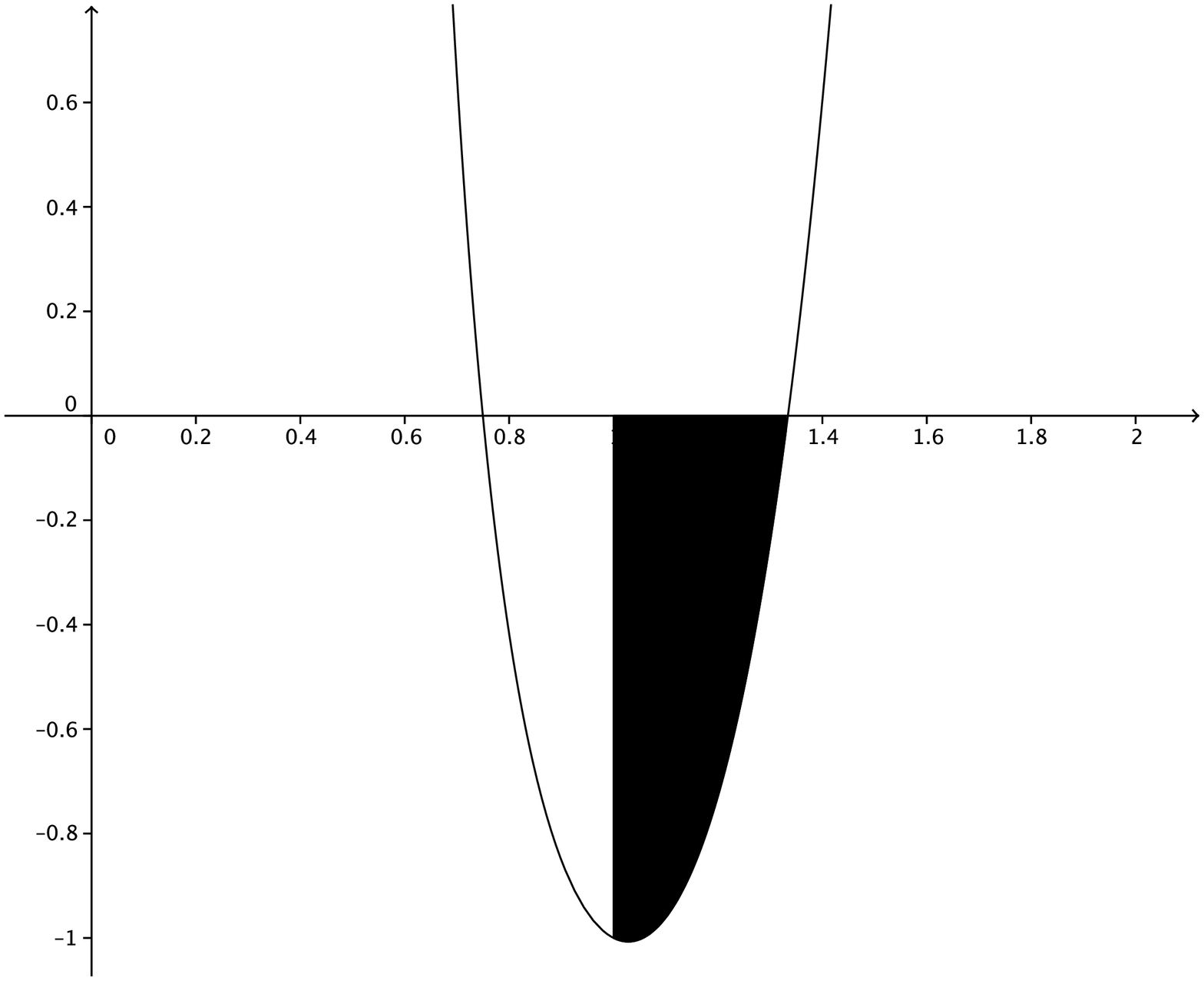}\includegraphics[width=8cm,height=60mm]{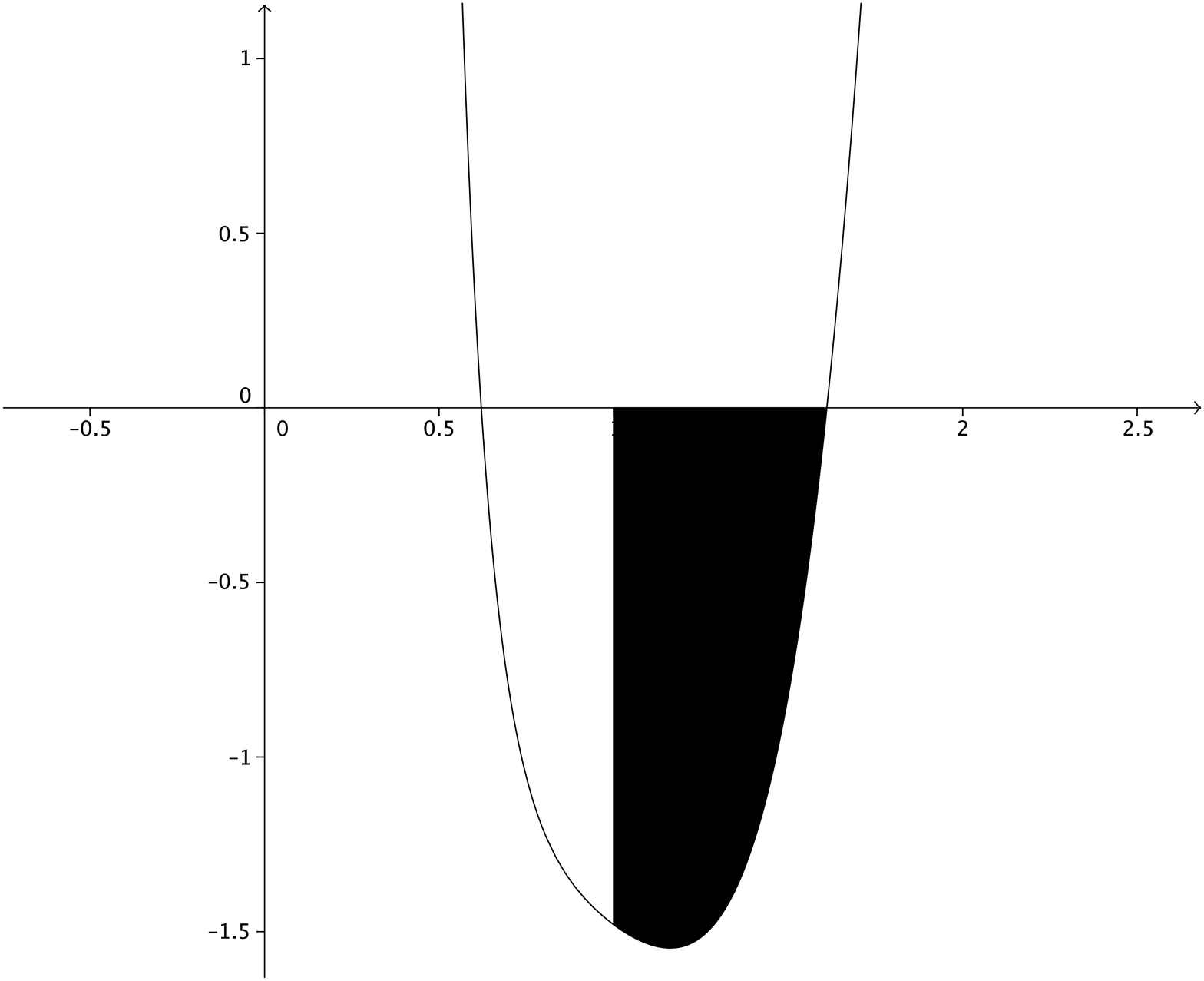}\\
\textbf{Fig. 3 :} Black zone is $\displaystyle \int_1^{y_A}g_A(y)dy$ where $y_A$ is the second zero of $g_A$, $A\in\{ 0.8,1\}$
\end{center}
Thus a fine study, with respect to lattices $L$ and real $y$, of the behaviour of positive function 
$$
\Delta_L(y):=\theta_L\left(\frac{y}{2A}  \right)-\theta_{\Lambda_A}\left(\frac{y}{2A}\right)
$$
is necessary. However we find it difficult at this time. Indeed,
\begin{itemize}
\item for any Bravais lattice $L$ of area $A$, 
$$
\lim_{y\to +\infty} \Delta_L(y)  =0
$$
and $\Delta_L$ exponentially decreases ;
\item if complete monotonicity is a necessary condition to optimality of $\Lambda_A$ for any fixed area $A$, then function $y\mapsto \Delta_L(y) $ is \textbf{not decreasing} on $[1,+\infty)$ for any $A$ and any $L$ with area $A$. Indeed, $\Delta_L$ is decreasing on $(1,+\infty)$ if and only if for any $t\geq 1$, $\Delta_L'(y)\leq 0$, i.e.
\begin{align*}
&\forall A,\forall L,\Delta_L \text{ decreases on } (1,+\infty)\\
 &\iff \forall A,\forall L,\forall y\geq 1,-\frac{\pi}{A}\sum_{p\in L^*}\|p\|^2 e^{-\frac{\pi}{A}y\|p\|^2}+\frac{\pi}{A}\sum_{p\in \Lambda_A^*}\|p\|^2e^{-\frac{\pi}{A}y\|p\|^2}\leq 0\\
&\iff \forall A,\forall L, \forall y\geq 1,\sum_{p\in \Lambda_A^*}\|p\|^2e^{-\frac{\pi}{A}y\|p\|^2}\leq \sum_{p\in L^*}\|p\|^2 e^{-\frac{\pi}{A}y\|p\|^2}
\end{align*}
which would be not possible, because $r\mapsto re^{-\frac{\pi}{A}yr}$ is never completely monotonic for $y\geq 1$. 
\end{itemize}
Hence comparing $\int_1^{y_A}\Delta_{L}(y)g_A(y)dy$ and $\int_{y_A}^{+\infty}\Delta_{L}(y)g_A(y)dy$ seems difficult, even improving our method is possible (see \cite{Betermin:2014fy} for numerical values).

\section{Sums of screened Coulomb potentials}
In this part, we give the first simple example of application of Theorem \ref{THM1}. We consider non convex sums of screened Coulomb potentials and we prove minimality of $\Lambda_A$ at high density and global minimality of a triangular lattice among all Bravais lattices.

\subsection{Definition and proof of Theorem \ref{THM2}.A for $\varphi_{a,x}$}
\begin{defi}
Let $n\in \N^*$. For coefficients $a=(a_1,...,a_n)\in (\R^*)^n$ such that $\displaystyle \sum_{i=1}^n a_i\geq 0$ and for $x=(x_1,...,x_n)\in (\R_+^*)^n$, we define
$$
\varphi_{a,x}(r):=\sum_{i=1}^n a_i\frac{e^{-x_i r}}{r}
$$
and we set $\displaystyle \mathcal{K}_a:=\left\{k ; \sum_{i=1}^ka_i<0\right\}$.
\end{defi}
\noindent As a proof of Theorem \ref{THM2}.A for this potential, we purpose to give an explicit bound for the minimality at high density as follows

\begin{prop}\label{SCGen}
Assume $0<x_1<...<x_n$ and let $A$ such that
$$
A\leq \min\left\{\min_{k\in \mathcal{K}_a}\left\{\frac{\pi}{x_{k+1}}\left(-\frac{\sum_{i=1}^na_i}{\sum_{i=1}^k a_i}  \right)\right\},\frac{\pi}{x_n}\right\}
$$
then $\Lambda_A$ is the unique minimizer of $E_{\varphi_{a,x}}$, up to rotation, among Bravais lattices of fixed area $A$.
\end{prop}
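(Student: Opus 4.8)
The plan is to apply Theorem~\ref{THM1} after computing the inverse Laplace transform of $\varphi_{a,x}$ explicitly. First I would record the elementary pair $\mathcal{L}^{-1}\!\left[r\mapsto e^{-xr}/r\right](t)=\mathds{1}_{[x,+\infty)}(t)$, which follows from $\int_x^{+\infty}e^{-rt}\,dt=e^{-xr}/r$. By linearity this gives
$$
\mu_{\varphi_{a,x}}(t)=\sum_{i=1}^n a_i\,\mathds{1}_{\{t\geq x_i\}}=S_k \quad\text{for } t\in[x_k,x_{k+1}),
$$
where I write $S_k:=\sum_{i=1}^k a_i$ with the conventions $x_0=0$, $x_{n+1}=+\infty$, $S_0=0$. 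Thus $\mu_{\varphi_{a,x}}$ is piecewise constant, its value $S_k$ on $[x_k,x_{k+1})$ being negative exactly for $k\in\mathcal K_a$, while its terminal value $S_n=\sum_i a_i\geq 0$ is non-negative. Since each summand $e^{-x_i r}/r$ is admissible and admissibility is stable under finite linear combinations, $\varphi_{a,x}$ is admissible, so I may invoke the integral representation \eqref{Intrepresent}; the whole problem then reduces to verifying the sufficient condition \eqref{SuffiCond}, namely $g_A(y):=y^{-1}\mu_f(\pi/(yA))+\mu_f(\pi y/A)\geq 0$ for a.e.\ $y\geq 1$.

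Next I would split $g_A$ into its two terms and exploit the bound $A\leq\pi/x_n$. For every $y\geq 1$ one has $\pi y/A\geq \pi/A\geq x_n$, so $\mu_f(\pi y/A)=S_n\geq 0$ is a non-negative constant; hence the only possible source of negativity is the first term $y^{-1}\mu_f(\pi/(yA))$. To control it I would partition $[1,+\infty)$ according to the interval in which $\pi/(yA)$ lies: when $\pi/(yA)\in[x_k,x_{k+1})$, equivalently $y\in(\pi/(Ax_{k+1}),\pi/(Ax_k)]$, one has $g_A(y)=y^{-1}S_k+S_n$. On pieces with $S_k\geq 0$ this is automatically non-negative, so only the indices $k\in\mathcal K_a$ require attention.

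For $k\in\mathcal K_a$ the map $y\mapsto y^{-1}S_k+S_n$ is increasing (because $S_k<0$), so its infimum over the relevant interval is attained as $y\to(\pi/(Ax_{k+1}))^+$, with value $\tfrac{Ax_{k+1}}{\pi}S_k+S_n$. Here the hypothesis $A\leq\pi/x_n\leq\pi/x_{k+1}$ plays a double role: it also guarantees $\pi/(Ax_{k+1})\geq 1$, so that this worst-case abscissa genuinely lies in the admissible range $[1,+\infty)$. Requiring $\tfrac{Ax_{k+1}}{\pi}S_k+S_n\geq 0$ and dividing by $S_k<0$ (which reverses the inequality) yields exactly $A\leq \frac{\pi}{x_{k+1}}\bigl(-S_n/S_k\bigr)=\frac{\pi}{x_{k+1}}\bigl(-\sum_i a_i/\sum_{i=1}^k a_i\bigr)$. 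Taking the minimum over $k\in\mathcal K_a$ together with the threshold $\pi/x_n$ reproduces precisely the hypothesis on $A$, so $g_A\geq 0$ on $[1,+\infty)$ and Theorem~\ref{THM1} delivers the uniqueness of $\Lambda_A$ as minimizer. I expect the only delicate points to be the bookkeeping of the sign reversal when dividing by the negative partial sums $S_k$, and correctly identifying $x_{k+1}$ (rather than $x_k$) as the binding endpoint — a direct consequence of the monotonicity of $y\mapsto y^{-1}S_k+S_n$ on each piece.
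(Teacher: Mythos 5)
Your proposal is correct and follows essentially the same route as the paper: the same inverse Laplace transform $\mu_{\varphi_{a,x}}(t)=\sum_i a_i\mathds{1}_{[x_i,+\infty)}(t)$, the same reduction to the sufficient condition \eqref{SuffiCond} of Theorem \ref{THM1}, the same partition of $[1,+\infty)$ according to which interval $\pi/(yA)$ falls in, and the same worst-case bound $\frac{Ax_{k+1}}{\pi}\sum_{i=1}^k a_i+\sum_{i=1}^n a_i\geq 0$ on the pieces indexed by $\mathcal{K}_a$. The only cosmetic difference is that you justify this bound via monotonicity of $y\mapsto y^{-1}S_k+S_n$, where the paper manipulates the inequality directly.
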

\begin{proof}
We compute easily, because $\mathcal{L}^{-1}[r^{-1}e^{-x_ir}](y)=\mathds{1}_{[x_i,+\infty)}(y)$ for any $x_i>0$ and any $y\geq 0$,
$$
\mu_{\varphi_{a,x}}(y)=\sum_{i=1}^n a_i \mathds{1}_{[x_i,+\infty)}(y)
$$
and it follows that, for any $y\geq 1$,
\begin{align*}
g_A(y)&:=\frac{1}{y}\sum_{i=1}^n\mathds{1}_{[x_i,+\infty)}\left(\frac{\pi}{yA}  \right)+\sum_{i=1}^n a_i\mathds{1}_{[x_i,+\infty)}\left( \frac{\pi y}{A} \right)\\
&=\frac{1}{y}\sum_{i=1}^n a_i \mathds{1}_{\left[1,\frac{\pi}{Ax_i}\right]}(y)+\sum_{i=1}^n a_i\mathds{1}_{\left[  \frac{Ax_i}{\pi} ,+\infty\right)}(y).
\end{align*}
As, by assumption, $\displaystyle A\leq \frac{\pi}{x_n}$, we have, for any $1\leq i\leq n-1$, 
$$
\frac{\pi}{Ax_i}\geq \frac{\pi}{Ax_{i+1}}\geq 1.
$$ 
Hence we get 
\[ g_A(y)=
\left\{
\begin{array}{ll}
\displaystyle (1+y^{-1})\sum_{i=1}^n a_i &\mbox{if $1\leq y\leq \frac{\pi}{Ax_n}$}\\
\displaystyle \sum_{i=1}^k\frac{a_i}{y}+\sum_{i=1}^na_i &\mbox{if $\frac{\pi}{Ax_{k+1}}<y\leq \frac{\pi}{Ax_k}$, for any $1\leq k\leq n-1$}\\
\displaystyle \sum_{i=1}^n a_i &\mbox{if $y>\frac{\pi}{A x_1}.$}
\end{array}\right.\]
As $\sum_{i=1}^n a_i\geq 0$ and, for any $k\not\in \mathcal{K}_a$, $\sum_{i=1}^k a_i \geq 0$, we obtain 
$$
\forall y\in \left[1,\frac{\pi}{Ax_n}\right]\bigcup_{k\not\in \mathcal{K}_a}\left(\frac{\pi}{Ax_{k+1}},\frac{\pi}{Ax_k} \right]\cup \left(\frac{\pi}{Ax_1},+\infty  \right), \quad g_A(y)\geq 0.
$$
Now if $k\in \mathcal{K}_a$, as $\displaystyle A\leq \min_{k\in \mathcal{K}_a}\left\{\frac{\pi}{x_{k+1}}\left(-\frac{\sum_{i=1}^na_i}{\sum_{i=1}^k a_i}  \right)\right\}$, we get, for any $y\in \left(\frac{\pi}{Ax_{k+1}},\frac{\pi}{Ax_k}  \right]$,
$$
\sum_{i=1}^k\frac{a_i}{y}+\sum_{i=1}^na_i \geq \frac{Ax_{k+1}}{\pi}\sum_{i=1}^k a_i+\sum_{i=1}^na_i\geq 0
$$
and it follows that $g_A(y)\geq 0$ for any $y\geq 1$. By Theorem \ref{THM1}, $\Lambda_A$ is the unique minimizer of $E_{\varphi_{a,x}}$, up to rotation, among Bravais lattices of fixed area $A$.
\end{proof}

\subsection{Global minimality : Proof of Theorem \ref{THM2}.B.1}
Now we focus on particular ``attractive-repulsive" case 
\begin{itemize}
\item $a=(-a_1,a_2)$ where $0<a_1<a_2$;
\item $x=(x_1,x_2)$ with $0<x_1<x_2$.
\end{itemize}
Therefore we define, for any $y>0$,
$$
\varphi_{a,x}^{AR}(r):=a_2\frac{e^{-x_2r}}{r}-a_1\frac{e^{-x_1r}}{r}.
$$
Now, let us prove Theorem \ref{THM2}.B.1.
\begin{proof}
Firstly we study variations of $\varphi_{x,a}$ to prove the existence of global minimizer $L_{a,x}$ among all Bravais lattices and upper bound $\alpha_{a,x}$ for its area. Afterward we prove that inequalities \eqref{CondMinGlob} are equivalent with
\begin{equation*}\label{alpha}
\alpha_{a,x}\leq \min\left\{\frac{\pi}{x_2},\frac{\pi}{x_2}\left( \frac{a_2}{a_1}-1 \right)  \right\}.
\end{equation*}
Thus, by direct application of Theorem \ref{SCGen}, if $A\leq \min\left\{\frac{\pi}{x_2},\frac{\pi}{x_2}\left( \frac{a_2}{a_1}-1 \right)  \right\}$, $\Lambda_A$ is the unique minimizer among Bravais lattices of fixed area $A$, therefore $L_{a,x}$ is triangular and unique. \\ \\

\textbf{STEP 1 : Variations of function $\varphi_{a,x}$}\\
We have, for any $r>0$, 
$$
\varphi_{a,x}'(r)= \frac{1}{r^2}\left[ a_1(1+x_1r)e^{-x_1r}-a_2(1+x_2r)e^{-x_2r} \right]
$$
and it follows that
$$
\varphi_{a,x}'(r)\geq 0 \iff g_{a,x}(r):=(x_2-x_1)r+\ln(1+x_1r)-\ln(1+x_2r)+\ln\left( \frac{a_1}{a_2} \right)\geq 0.
$$
As, for any $r>0$,
$$
g_{a,x}'(r)=\frac{(x_2-x_1)\left(x_1x_2r^2+(x_1+x_2)r\right)}{(1+x_1r)(1+x_2r)}> 0,
$$
$g_{a,x}$ is an increasing function on $(0,+\infty)$. We have $a_2>a_1$, therefore $\ln\left( \frac{a_1}{a_2}\right)<0$ and there exists $\alpha_{a,x}$ such that 
$$
\forall r\in (0,\alpha_{a,x}], g_{a,x}(r)\leq 0, \quad \text{ and } \quad \forall r>\alpha_{a,x}, g_{a,x}(r)>0.
$$
Thus we get $\varphi_{a,x}$ is a decreasing function on $(0,\alpha_{a,x}]$ and an increasing function on $(\alpha_{a,x},+\infty)$.\\ \\

\textbf{STEP 2 : The existence of global minimizer for $E_{\varphi_{a,x}}$}\\
Variations of function $\varphi_{a,x}$ and the fact that $\displaystyle \lim_{r\to 0 \atop r>0}\varphi_{a,x}(r)=+\infty$ and goes to $0$ at infinity implies that global minimizer exists. Indeed, this problem can be viewed like a minimization problem of a three variables function. By previous limits we can restrict this problem with variables in a compact set, and by continuity this problem has a solution $L_{a,x}$.\\ \\

\textbf{STEP 3 : Upper bound for $|L_{a,x}|$ and conclusion}\\
Let $L_{a,x}=\Z u_{a,x}\oplus \Z v_{a,x}$. 
If $\|u_{a,x}\|>\sqrt{\alpha_{a,x}}$ then a contraction of all distances yields a new lattice with smaller energy because, by STEP 1, $r\mapsto \varphi_{a,x}(r^2)$ is an increasing function on $\displaystyle (\sqrt{\alpha_{a,x}},+\infty)$. Moreover, if $\|v_{a,x}\|>\sqrt{\alpha_{a,x}}$ then a contraction of $\R v_{a,x}$ also gives a lattice with less energy. Thus we have $\|u_{a,x}\|\leq \|v_{a,x}\|\leq \sqrt{\alpha_{a,x}}$. Now, because $|L_{a,x}|\leq \|u_{a,x}\|\|v_{a,x}\|$, we get\footnote{This argument appears in \cite[Proposition 4.1, ii)]{Betermin:2014fy} and in \cite{MEKBS} in order to prove that the distance between two animals in a swarm is less than a specific ``confort distance" between them, which minimizes a certain function.}
$$
|L_{a,x}|\leq \alpha_{a,x}.
$$
Now it is not difficult to check that
$$
\varphi_{a,x}'\left( \frac{\pi}{x_2} \right)\geq 0 \iff \frac{a_1\left(1+\frac{x_1}{x_2}\pi  \right)}{a_2(1+\pi)}e^{\left( 1-\frac{x_1}{x_2} \right)\pi}\geq 1
$$
and
$$
\varphi_{a,x}'\left( \frac{\pi}{x_2}\left( \frac{a_2}{a_1}-1 \right)  \right)\geq 0 \iff  \frac{a_1\left( a_1x_2+x_1(a_2-a_1)\pi \right)}{a_2x_2\left(a_1+(a_2-a_1)\pi\right)}e^{\left(1-\frac{x_1}{x_2}  \right)\left( \frac{a_2}{a_1}-1 \right)\pi}\geq 1
$$
hence \eqref{alpha} holds and $L_{a,x}$ is unique and triangular by Theorem \ref{SCGen} as explained at the beginning of the proof.\\ \\

\textbf{STEP 4 : Example}\\
If we take $a_2=2a_1$ then
$$
\frac{a_1\left( a_1x_2+x_1(a_2-a_1)\pi \right)}{a_2x_2\left(a_1+(a_2-a_1)\pi\right)}e^{\left(1-\frac{x_1}{x_2}  \right)\left( \frac{a_2}{a_1}-1 \right)\pi}=\frac{a_1\left(1+\frac{x_1}{x_2}\pi  \right)}{a_2(1+\pi)}e^{\left( 1-\frac{x_1}{x_2} \right)\pi}=\frac{1}{2(1+\pi)}(1+\frac{x_1}{x_2}\pi)e^{\left( 1-\frac{x_1}{x_2} \right)\pi}.
$$
Now we set $X=\frac{x_1}{x_2}\pi$ and our condition becomes $\displaystyle \frac{(1+X)}{2(1+\pi)}e^{-X+\pi}\geq 1$, which is equivalent with
$$
g(X):=-X+\log(1+X)-\log(2+2\pi)+\pi\geq 0.
$$
As $g'(X)=-\frac{X}{1+X}\geq 0$ on $\R_+$, then $g$ decreases and there exists $\tilde{X}>0$ such that $g(\tilde{X})=0$. Numerically, we found $\tilde{X}>2.186$, hence if $X\leq 2.186$, which corresponds to $\displaystyle \frac{x_1}{x_2}\pi\leq 2.186$, i.e. $\displaystyle x_1\leq \frac{2.186}{\pi}x_2\approx 0.695825x_2$,
then $g(X)\geq 0$. In particular, it is true if $x_1\leq 0.695 x_2$.
\end{proof}

\begin{example}
For instance, we can choose $(x_1,x_2)=(1,2)$. Thus, global minimizer of
$$
L\mapsto E_{\varphi_{a,x}}[L]=\sum_{p\in L^*}\varphi_{a,x}(\|p\|^2)=2a_1\sum_{p\in L^*}\frac{e^{-2\|p\|^2}}{\|p\|^2}-a_1\sum_{p\in L^*}\frac{e^{-\|p\|^2}}{\|p\|^2}
$$
is unique, up to rotation, and triangular. Hence we can construct potential with arbitrary deep well (using parameter $a_1$) and with triangular global minimizer.
\begin{center}
\includegraphics[width=10cm,height=80mm]{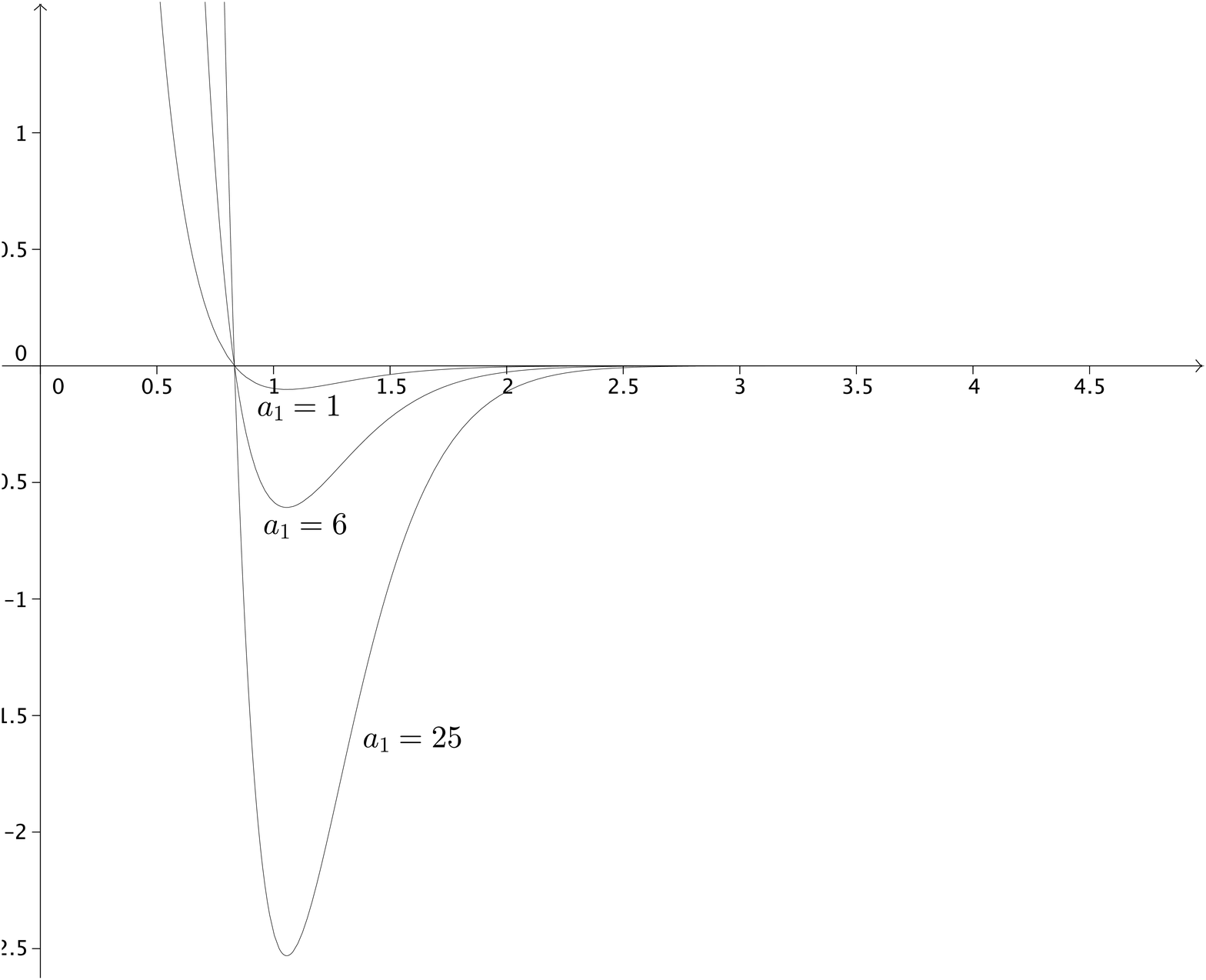}\\
\textbf{Fig. 4 :} Graph of $\displaystyle r\mapsto \varphi_{a,x}(r^2)= 2a_1\frac{e^{-2r^2}}{r^2}-a_1\frac{e^{-r^2}}{r^2}$ for $a_1\in \{1,6,25\}$.
\end{center}
\end{example}
\begin{remark}
This kind of potential seems not to be used in molecular simulation but this prediction of triangular ground state could be observed in the future. Furthermore our Theorem \ref{THM2}.B.1 allows to better understand ground state for parametrized potential with repulsion at short distance and quick decay at large distance, as in \cite{Crystal} where Theil proved global minimality of a triangular lattice among all configurations if the potential's well is sufficiently narrow, i.e. with repulsion and decay sufficiently strong.
\end{remark}

\section{Nonconvex sums of inverse power laws}

In this part, we generalize our result in \cite{Betermin:2014fy}, which tackled only the classical Lennard-Jones case, for any nonconvex sums of inverse power potentials, that is to say optimality of triangular lattice $\Lambda_A$ at for high densities and non-optimality of this one for low densities. Furthermore we show that our method allows to obtain global minimizer, i.e. minimizer among all Bravais lattices without constraint of area, of Lennard-Jones type energies with small parameters.\\

\subsection{Definition and proof of Theorem \ref{THM2}.A for $V_{a,x}$}

\begin{defi}
Let $n\geq 1$ be an integer and, for $a=(a_1,...,a_n)\in (\R^*)^n$ such that $a_n>0$, and $x=(x_1,...,x_n) \in (\R_+)^n$ such that $1<x_1<...<x_n$, let
$$
V_{a,x}(r)=\sum_{i=1}^n \frac{a_i}{r^{x_i}}.
$$
We set $I_-:=\{i; a_i<0 \}$, $I_+:=\{i; a_i>0 \}$ and $\displaystyle \alpha_i:=\frac{a_i \pi^{x_i-1}}{\Gamma(x_i)}$. Moreover we assume that $I_-\neq \emptyset$ (otherwise $V_{a,x}$ is completely monotonic).
\end{defi}

\begin{remark} In order to minimize $E_{V_{a,x}}$ among lattices, we should assume $a_n>0$ because we have $V_{a,x}(r)\sim a_n r^{-x_n}$ as $r\to 0$. Indeed we have $V_{a,x}(r)\to +\infty$ as $r\to 0$ and $V_{a,x}(r)\to 0$ as $r\to +\infty$ therefore there exists minimizer of $E_{V_{a,x}}$ among Bravais lattices with fixed area. If $a_n<0$, it is sufficient to do $\|u\|\to 0$ to get $E_{V_{a,x}}[L]\to-\infty$.
\end{remark}

\begin{example}
This kind of potential is widely used in molecular simulation. Indeed, besides Lennard-Jones potentials that we will study in the next subsection, it is sometimes necessary to consider some modifications of it. For instance, the $(12-6-4)$ potential proposed by Mason and Schamp in \cite{MasonSchamp}, defined by
$$
V(r)=\frac{a_3}{r^{12}}-\frac{a_2}{r^{6}}-\frac{a_1}{r^4},
$$
describes the interaction of ions with neutral systems. For instance, in fullerene $C_{60}$, this potential describes interaction between a carbon atom in the polyatomic ion and a buffer gas helium atom.\\
An other example, proposed by Klein and Hanley in \cite{KHanley70,KHanley72} for description of rare gases, more precise than Lennard-Jones, is the potential defined, for $m>8$, by
$$
V(r)=\frac{a_3}{r^m}-\frac{a_2}{r^6}-\frac{a_1}{r^8}.
$$
\end{example}

\noindent As in previous section, we give an explicit bounds for the minimality of $\Lambda_A$ at high density in the following proposition.

\begin{prop}\label{HDInvPower}
If we have
\begin{equation}\label{HD}
A\leq \pi \min_{i\in I_-}\left\{  \left( \frac{a_n\Gamma(x_i)}{2\sharp\{I_-\}|a_i|\Gamma(x_n)} \right)^{\frac{1}{x_n-x_i}}\right\}
\end{equation}
then $\Lambda_A$ is the unique minimizer of $E_{V_{a,x}}$, up to rotation, among Bravais lattices of fixed area $A$.
\end{prop}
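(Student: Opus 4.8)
The plan is to verify the sufficient condition \eqref{SuffiCond} of Theorem \ref{THM1}, i.e.\ that $g_A(y):=y^{-1}\mu_f\left(\frac{\pi}{yA}\right)+\mu_f\left(\frac{\pi y}{A}\right)\geq 0$ for a.e.\ $y\geq 1$, under the stated area bound. First I would compute the inverse Laplace transform term by term using $\mathcal{L}^{-1}[r^{-x}](y)=y^{x-1}/\Gamma(x)$, giving
$$
\mu_{V_{a,x}}(y)=\sum_{i=1}^n\frac{a_i}{\Gamma(x_i)}\,y^{x_i-1},
$$
which is continuous on $(0,+\infty)$ and vanishes at $0$ since $x_1>1$, consistent with admissibility. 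Substituting $y\mapsto \pi y/A$ and $y\mapsto \pi/(yA)$ and recalling $\alpha_i=a_i\pi^{x_i-1}/\Gamma(x_i)$, the two pieces combine into
$$
g_A(y)=\sum_{i=1}^n\frac{\alpha_i}{A^{x_i-1}}\left(y^{x_i-1}+y^{-x_i}\right).
$$

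Next I would split the sum according to the sign of $\alpha_i$ (equivalently of $a_i$). The terms with $i\in I_+$ are nonnegative; I would keep only the $n$-th one and bound it below by $\frac{\alpha_n}{A^{x_n-1}}y^{x_n-1}$, discarding its $y^{-x_n}$ part. For the negative terms $i\in I_-$, the key elementary estimate is that for $y\geq 1$ and $x_i<x_n$ one has $y^{x_i-1}+y^{-x_i}\leq 2y^{x_n-1}$ (since $y^{x_i-1}\leq y^{x_n-1}$ and $y^{-x_i}\leq 1\leq y^{x_n-1}$). As $\alpha_i<0$, this yields $\frac{\alpha_i}{A^{x_i-1}}(y^{x_i-1}+y^{-x_i})\geq -\frac{2|\alpha_i|}{A^{x_i-1}}y^{x_n-1}$, whence
$$
g_A(y)\geq y^{x_n-1}\left[\frac{\alpha_n}{A^{x_n-1}}-\sum_{i\in I_-}\frac{2|\alpha_i|}{A^{x_i-1}}\right].
$$

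It then suffices to force the bracket to be nonnegative. Distributing the leading coefficient equally among the $\sharp\{I_-\}$ negative indices, I would impose, for each $i\in I_-$, the per-index inequality $\frac{2\sharp\{I_-\}|\alpha_i|}{A^{x_i-1}}\leq \frac{\alpha_n}{A^{x_n-1}}$, which rearranges to $A^{x_n-x_i}\leq \frac{\alpha_n}{2\sharp\{I_-\}|\alpha_i|}$. Substituting $\alpha_i=a_i\pi^{x_i-1}/\Gamma(x_i)$ turns $\alpha_n/|\alpha_i|$ into $\pi^{x_n-x_i}\,a_n\Gamma(x_i)/(|a_i|\Gamma(x_n))$, and taking the $(x_n-x_i)$-th root reproduces exactly the bound \eqref{HD}. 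Applying Theorem \ref{THM1} then gives that $\Lambda_A$ is the unique minimizer of $E_{V_{a,x}}$, up to rotation, among Bravais lattices of area $A$.

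The computation is essentially mechanical; the only genuine step is the elementary bound $y^{x_i-1}+y^{-x_i}\leq 2y^{x_n-1}$ on $[1,+\infty)$, which both explains the factor $2$ and, crucially, relies on $y\geq 1$ (so that $y^{-x_i}\leq 1$). The factor $\sharp\{I_-\}$ is merely the price of distributing the single leading term among all negative indices; since the resulting condition is only sufficient, I would not expect the bound \eqref{HD} to be sharp.
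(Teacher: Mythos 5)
Your proof is correct and arrives at exactly the bound \eqref{HD}, but the key positivity step is genuinely different from the paper's. The paper follows the same frame — compute $\mu_{V_{a,x}}(y)=\sum_{i=1}^n\frac{a_i}{\Gamma(x_i)}y^{x_i-1}$, form $g_A$, verify the sufficient condition \eqref{SuffiCond} of Theorem \ref{THM1} — but then factors $g_A(y)=y^{-x_n}p_{a,x}(y)$ with $p_{a,x}(y)=\sum_{i=1}^n\frac{\alpha_i}{A^{x_i-1}}\left(y^{x_n-x_i}+y^{x_n+x_i-1}\right)$, a generalized polynomial with positive leading coefficient $\frac{\alpha_n}{A^{x_n-1}}$ and $2\sharp\{I_-\}$ negative coefficients, and invokes Cauchy's rule (Theorem \ref{Cauchy}, in the form \eqref{UBoundRoot}) to bound its largest positive zero: hypothesis \eqref{HD} is shown to be \emph{equivalent} to $M_{p_{a,x}}\leq 1$, whence $p_{a,x}\geq 0$ on $[1,+\infty)$. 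You replace that appeal to Cauchy's bound by the elementary estimate $y^{x_i-1}+y^{-x_i}\leq 2y^{x_n-1}$ for $y\geq 1$ together with splitting the leading term equally among the $\sharp\{I_-\}$ negative indices; this reproduces the same constant — the $2$ corresponds to the two monomials each $i\in I_-$ contributes to $p_{a,x}$, and $\sharp\{I_-\}$ to the splitting, so your $2\sharp\{I_-\}$ is exactly the count $\lambda$ of negative coefficients in Cauchy's rule — and your per-index inequality rearranges precisely to \eqref{HD}. What your route buys is self-containedness and transparency: you in effect re-prove the special case of Cauchy's rule that is needed, making clear where the factor $2\sharp\{I_-\}$ comes from. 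What the paper's route buys is uniform machinery: the same citation of \eqref{UBoundRoot} is reused for the low-density result (Proposition \ref{LDInvPower}) and for the exponential-decay potentials (Proposition \ref{Genexpo}), where the exponent pattern is messier and your hand estimate would have to be reworked case by case; the paper's equivalence with $M_{p_{a,x}}\leq 1$ also records that \eqref{HD} is the best area bound this root-bounding device can give, though, as you rightly note, neither argument is expected to be sharp for the minimization problem itself.
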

\begin{proof}
By usual formula, we have
$$
\mu_{V_{a,x}}(y)=\sum_{i=1}^n \frac{a_i}{\Gamma(x_i)}y^{x_i-1}
$$
and it follows that
\begin{align*}
g_A(y)&:=y^{-1}\mu_{V_{a,x}}\left(\frac{\pi}{yA}\right)+\mu_{V_{a,x}}\left(\frac{\pi y}{A}  \right)=\sum_{i=1}^n \frac{\alpha_i}{A^{x_i-1}}(y^{-x_i}+y^{x_i-1})\\
&=y^{-x_n}\sum_{i=1}^n\frac{\alpha_i}{A^{x_i-1}}(y^{x_n-x_i}+y^{x_n+x_i-1}).
\end{align*}
We set 
$$
p_{a,x}(y):=\sum_{i=1}^n\frac{\alpha_i}{A^{x_i-1}}(y^{x_n-x_i}+y^{x_n+x_i-1}).
$$
We notice that the term of high order is $\frac{\alpha_n}{A^{x_n-1}}y^{2x_n-1}$ with $\alpha_n>0$ and the number of negative coefficients is $2\sharp\{I_-\}$. Thus, by Cauchy's rule \ref{Cauchy} and more precisely generalization \ref{UBoundRoot}, an upper bound on the values of the positive zero of $p_{a,x}$ is
$$
M_{p_{a,x}}:=\max_{i\in I_-}\left\{\left( \frac{2\sharp\{I_-\}|\alpha_i|A^{x_n-x_i}}{\alpha_n} \right)^{\frac{1}{x_n-x_i}},\left(\frac{2\sharp\{I_-\}|\alpha_i|A^{x_n-x_i}}{\alpha_n}  \right)^{\frac{1}{x_n+x_i-1}}\right\}.
$$
because $2x_n-1-(x_n-x_i)=x_n+x_i-1$ and $2x_n-1-(x_n+x_i-1)=x_n-x_i$.\\ 
We notice that 
\begin{align*}
& A\leq \pi \min_{i\in I_-}\left\{\left(\frac{a_n \Gamma(x_i)}{2\sharp\{I_-\}|a_i| \Gamma(x_n)}  \right)^{\frac{1}{x_n-x_i}}  \right\} = \min_{i\in I_-}\left\{\left(\frac{\alpha_n}{2\sharp\{I_-\}|\alpha_i|}  \right)^{\frac{1}{x_n-x_i}}  \right\}\\
&\iff A\leq \left(\frac{\alpha_n}{2\sharp\{I_-\}|\alpha_i|}  \right)^{\frac{1}{x_n-x_i}}, \quad \forall i\in I_- \\
&\iff \frac{2A^{x_n-x_i}\sharp\{I_-\}|\alpha_i|}{\alpha_n}\leq 1, \quad \forall i\in I_-\\
&\iff M_{p_{a,x}}\leq 1
\end{align*}
therefore the assumption implies that the largest zero of $p_{a,x}$ is less than $1$. As $\alpha_n>0$, it follows that $p_{a,x}(y)\geq 0$ for any $y\geq M_{p_{a,x}}$ and then $g_A(y)\geq 0$ for any $y\geq 1$ and by Theorem \ref{THM1}, if \eqref{HD} holds, then $\Lambda_A$ is the unique minimizer of $E_{V_{a,x}}$ among Bravais lattices of fixed area $A$.
\end{proof}

\begin{remark}
This result seems to be natural because for $r$ close to $0$, $V_{a,x}(r)\sim a_n r^{-x_n}$ and for any $A$, $\Lambda_A$ is the unique minimizer of $L\mapsto \zeta_L(2x_n)$ among Bravais lattices of fixed area $A$. However, if we fix $A$, $\|u\|$ and $\|v\|$ can be as larger as we want and the behavior of $V_{a,x}$ can be unusual.\\
Furthermore, in the case 
$$
V_{a,x}(r)=\frac{a_1}{r^{x_1}}+\frac{a_2}{r^{x_2}}+\frac{a_3}{r^{x_3}}
$$ 
where $a_1,a_3$ are positive and $a_2$ negative, our bound \eqref{HD} does not depend on $a_1$. For instance, if $a=(p,-3,1)$ and $x=(2,4,6)$, then, for any $p$,
$$
\pi \min_{i\in I_-}\left\{\left(\frac{a_n \Gamma(x_i)}{\sharp\{I_-\}|a_i| \Gamma(x_n)}  \right)^{\frac{1}{x_n-x_i}}  \right\}=\pi\left(\frac{\Gamma(4)}{6\Gamma(6)}  \right)^{1/2}\approx 0.2867869
$$
which corresponds to triangular lattices of length $\approx 0.5754589$.
\begin{center}
\includegraphics[width=8cm,height=60mm]{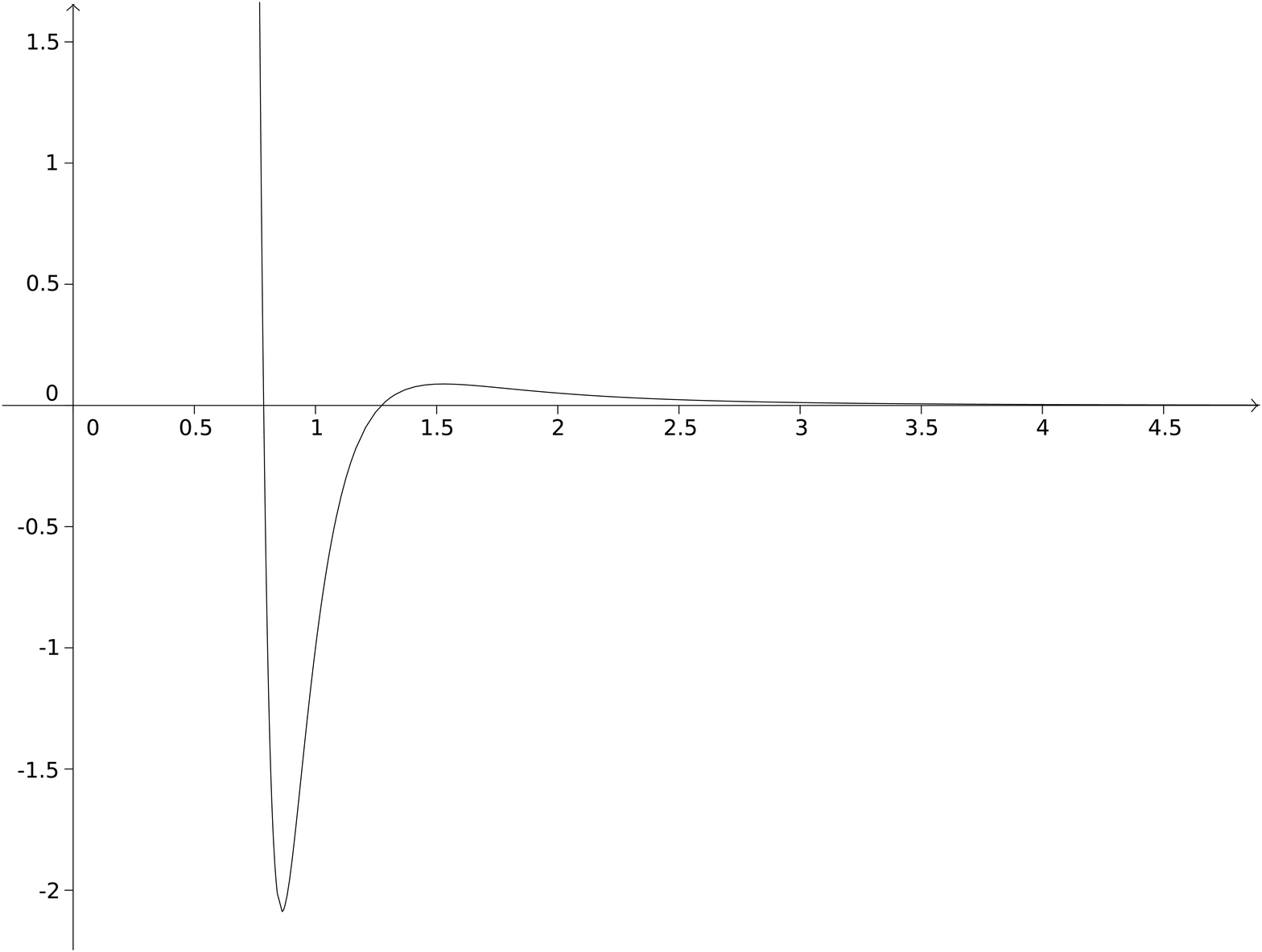}\includegraphics[width=8cm,height=60mm]{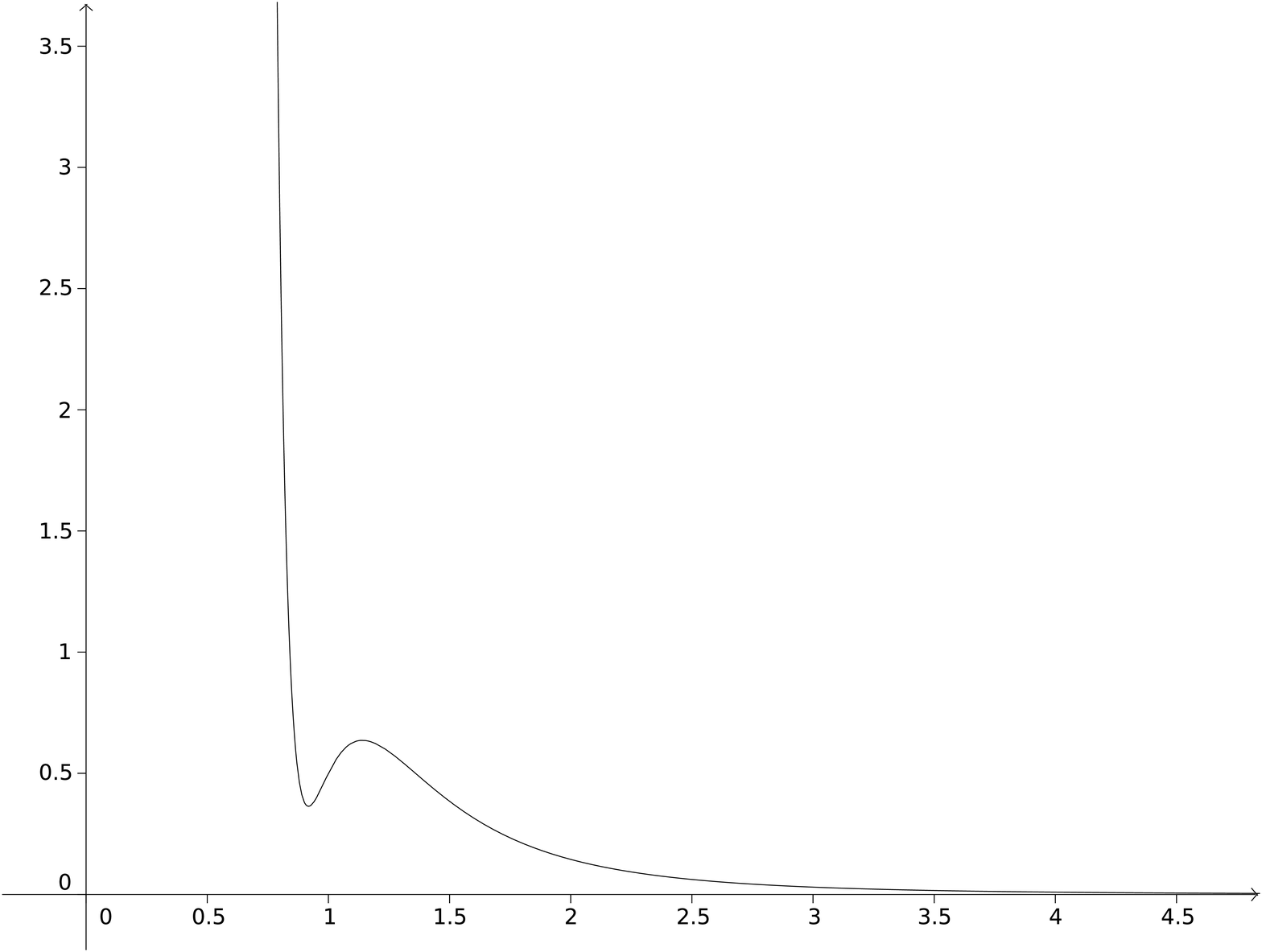}\\
\textbf{Fig. 5 :} Graphs of $\displaystyle V_{a,x}(r^2)=\frac{p}{r^4}-\frac{3}{r^8}+\frac{1}{r^{12}}$ for $p=1$ (on the left) and $p=2.5$ (on the right)
\end{center}
\end{remark}

\begin{example}
For our counterexample \eqref{CE}, i.e. $\displaystyle V(r)=\frac{14}{r^2}-\frac{40}{r^3}+\frac{35}{r^4}$, $a=(14,-40,35)$, $x=(2,3,4)$ and $\sharp\{I_-\}=1$, hence
$$
\pi \min_{i\in I_-}\left\{\left(\frac{a_n \Gamma(x_i)}{\sharp\{I_-\}|a_i| \Gamma(x_n)}  \right)^{\frac{1}{x_n-x_i}}  \right\}=\pi\left(\frac{35\Gamma(3)}{80\Gamma(4)}  \right)^1=\frac{7\pi}{48}\approx 0.4581488,
$$
which corresponds to triangular lattice of length $\approx 0.7273408$. Thus, for $\displaystyle A\leq \frac{7\pi}{48}$, $\Lambda_A$ is the unique minimizer of $E_V$, up to rotation, among Bravais lattices of fixed area $A$.
\end{example}

\subsection{Non-optimality of $\Lambda_A$ at low density}
\begin{prop}  \label{LDInvPower} If $a_1<0$ and 
\begin{equation} \label{LD}
A\geq \inf_{L\neq \Lambda_1 \atop |L|=1} \max_{i\in I_+} \left\{\left( \frac{\sharp\{I_+\}a_i (\zeta_{L}(2x_i)-\zeta_{\Lambda_1}(2x_i))}{|a_1|(\zeta_{L}(2x_1)-\zeta_{\Lambda_1}(2x_1))} \right)^{\frac{1}{x_n-x_i}}  \right\},
\end{equation}
that is to say if $A$ is sufficiently large, then $\Lambda_A$ is not a minimizer of $E_{V_{a,x}}$ among Bravais lattices of fixed area $A$.
\end{prop}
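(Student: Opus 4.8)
The plan is to run the strategy of Proposition~\ref{HDInvPower} in reverse: rather than proving that the comparison kernel is nonnegative, I would exhibit one explicit competitor lattice that strictly beats $\Lambda_A$ as soon as the area is large enough. The assumption $a_1<0$ is exactly what makes this possible, since it turns the longest-range (slowest-decaying) term into an attractive contribution that dominates at low density.

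First I would pass to area-$1$ lattices by scaling. For $|L|=1$ one computes
$$
E_{V_{a,x}}[\sqrt{A}\,L]=\sum_{i=1}^n\frac{a_i}{A^{x_i}}\,\zeta_{L}(2x_i),
$$
so that, writing $\delta_i(L):=\zeta_{L}(2x_i)-\zeta_{\Lambda_1}(2x_i)$,
$$
E_{V_{a,x}}[\sqrt{A}\,L]-E_{V_{a,x}}[\Lambda_A]=\sum_{i=1}^n\frac{a_i}{A^{x_i}}\,\delta_i(L).
$$
By Montgomery's Theorem~\ref{Mgt} and the uniqueness recorded in the remark following it, $\Lambda_1$ is the unique minimizer of $L\mapsto\zeta_L(2x_i)$ at area $1$ for each $x_i>1$; hence $\delta_i(L)>0$ for every $i$ as soon as $L\neq\Lambda_1$. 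The task is therefore to produce one such $L$ making this alternating sum negative.

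The sign is governed by the competition between the index $i=1$ (with $a_1<0$) and the positive indices $i\in I_+$. Factoring out $A^{-x_1}>0$, the difference has the sign of
$$
S_A(L):=a_1\delta_1(L)+\sum_{i\ge2}a_iA^{x_1-x_i}\delta_i(L)\le a_1\delta_1(L)+\sum_{i\in I_+}a_iA^{x_1-x_i}\delta_i(L),
$$
the dropped indices of $I_-$ being $\le0$ and thus only helpful. Requiring each positive summand to stay below the level $|a_1|\delta_1(L)/\sharp\{I_+\}$ and solving for $A$ (note $x_1-x_i<0$) yields, for each $i\in I_+$, the threshold
$$
A\ge\left(\frac{\sharp\{I_+\}\,a_i\,\delta_i(L)}{|a_1|\,\delta_1(L)}\right)^{\frac{1}{x_i-x_1}}.
$$
The same thresholds drop out of Cauchy's rule~\eqref{UBoundRoot} applied, after factoring $A^{-x_n}$, to $-\sum_i a_i\delta_i(L)A^{x_n-x_i}$ regarded as a generalized polynomial in $A$ whose leading coefficient $|a_1|\delta_1(L)>0$ sits at exponent $x_n-x_1$, exactly paralleling Proposition~\ref{HDInvPower} (the denominator being $(x_n-x_1)-(x_n-x_i)=x_i-x_1$). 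Maximizing over $i\in I_+$ and then taking the infimum over $L\neq\Lambda_1$ produces the area bound of the statement; once $A$ exceeds it, one fixes a lattice $L$ meeting every threshold, whence $S_A(L)\le0$, and in fact $S_A(L)<0$ because $a_1\delta_1(L)<0$, so $E_{V_{a,x}}[\sqrt{A}\,L]<E_{V_{a,x}}[\Lambda_A]$ and $\Lambda_A$ is not a minimizer.

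I expect the subtle points to be bookkeeping rather than structural: one must use the \emph{strict} positivity $\delta_i(L)>0$ (which is precisely the uniqueness clause of Montgomery's theorem), and one should argue with a near-optimal competitor at the boundary value $A=\inf_{L}(\cdots)$, where the infimum need not be attained. The genuine limitation, as already noted for the high-density regime and in Section~4.3, is that this term-by-term (equivalently, Cauchy) estimate is only sufficient and far from sharp, so the set of areas for which $\Lambda_A$ fails to minimize is underestimated; pinning down the exact threshold would require the fine analysis of $\Delta_L(y)$ that remains open.
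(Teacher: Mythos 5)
Your argument is correct and is essentially the paper's own proof: the same scaling to unit-area lattices, the same use of Montgomery's Theorem \ref{Mgt} (with uniqueness) to get $\delta_i(L)=\zeta_L(2x_i)-\zeta_{\Lambda_1}(2x_i)>0$ for $L\neq\Lambda_1$, and the same root-bounding step --- your term-by-term splitting, in which each positive term is forced below $|a_1|\delta_1(L)/\sharp\{I_+\}$, is precisely an inline reproof of Cauchy's rule \eqref{UBoundRoot}, which is what the paper invokes as a black box on $p_{a,x,L}(A)=\sum_i a_i(\zeta_{\Lambda_1}(2x_i)-\zeta_{L}(2x_i))A^{x_n-x_i}$.

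The one substantive discrepancy is the exponent, and here you are right and the printed statement is not. Your derivation gives $\frac{1}{x_i-x_1}$, whereas \eqref{LD} (and the paper's proof of Proposition \ref{LDInvPower}) displays $\frac{1}{x_n-x_i}$. The correct application of \eqref{UBoundRoot} is yours: the leading term of $p_{a,x,L}$ sits at exponent $x_n-x_1$ (index $i=1$, with positive coefficient precisely because $a_1<0$), so Cauchy's denominators are $(x_n-x_1)-(x_n-x_i)=x_i-x_1$. The exponent as printed is a misprint: since $a_n>0$ forces $n\in I_+$, it would read $1/0$ at $i=n$, and the paper's own specialization to Lennard-Jones potentials in Proposition \ref{UBound} uses $\frac{1}{x_2-x_1}$, i.e.\ exactly your form. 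So your claim that your thresholds ``produce the area bound of the statement'' is literally inaccurate, but only because the statement itself is misprinted; what you prove is the corrected statement. One last bookkeeping remark: ``$S_A(L)<0$ because $a_1\delta_1(L)<0$'' does not follow when the thresholds are merely met with equality (that yields only $S_A(L)\le 0$, hence only that $\Lambda_A$ is not the \emph{unique} minimizer); but, as you note, for $A$ strictly above the infimum one can choose a competitor whose thresholds are strictly exceeded, making every inequality strict. The paper's proof carries the same slack, so this is a shared, easily repaired blemish rather than a gap in your argument.
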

\begin{proof}
Let $L_A=\sqrt{A} L_1$ be a Bravais lattice of area $A$, with $|L_1|=1$, then
\begin{align*}
E_{V_{a,x}}[\Lambda_A]-E_{V_{a,x}}[L_A]&=\sum_{i=1}^n a_i (\zeta_{\Lambda_A}(2x_i)-\zeta_{L_A}(2x_i))\\
&= \sum_{i=1}^n \frac{a_i}{A^{x_i}} (\zeta_{\Lambda_1}(2x_i)-\zeta_{L_1}(2x_i))\\
&=A^{-x_n}\sum_{i=1}^n a_i (\zeta_{\Lambda_1}(2x_i)-\zeta_{L_1}(2x_i))A^{x_n-x_i}.
\end{align*}
We set 
$$
p_{a,x,L_1}(A):=\sum_{i=1}^n a_i (\zeta_{\Lambda_1}(2x_i)-\zeta_{L_1}(2x_i))A^{x_n-x_i}.
$$
As $a_1<0$ and, for any $s>1$, 
$$
\zeta_{\Lambda_1}(2s)-\zeta_{L_1}(2s)\leq 0,
$$ 
because $\Lambda_1$ is the unique minimizer of $L\mapsto \zeta_L(2s)$ among Bravais lattices of area $1$, we can apply Cauchy's rule \ref{Cauchy}, and more precisely its generalization \eqref{UBoundRoot}. The number of negative coefficient of $p_{a,x,L_1}$ is exactly $\sharp\{I_+\}$ and an upper bound on the values of the positive zero of $p_{a,x,L_1}$ for given $L_1$, is
$$
M_{p_{a,x}}(L_1):=\max_{i\in I_+}\left\{\left( \frac{\sharp\{I_+\}a_i (\zeta_{L_1}(2x_i)-\zeta_{\Lambda_1}(2x_i))}{|a_1|(\zeta_{L_1}(2x_1)-\zeta_{\Lambda_1}(2x_1))} \right)^{\frac{1}{x_n-x_i}}  \right\}.
$$
Hence, for any $L$ such that $|L|=1$, if $A\geq M_{p_{a,x}}(L)$ then $p_{a,x,L}(A)\geq 0$. We conclude that if \eqref{LD} holds, then $E_{V_{a,x}}[\Lambda_A]-E_{V_{a,x}}[L_A]\geq 0$ and $\Lambda_A$ cannot be a minimizer of $E_{V_{a,x}}$ among Bravais lattices of fixed area $A$.
\end{proof}

\begin{remark}
To compute explicitly a lower bound for $A$ such that $\Lambda_A$ is not a minimizer of energy $E_{V_{a,x}}$, we can take $L=\Z^2$ in \eqref{LD} and use equalities \eqref{zetasquare} and \eqref{zetatriang} (see next subsection for computations in Lennard-Jones case).
\end{remark}

\subsection{Lennard-Jones type potentials : proofs of Theorems \ref{THM2}.A and \ref{THM2}.B.2 and numerical results}
Now we want to study more precisely the class of Lennard-Jones type potential. In \cite{Betermin:2014fy} we studied classical $(12-6)$ Lennard-Jones potential $V_{LJ}(r)=r^{-12}-2r^{-6}$, such that its minimizer is $1$, and we proved that the minimizer of its energy among lattices with fixed area $A$ is triangular for small $A$ and it cannot be triangular for large $A$. Here we prove that our method gives interesting results for this kind of potential.\\ \\
Let $1<x_1<x_2$ and $a_1,a_2\in (0,+\infty)$, we define \textbf{Lennard-Jones type potential} by 
$$
V_{a,x}^{LJ}(r):=\frac{a_2}{r^{x_2}}-\frac{a_1}{r^{x_1}}, \quad \forall r>0.
$$
\begin{example}
We can cite various Lennard-Jones type potentials used in molecular simulation or in the study of social aggregation (see \cite{MEKBS}), besides the classical $V_{LJ}$. For instance the $(12-10)$ potential
$$
V(r)=\frac{a_2}{r^{12}}-\frac{a_1}{r^{10}}
$$
describes hydrogen bonds (see \cite{GelinKarplus}).\\
A $(6-4)$ potential 
$$
V(r)=\frac{a_2}{r^{6}}-\frac{a_1}{r^{4}}
$$
is also used for finding energetically favourable regions in protein binding sites (see \cite{Goodford} for details).
\end{example}

\begin{lemma}\label{VarLJ}
Let $1<x_1<x_2$, then function $r\mapsto V_{a,x}^{LJ}(r^2)$ is decreasing on $\displaystyle \left[0, \left( \frac{a_2x_2}{a_1x_1} \right)^{\frac{1}{2(x_2-x_1)}}\right)$ and increasing on $\displaystyle \left[\left(\frac{a_2x_2}{a_1x_1} \right)^{\frac{1}{2(x_2-x_1)}},+\infty\right)$.
\end{lemma}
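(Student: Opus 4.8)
The plan is to reduce the claim to an elementary sign analysis of the derivative of $\phi(r):=V_{a,x}^{LJ}(r^2)=a_2 r^{-2x_2}-a_1 r^{-2x_1}$, which is $C^\infty$ on $(0,+\infty)$. First I would differentiate, obtaining
\begin{equation*}
\phi'(r)=-2a_2 x_2\, r^{-2x_2-1}+2a_1 x_1\, r^{-2x_1-1}.
\end{equation*}
The whole point is to factor out a strictly positive prefactor so that the sign of $\phi'$ is governed by a single monotone expression.

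Next I would factor $\phi'(r)=2r^{-2x_2-1}\bigl[a_1 x_1\, r^{2(x_2-x_1)}-a_2 x_2\bigr]$. Since $r>0$ implies $2r^{-2x_2-1}>0$, the sign of $\phi'(r)$ equals the sign of the bracket $\psi(r):=a_1 x_1\, r^{2(x_2-x_1)}-a_2 x_2$. Here the hypothesis $x_1<x_2$ is used: the exponent $2(x_2-x_1)$ is strictly positive, so $\psi$ is strictly increasing in $r$, and since $a_1,a_2,x_1,x_2>0$ it has a unique positive root. Solving $\psi(r)=0$ gives exactly $r_0=\bigl(\frac{a_2 x_2}{a_1 x_1}\bigr)^{1/(2(x_2-x_1))}$, the threshold in the statement.

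Finally I would read off the monotonicity: $\psi(r)<0$ for $r\in(0,r_0)$ forces $\phi'<0$ there, so $\phi$ is decreasing on $[0,r_0)$, while $\psi(r)>0$ for $r>r_0$ forces $\phi'>0$, so $\phi$ is increasing on $[r_0,+\infty)$. (Continuity of $\phi$ at $r_0$, together with $\phi\to+\infty$ as $r\to 0^+$, lets us include the endpoints in the stated intervals.) I do not anticipate any genuine obstacle: the argument is a one-variable calculus computation, and the only thing to be careful about is recording that the positivity of the exponent $2(x_2-x_1)$—equivalently the assumption $x_1<x_2$—is precisely what guarantees a single sign change of $\psi$, hence the clean decreasing-then-increasing profile.
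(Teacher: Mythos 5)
Your proof is correct and takes essentially the same route as the paper: the paper also just computes $(V_{a,x}^{LJ})'(r)=-2a_2x_2 r^{-2x_2-1}+2a_1x_1r^{-2x_1-1}$ and observes that this is nonnegative exactly when $r\geq \left(\frac{a_2x_2}{a_1x_1}\right)^{\frac{1}{2(x_2-x_1)}}$. Your explicit factoring of the positive prefactor $2r^{-2x_2-1}$ and the monotonicity of the bracket is simply a more detailed write-up of that same one-line sign analysis.
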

\begin{proof}
The first derivative of this function is $r\mapsto -2a_2x_2 r^{-2x_2-1}+2a_1x_1r^{-2x_1-1}$ and 
$$
\displaystyle (V_{a,x}^{LJ})'(r)\geq 0 \iff r\geq \left( \frac{a_2x_2}{a_1x_1} \right)^{\frac{1}{2(x_2-x_1)}}.
$$
\end{proof}
\begin{center}
\includegraphics[width=10cm,height=80mm]{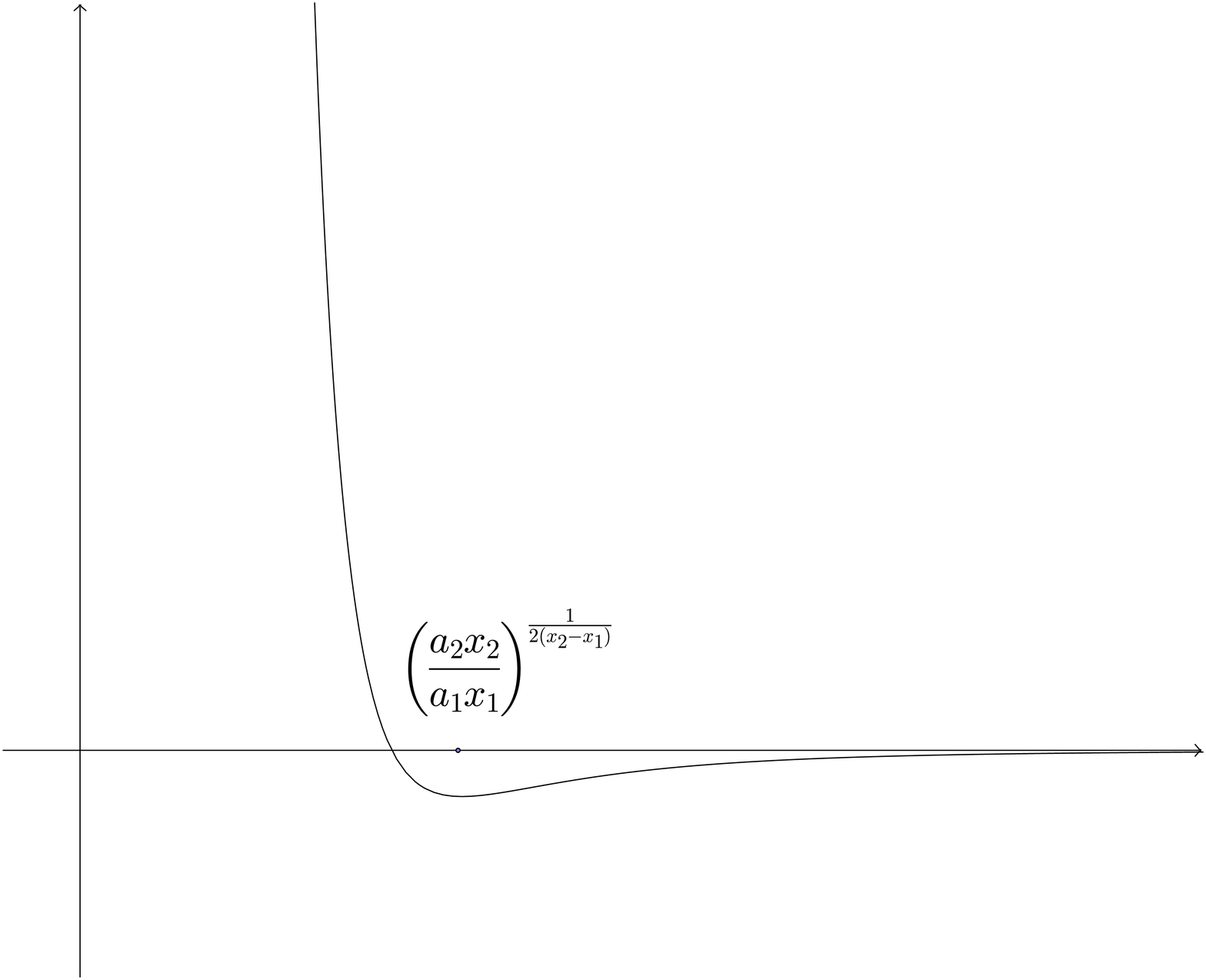}\\
\textbf{Fig. 6 :} Graph of $r\mapsto V_{a,x}^{LJ}(r^2)$
\end{center}
Obviously, the form of potential $V_{a,x}^{LJ}$ implies that minimizer among lattices exists. Indeed, if we fix the area, one of the distance in the lattice cannot be too small otherwise lattice energy goes to infinity (see \cite[Proposition 2.3]{Betermin:2014fy} for details).\\

\noindent As in our previous work \cite{Betermin:2014fy}, the following upper bound for area such that triangular lattice is the unique minimizer for our energy is not optimal but the best for our method. Moreover, its upper bound is better than we apply Cauchy's rule (Proposition \ref{HDInvPower}) but the method is specific for this kind of potential.\\

\begin{prop}\label{LJones} \textbf{(Lennard-Jones at high density)}
If $\displaystyle A\leq \pi\left(\frac{a_2\Gamma(x_1)}{a_1\Gamma(x_2)} \right)^{\frac{1}{x_2-x_1}}$, then $\Lambda_A$ is the unique minimizer of $E_{V_{a,x}^{LJ}}$, up to rotation, among lattices of area $A$ fixed.
\end{prop}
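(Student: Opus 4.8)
The plan is to apply the sufficient condition \eqref{SuffiCond} of Theorem \ref{THM1}, but to exploit the two-term structure of the potential directly rather than going through Cauchy's rule. First I would compute the inverse Laplace transform. Since $\mathcal{L}^{-1}[r^{-x}](y)=y^{x-1}/\Gamma(x)$ for $x>1$, linearity gives
\[
\mu_{V_{a,x}^{LJ}}(y)=\frac{a_2}{\Gamma(x_2)}y^{x_2-1}-\frac{a_1}{\Gamma(x_1)}y^{x_1-1}.
\]
Writing $\alpha_i=a_i\pi^{x_i-1}/\Gamma(x_i)$ as before and substituting into $g_A(y)=y^{-1}\mu_f(\pi/(yA))+\mu_f(\pi y/A)$, the two exponents collect into
\[
g_A(y)=\frac{\alpha_2}{A^{x_2-1}}\left(y^{x_2-1}+y^{-x_2}\right)-\frac{\alpha_1}{A^{x_1-1}}\left(y^{x_1-1}+y^{-x_1}\right).
\]

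Next I would reduce the nonnegativity of $g_A$ on $[1,+\infty)$ to one scalar inequality. Both bracketed factors are strictly positive for $y\geq 1$, so $g_A(y)\geq 0$ is equivalent to
\[
\frac{\alpha_2}{\alpha_1 A^{x_2-x_1}}\geq \psi(y),\qquad \psi(y):=\frac{y^{x_1-1}+y^{-x_1}}{y^{x_2-1}+y^{-x_2}}.
\]
The proposed bound is precisely the area $A_c=\pi\bigl(a_2\Gamma(x_1)/(a_1\Gamma(x_2))\bigr)^{1/(x_2-x_1)}$ at which the left-hand side equals $1$, since a short computation gives $(\alpha_2/\alpha_1)^{1/(x_2-x_1)}=\pi\bigl(a_2\Gamma(x_1)/(a_1\Gamma(x_2))\bigr)^{1/(x_2-x_1)}$. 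Hence for any $A\leq A_c$ the left-hand side is at least $1$, and it remains only to show that $\psi(y)\leq 1$ for all $y\geq 1$, i.e. $\psi(y)\leq\psi(1)$.

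The crux — and the step I expect to carry the whole argument — is this last claim, equivalently $y^{x_1-1}+y^{-x_1}\leq y^{x_2-1}+y^{-x_2}$ for $y\geq 1$. A head-on approach is awkward, because at $y=1$ the difference of the two sides and its first derivative both vanish, so one cannot simply read off a sign. Instead I would fix $y\geq 1$ and treat $\phi_y(s):=y^{s-1}+y^{-s}$ as a function of the \emph{exponent} $s$. Then
\[
\phi_y'(s)=\ln(y)\left(y^{s-1}-y^{-s}\right),
\]
and for $y\geq 1$ one has $\ln y\geq 0$, while $y^{s-1}\geq y^{-s}$ exactly when $s\geq 1/2$. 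Since $1<x_1<x_2$, the map $s\mapsto\phi_y(s)$ is nondecreasing on $[x_1,x_2]$, so $\phi_y(x_1)\leq\phi_y(x_2)$, which is the desired inequality and yields $\psi(y)\leq 1$.

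Putting these together, for $A\leq A_c$ we get $g_A(y)\geq 0$ for every $y\geq 1$, so the sufficient condition \eqref{SuffiCond} holds and Theorem \ref{THM1} gives that $\Lambda_A$ is the unique minimizer of $E_{V_{a,x}^{LJ}}$, up to rotation, among Bravais lattices of area $A$. The gain over the Cauchy-rule threshold of Proposition \ref{HDInvPower} comes precisely from using the exponent-monotonicity of $\phi_y$ — keeping each $\pm$ pair together instead of bounding the roots coefficient by coefficient — which is what produces the sharper bound $A_c$.
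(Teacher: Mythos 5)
Your proof is correct, and it takes a genuinely different route from the paper's. The paper starts the same way (same $\mu_f$, same $g_A$), but then factors $g_A(y)=\frac{y^{-x_2}}{A^{x_1-1}}\tilde g_A(y)$ and runs a two-stage calculus argument in the variable $y$: it differentiates $\tilde g_A$, factors the derivative as $y^{x_2-x_1-1}u_A(y)$, differentiates once more to locate the unique critical point $\bar y$ of $u_A$, checks that the hypothesis on $A$ forces $\bar y<1$, and then chains back ($u_A$ increasing on $[1,+\infty)$ and $u_A(1)\geq 0$, hence $\tilde g_A$ increasing, and $g_A(1)\geq 0$, hence $g_A\geq 0$). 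You instead divide by the positive quantity $\frac{\alpha_1}{A^{x_1-1}}\left(y^{x_2-1}+y^{-x_2}\right)$ to reduce everything to $\sup_{y\geq 1}\psi(y)\leq 1$, which you settle by monotonicity in the \emph{exponent}: for fixed $y\geq 1$, $\phi_y(s)=y^{s-1}+y^{-s}$ is nondecreasing in $s$ on $[1/2,+\infty)$, so $\phi_y(x_1)\leq \phi_y(x_2)$. All your computations check out (the inverse Laplace transform, the identity $(\alpha_2/\alpha_1)^{1/(x_2-x_1)}=\pi\bigl(a_2\Gamma(x_1)/(a_1\Gamma(x_2))\bigr)^{1/(x_2-x_1)}$, the sign analysis of $\phi_y'$), and the appeal to Theorem \ref{THM1} is exactly as in the paper. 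Your route buys two things: it is shorter, replacing nested differentiation by one clean reusable inequality -- note that $\phi_y(s)=y^{-1}\left(y^{s}+y^{1-s}\right)$ is the same symmetric combination $\alpha^s+\alpha^{1-s}$ appearing in Riemann's trick \eqref{Riemann}, monotone to the right of the symmetry point $s=1/2$; and it makes the optimality of the threshold transparent, since $\psi(1)=1$ shows that $A\leq A_c$ is not only sufficient but \emph{necessary} for the pointwise condition \eqref{SuffiCond}, a fact the paper must record in a separate remark (via $g_A(1)=0$ at the critical area and monotonicity of $A\mapsto g_A(1)$). What the paper's route buys is mainly uniformity of style with Propositions \ref{SCGen}, \ref{HDInvPower} and \ref{Genexpo}. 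Your closing observation is also accurate: Proposition \ref{HDInvPower} applied here has $\lambda=2\sharp\{I_-\}=2$ negative coefficients and yields the smaller bound $\pi\bigl(a_2\Gamma(x_1)/(2a_1\Gamma(x_2))\bigr)^{1/(x_2-x_1)}$, so keeping each $\pm$ pair together is exactly what removes the spurious factor $2^{-1/(x_2-x_1)}$.
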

\begin{proof}
We have, by proof of Theorem \ref{HDInvPower}, for any $y\geq 1$, 
\begin{align*}
g_A(y)&=\frac{\alpha_2}{A^{x_2-1}}\left(y^{-x_2}+y^{x_2-1}  \right)-\frac{\alpha_1}{A^{x_1-1}}\left(y^{-x_1}+y^{x_1-1}   \right)=\frac{y^{-x_2}}{A^{x_1-1}}\tilde{g}_A(y)
\end{align*}
where $\displaystyle \tilde{g}_A(y)=\frac{\alpha_2}{A^{x_2-x_1}}y^{2x_2-1}-\alpha_1 y^{x_2+x_1-1}-\alpha_1 y^{x_2-x_1}+\frac{\alpha_2}{A^{x_2-x_1}}$. We compute
\begin{align*}
\tilde{g}'_A(y)=\frac{(2x_2-1)\alpha_2}{A^{x_2-x_1}}y^{2x_2-2}-\alpha_1(x_2+x_1-1)y^{x_2+x_1-2}-\alpha_1(x_2-x_1)y^{x_2-x_1-1}=y^{x_2-x_1-1}u_A(y)
\end{align*}
where $\displaystyle u_A(y)=\frac{(2x_2-1)\alpha_2}{A^{x_2-x_1}}y^{x_2+x_1-1}-\alpha_1(x_2+x_1-1)y^{2x_1-1}-\alpha_1(x_2-x_1)$. Moreover
$$
u'_A(r)=(x_2+x_1-1)y^{2x_1-2}\left[ \frac{(2x_2-1)\alpha_2}{A^{x_2-x_1}}y^{x_2-x_1}-\alpha_1(2x_1-1) \right].
$$
We have $\displaystyle u'_A(\bar{y})=0 \iff \bar{y}=\left( \frac{\alpha_1(2x_1-1)A^{x_2-x_1}}{\alpha_2(2x_2-1)} \right)^{\frac{1}{x_2-x_1}}=\frac{A}{\pi}\left( \frac{a_1\Gamma(x_2)}{a_2\Gamma(x_1)} \right)^{\frac{1}{x_2-x_1}}\left( \frac{2x_1-1}{2x_2-1} \right)^{\frac{1}{x_2-x_1}}$.\\
If $\displaystyle A\leq \pi\left(\frac{a_2\Gamma(x_1)}{a_1\Gamma(x_2)} \right)^{\frac{1}{x_2-x_1}}$ then $\bar{y}<1$ and $u'_A(y)>0$ on $[1;+\infty)$, i.e. $u_A$ is an increasing function on $[1;+\infty)$. Furthermore we have
\begin{align*}
u_A(1)=(2x_2-1)\left[\frac{\alpha_2}{A^{x_2-x_1}}-\alpha_1  \right]=(2x_2-1)\left[\frac{a_2 \pi^{x_2-1}}{A^{x_2-x_1}\Gamma(x_2)}-\frac{a_1\pi^{x_2-1}}{\Gamma(x_1)}   \right]\geq 0
\end{align*}
and $\tilde{g}'_A$ is positive on $[1,+\infty)$. Thus $\tilde{g}_A$ is increasing on $[1,+\infty)$ and, always by assumption,
$$
g_A(1)=2\left(  \frac{\alpha_2}{A^{x_2-x_1}}-\alpha_1 \right)\geq 0
$$
hence $g_A(y)\geq 0$ on $[1,+\infty)$ and by Theorem \ref{THM1}, $\Lambda_A$ is the unique minimizer of $E_{V_{a,x}^{LJ}}$, up to rotation, among Bravais lattices of fixed area $A$.
\end{proof}

\begin{remark}
This bound is optimal for our method because, $g_A(1)=0$ for $\displaystyle A= \pi\left(\frac{a_2\Gamma(x_1)}{a_1\Gamma(x_2)} \right)^{\frac{1}{x_2-x_1}}$ and $A\mapsto g_A(1)$ is a decreasing function.
\end{remark}
\begin{example}
For $\displaystyle V(r)=\frac{1}{r^6}-\frac{2}{r^3}$ which corresponds to Lennard-Jones energy in our case in \cite{Betermin:2014fy}, we find
$$
\pi\left(\frac{a_2\Gamma(x_1)}{a_1\Gamma(x_2)} \right)^{\frac{1}{x_2-x_1}}=\pi\left( \frac{\Gamma(3)}{2\Gamma(6)} \right)^{1/3}=\frac{\pi}{120^{1/3}}.\\
$$
\end{example}

\noindent Now we prove that for small parameters, the global minimizer among all Bravais lattices - without area constraint - of our energy is unique and triangular. We follow some ideas from our previous paper \cite{Betermin:2014fy} which cannot be apply for classical Lennard-Jones potential $V_{LJ}(r)=r^{-12}-2r^{-6}$.\\

\begin{lemma}\label{UBArea} \textbf{(Upper bound for global minimizer's area)}
Let $L_{a,x}$ a global minimizer of $E_{V_{a,x}^{LJ}}$ among all Bravais lattices, then 
$$
\displaystyle |L_{a,x}|\leq \left(  \frac{a_2x_2}{a_1x_1}\right)^{\frac{1}{x_2-x_1}}.
$$
\end{lemma}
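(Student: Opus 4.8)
The plan is to exploit Lemma \ref{VarLJ}. Writing $r_0 := \left(\frac{a_2 x_2}{a_1 x_1}\right)^{\frac{1}{2(x_2-x_1)}}$, the function $r\mapsto V_{a,x}^{LJ}(r^2)$ strictly decreases on $[0,r_0)$ and strictly increases on $(r_0,+\infty)$, and $r_0^2 = \left(\frac{a_2 x_2}{a_1 x_1}\right)^{\frac{1}{x_2-x_1}}$ is exactly the claimed bound. By Engel's theorem I may fix a reduced basis $L_{a,x} = \Z u \oplus \Z v$ with $\|u\|\le\|v\|$ and $(\widehat{u,v})\in[\pi/3,\pi/2]$, so that $u$ is a shortest nonzero vector and, writing $v_\perp$ for the component of $v$ orthogonal to $u$, one has $|L_{a,x}| = \|u\|\,\|v_\perp\|$. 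It therefore suffices to prove the two bounds $\|u\|\le r_0$ and $\|v_\perp\|\le r_0$, after which $|L_{a,x}| = \|u\|\,\|v_\perp\| \le r_0^2$ follows at once.

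For the first bound I argue by contradiction via a uniform contraction. If $\|u\|>r_0$, choose $t\in\left(r_0/\|u\|,\,1\right)$ and replace $L_{a,x}$ by $tL_{a,x}$. Every nonzero $p\in L_{a,x}$ satisfies $\|p\|\ge\|u\|>r_0$, and $\|tp\|=t\|p\|$ lies strictly between $r_0$ and $\|p\|$; since $V_{a,x}^{LJ}(r^2)$ is strictly increasing on $(r_0,+\infty)$, each summand $V_{a,x}^{LJ}(\|tp\|^2)$ is strictly smaller than $V_{a,x}^{LJ}(\|p\|^2)$. Comparing the convergent series term by term gives $E_{V_{a,x}^{LJ}}[tL_{a,x}]<E_{V_{a,x}^{LJ}}[L_{a,x}]$, contradicting minimality. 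Hence $\|u\|\le r_0$.

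The second bound is the delicate step and uses an anisotropic contraction in the direction orthogonal to $u$. Choosing coordinates with $u$ along the first axis, write $v=(v_1,v_2)$ with $|v_2|=\|v_\perp\|$, and for $t\in(0,1]$ set $L_t := \Z u \oplus \Z(v_1,\,t v_2)$. The image of $p=mu+nv$ is $(m\|u\|+nv_1,\,ntv_2)$, whose squared norm $(m\|u\|+nv_1)^2 + n^2 t^2 v_2^2$ is unchanged when $n=0$ and strictly decreasing in $t$ when $n\ne0$; moreover for $n\ne0$ its distance to the line $\R u$ equals $|n|\,t\|v_\perp\|\ge t\|v_\perp\|$, so its norm is at least $t\|v_\perp\|$. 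If $\|v_\perp\|>r_0$, pick $t\in\left(r_0/\|v_\perp\|,\,1\right)$: then every moved vector ($n\ne0$) keeps norm $>r_0$ while its norm strictly decreases, so by Lemma \ref{VarLJ} each corresponding summand strictly decreases, the $n=0$ terms being unchanged. This gives $E_{V_{a,x}^{LJ}}[L_t]<E_{V_{a,x}^{LJ}}[L_{a,x}]$, contradicting minimality; thus $\|v_\perp\|\le r_0$.

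The main obstacle is precisely this perpendicular contraction: one must ensure that none of the displaced lattice points crosses below $r_0$ into the decreasing branch of the potential, which is why the argument is carried out in terms of the inter-line distance $\|v_\perp\|$ (the quantity appearing in $|L_{a,x}|=\|u\|\,\|v_\perp\|$) rather than $\|v\|$ itself. Combining the two bounds yields $|L_{a,x}|=\|u\|\,\|v_\perp\|\le r_0^2 = \left(\frac{a_2 x_2}{a_1 x_1}\right)^{\frac{1}{x_2-x_1}}$, as desired. This mirrors the contraction argument used for the Yukawa potential in STEP 3 above and in \cite[Proposition 4.1, ii)]{Betermin:2014fy}.
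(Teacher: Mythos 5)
Your proof is correct, and its skeleton coincides with the paper's: the paper proves Lemma \ref{UBArea} simply by the words ``Same argument of STEP 3 in the proof of Theorem \ref{THM2}.B.1'', i.e.\ a homothety rules out $\|u\|>r_0$, a second contraction rules out a long second basis vector, and the area bound follows. The difference --- and it is a genuine improvement --- lies in that second contraction. The paper contracts ``$\R v_{a,x}$'', i.e.\ keeps $u$ and replaces $v$ by $tv$; this does \emph{not} decrease every summand: for $p=mu+nv$ one computes $\frac{d}{dt}\|mu+ntv\|^2\big|_{t=1}=2n\left(m\,u\cdot v+n\|v\|^2\right)$, which is negative for $n=1$ and $m$ sufficiently negative whenever $u\cdot v>0$, so those terms of the energy \emph{increase} under the contraction and the term-by-term comparison breaks down (similarly, a contraction along $\R v$ fixing $(\R v)^{\perp}$ shrinks the multiples of $u$, which after the first step may already lie in the region where the potential is decreasing, again increasing their contribution). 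Your choice --- contracting only the component orthogonal to $u$ --- is exactly the fix that makes the argument airtight: the $n=0$ terms are untouched, and every moved point keeps norm at least $t\|v_\perp\|>r_0$ while its norm strictly decreases, so by Lemma \ref{VarLJ} every moved term strictly decreases. Consistently with this, you bound $\|v_\perp\|$ rather than $\|v\|$, replacing the paper's inequality $|L_{a,x}|\le\|u\|\,\|v\|$ by the identity $|L_{a,x}|=\|u\|\,\|v_\perp\|$, which is the quantity the orthogonal contraction naturally controls. In short: same strategy, but your execution closes a real gap in the paper's two-line sketch.
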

\begin{proof}
Same argument of STEP 3 in the proof of Theorem \ref{THM2}.B.1.\\
\end{proof}

\noindent Thus we can prove \textbf{Theorem \ref{THM2}.B.2} :
\begin{proof}
Let $L_{a,x}$ be a global minimizer of $E_{V_{a,x}^{LJ}}$. We have
$$
h(x_2)\leq h(x_1)\iff \pi\left( \frac{a_2\Gamma(x_1)}{a_1\Gamma(x_2)} \right)^{\frac{1}{x_2-x_1}}\geq \left(  \frac{a_2x_2}{a_1x_1}\right)^{\frac{1}{x_2-x_1}}
$$
 then by Lemma \ref{UBArea} we get
 $$
 |L_{a,x}|\leq \pi\left( \frac{a_2\Gamma(x_1)}{a_1\Gamma(x_2)} \right)^{\frac{1}{x_2-x_1}}
 $$
 and by Proposition \ref{LJones}, the minimizer among lattices of area $|L_{a,x}|$ fixed is unique and triangular, hence the global minimizer of the energy is unique and triangular. Furthermore, let 
$$
f(r):=E_{V_{a,x}^{LJ}}[r\Lambda_1]=a_2\zeta_{\Lambda_1}(2x_2)r^{-2x_2}-a_1\zeta_{\Lambda_1}(2x_1)r^{-2x_1}
$$
then we have $f'(r)=-2a_2x_2\zeta_{\Lambda_1}(2x_2)r^{-2x_2-1}+2a_1x_1\zeta_{\Lambda_1}(2x_1)r^{-2x_1-1}$ and
$$
f'(r)\geq 0 \iff r\geq \left(\frac{a_2x_2\zeta_{\Lambda_1}(2x_2)}{a_1x_1\zeta_{\Lambda_1}(2x_1)}  \right)^{\frac{1}{2(x_2-x_1)}}.
$$
Hence the minimizer of $E_{V_{a,x}^{LJ}}$ among triangular lattices is $\Lambda_{|L_{a,x}|}$ with 
$\displaystyle |L_{a,x}|=\left(\frac{a_2x_2\zeta_{\Lambda_1}(2x_2)}{a_1x_1\zeta_{\Lambda_1}(2x_1)}  \right)^{\frac{1}{x_2-x_1}}$.
\end{proof}

\begin{remark} For an easy numerical computation of  global minimizer's area, we can use formula \eqref{zetatriang} to obtain
$$
|L_{a,x}|=\frac{1}{2\sqrt{3}}\left( \frac{a_2x_2\zeta(x_2)(\zeta(x_2,1/3)-\zeta(x_2,2/3))}{a_1x_1\zeta(x_1)(\zeta(x_1,1/3)-\zeta(x_1,2/3))} \right)^{\frac{1}{x_2-x_1}}.
$$

\end{remark}

\begin{remark}
We can apply the previous Theorem to $x=(2,3)$, and $r\mapsto V_{a,x}^{LJ}(r^2)$ is a $(6-4)$ potential. Moreover we can choose $a_1$ and $a_2$ such that the well is as deep as we want.
\begin{center}
\includegraphics[width=10cm,height=80mm]{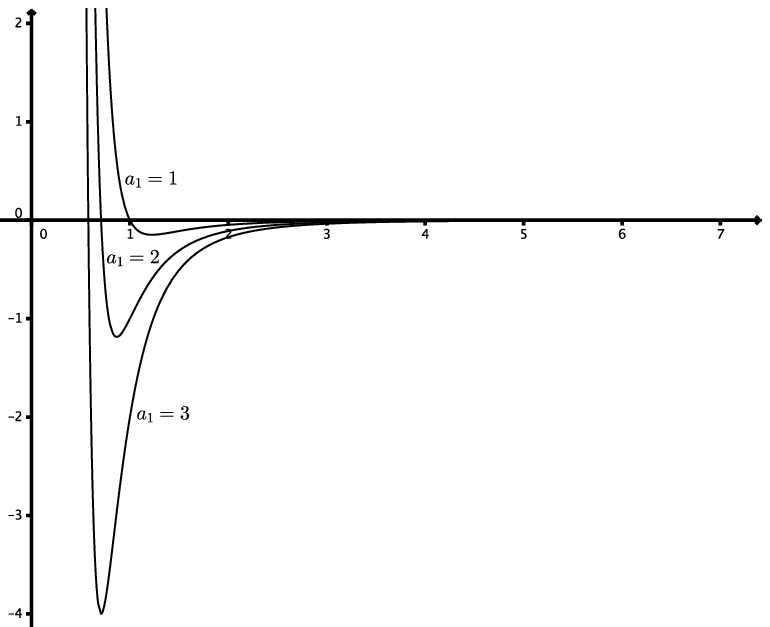}\\
\textbf{Fig. 7 :} Graphs of $\displaystyle r\mapsto \frac{1}{r^6}-\frac{a_1}{r^4}$ for $a_1\in \{1,2,3\}$.
\end{center}
\end{remark}
\noindent Now we explain a method to choose $x_1,x_2$ in order to have a triangular global minimizer and we give several numerical values.

\begin{lemma} \textbf{(Variations of h)}
Function $h$ is a decreasing function on $[1,\psi^{-1}(\log \pi)-1)$ and increasing on $[\psi^{-1}(\log \pi)-1,+\infty)$ where $\displaystyle \psi(x)=\frac{\Gamma'(x)}{\Gamma(x)}$ is the digamma function defined on $(0,+\infty)$.
\end{lemma}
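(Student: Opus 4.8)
The plan is to reduce the whole question to the sign of the logarithmic derivative of $h$ and then exploit the monotonicity of the digamma function $\psi$.

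First I would simplify $h$ using the functional equation $t\Gamma(t)=\Gamma(t+1)$, so that $h(t)=\pi^{-t}\Gamma(t)\,t=\Gamma(t+1)\pi^{-t}$, which is smooth and \emph{strictly positive} on $(0,+\infty)$. Taking logarithms gives $\log h(t)=\log\Gamma(t+1)-t\log\pi$, and differentiating yields
$$
\frac{h'(t)}{h(t)}=\psi(t+1)-\log\pi,
$$
since $\frac{d}{dt}\log\Gamma(t+1)=\psi(t+1)$. Because $h(t)>0$, the sign of $h'(t)$ coincides with the sign of $\psi(t+1)-\log\pi$.

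Next I would invoke the strict monotonicity of $\psi$ on $(0,+\infty)$: from the standard series $\psi'(x)=\sum_{n\geq 0}(x+n)^{-2}>0$ one sees that $\psi$ is a strictly increasing bijection onto its range, so that $\psi^{-1}$ is well defined near $\log\pi$. Consequently $\psi(t+1)-\log\pi$ vanishes at exactly one point, namely where $t+1=\psi^{-1}(\log\pi)$, i.e. $t=\psi^{-1}(\log\pi)-1$; it is negative to the left of this point and positive to the right. This gives $h'<0$ on $\bigl(0,\psi^{-1}(\log\pi)-1\bigr)$ and $h'>0$ afterwards, so $h$ decreases and then increases, and restricting to $[1,+\infty)$ yields exactly the two claimed intervals.

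The only point requiring a little care is that the critical value $\psi^{-1}(\log\pi)-1$ genuinely lies in $[1,+\infty)$, so that the stated partition of $[1,+\infty)$ is meaningful; this amounts to checking $\psi(2)=1-\gamma<\log\pi$, which holds since $1-\gamma\approx 0.423<1.145\approx\log\pi$. Apart from this numerical verification the argument is entirely routine: the heart of the matter is the clean identity $h'/h=\psi(\,\cdot+1\,)-\log\pi$ combined with the monotonicity of $\psi$, so I expect no serious obstacle here.
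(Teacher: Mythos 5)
Your proof is correct and follows essentially the same route as the paper: both reduce the sign of $h'(t)$ to the sign of $\psi(t+1)-\log\pi$ and conclude from the monotonicity of $\psi$; you simply apply the recurrence $t\Gamma(t)=\Gamma(t+1)$ before differentiating (logarithmic derivative), whereas the paper differentiates the product first and then invokes the equivalent identity $\psi(t)+\frac{1}{t}=\psi(1+t)$. Your extra verification that the critical point $\psi^{-1}(\log\pi)-1$ indeed lies in $[1,+\infty)$ (via $\psi(2)=1-\gamma<\log\pi$) is a sound detail that the paper only confirms numerically in a subsequent remark.
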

\begin{proof}
We have $h'(t)=\Gamma(t)+t\Gamma(t)-t\log\pi\Gamma(t)$ and
$$
h'(t)\geq 0 \iff \psi(t)+\frac{1}{t}\geq \log\pi.
$$
We use the famous identity $\psi(t)+\frac{1}{t}=\psi(1+t)$ for any $t>0$ and we obtain, because $\psi$ is increasing on $(0,+\infty)$,
$$
h'(t)\geq 0 \iff t\geq \psi^{-1}(\log\pi)-1.
$$
\end{proof}
\begin{remark}
We compute $\psi^{-1}(\log\pi)-1\approx 2.6284732$ and we define $M\neq 1$ such that $h(M)=h(1)$. We have $M\approx 4.6022909$. Thus, if we want apply the previous theorem, it is clear that $x_1<\psi^{-1}(\log\pi)-1$ and $x_2<M$. Moreover, if we choose $x_1\in (1,\psi^{-1}(\log\pi)-1)$, we can choose $x_2\in(x_1,M_{x_1})$ where $M_{x_1}\neq x_1$ is such that $h(M_{x_1})=h(x_1)$.\\
\begin{center}
\includegraphics[width=12cm,height=80mm]{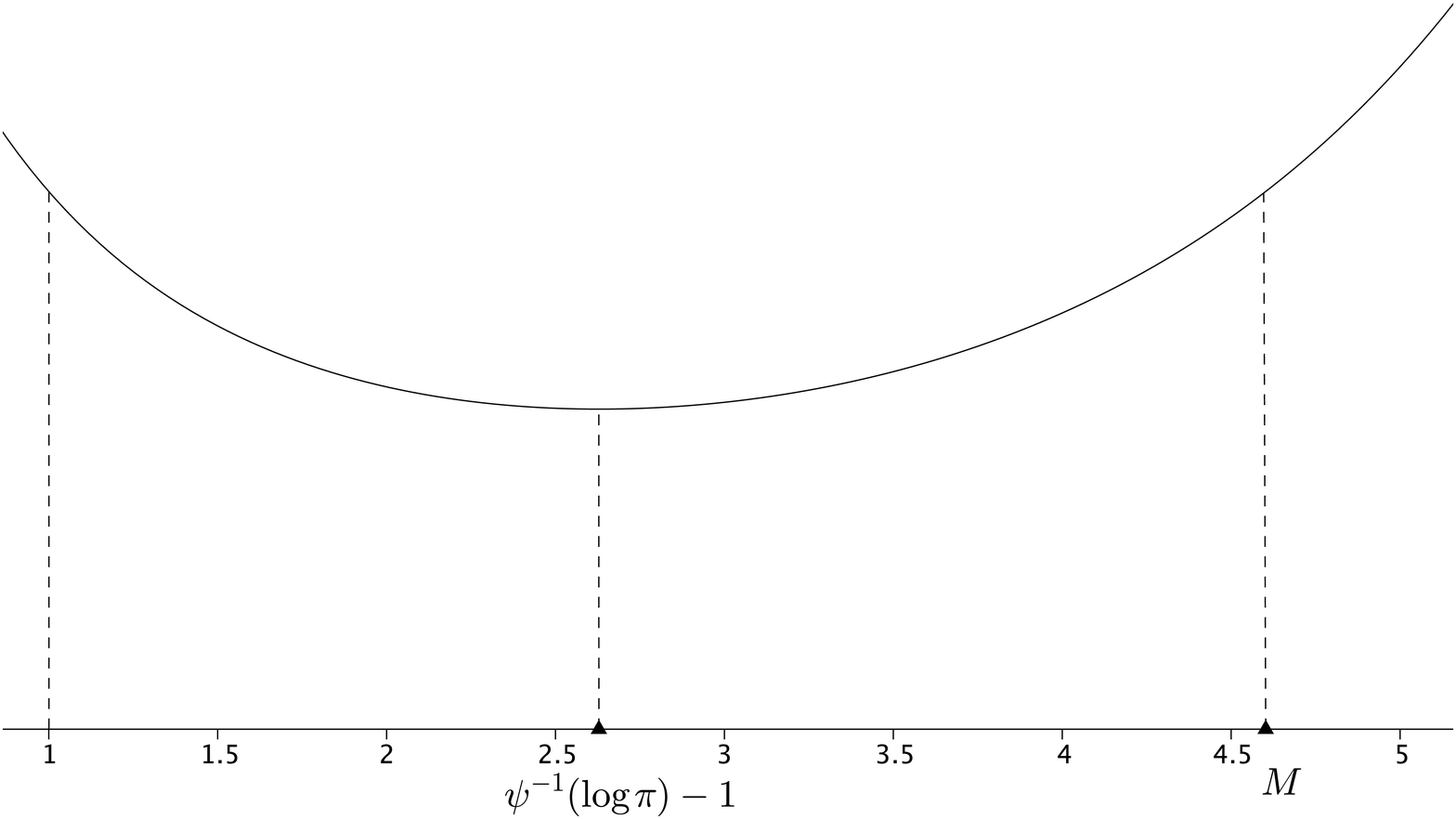}\\
\textbf{Fig. 8 :} Graph of $h$
\end{center}
Unfortunately we can only choose $x_2$ and $x_1$ such that $1<x_1<x_2<4.6022909$ and Lennard Jones case ($x_2=6$ and $x_1=3$) is not covered by our Theorem \ref{THM2}.B.2.\\ \\
We compute, for different values of $(x_1,x_2)$ satisfying $h(x_2)< h(x_1)$ and for $a=(1,1)$ :
\begin{itemize}
\item the value of the minimizer of $y\mapsto V_{a,x}^{LJ}(y^2)$, i.e. $\displaystyle y_{a,x}^{\min}:=\left( \frac{x_2}{x_1} \right)^{\frac{1}{2(x_2-x_1)}}$,
\item the value of the length of triangular global minimizer, i.e.  $L_{a,x}$, i.e. $r_{a,x}:=\sqrt{\frac{2|L_{a,x}|}{\sqrt{3}}}$,
\item the density\footnote{Here we exceptionally give values of densities -- and not areas -- more used in molecular simulations.} of $L_{a,x}$, i.e. $d_{a,x}:=|L_{a,x}|^{-1}$.
\end{itemize}
\begin{center}
\begin{tabular}[c]{|c| c| c| c| }
\hline
\backslashbox{$x_2$}{$x_1$} & $1.1$  & $1.5$  & $2$  \\
\hline $1.5$ & $(1.47,0.64,2.78)$ & \cellcolor{gray!40}  & \cellcolor{gray!40}  \\
\hline $2$ & $(1.39,0.80,1.82)$ & $(1.33,0.95,1.27)$ & \cellcolor{gray!40} \\
\hline $2.5$ & $(1.34,0.90,1.45)$ & $(1.29,1.02,1.10)$ & $(1.25,1.10,0.96)$ \\
\hline $3$ & $(1.30,0.95,1.27)$ & $(1.26,1.06,1.03)$ & $(1.22,1.11,0.93)$ \\
\hline $3.5$ & $(1.27,0.99,1.19)$ & $(1.24,1.08,1.00)$ & \cellcolor{gray!40}\\
\hline $4$ & $(1.25,1.01,1.14)$ & \cellcolor{gray!40}& \cellcolor{gray!40}\\
\hline
\end{tabular} \\
\vspace{2mm}
\textbf{Table 1. } Some values of $(y_{a,x}^{\min},r_{a,x},d_{a,x})$
\end{center}
\end{remark}

\noindent Obviously, we have non-optimality of $\Lambda_A$ if $A$ is sufficiently large, given by Proposition \ref{LDInvPower} :

\begin{prop}\label{UBound} \textbf{(Lennard-Jones at low density)}
Triangular lattice $\Lambda_A$ is a minimizer of $E_{V_{a,x}^{LJ}}$ among lattices of area $A$ fixed if and only if
$$
A\leq \inf_{|L|=1,L\neq \Lambda_1}\left( \frac{a_2(\zeta_L(2x_2)-\zeta_{\Lambda_1}(2x_2))}{a_1(\zeta_L(2x_1)-\zeta_{\Lambda_1}(2x_1))} \right)^{\frac{1}{x_2-x_1}},
$$
i.e. if $A$ is sufficiently large, $\Lambda_A$ is not a minimizer of $E_{V_{a,x}^{LJ}}$ among lattices of fixed area $A$.
\end{prop}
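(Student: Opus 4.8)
The plan is to reduce the whole statement to a scaling identity and a single monotone sign analysis, following the proof of Proposition \ref{LDInvPower} but exploiting that $V_{a,x}^{LJ}$ has exactly one repulsive and one attractive term, so that the one-sided Cauchy bound used there becomes a genuine equivalence. First I would normalize the competitor: any Bravais lattice $L$ of area $A$ can be written $L=\sqrt{A}\,L_1$ with $|L_1|=1$, and the homogeneity $\zeta_L(s)=A^{-s/2}\zeta_{L_1}(s)$ gives $E_{V_{a,x}^{LJ}}[\sqrt{A}\,L_1]=a_2A^{-x_2}\zeta_{L_1}(2x_2)-a_1A^{-x_1}\zeta_{L_1}(2x_1)$, and likewise for $\Lambda_A=\sqrt{A}\,\Lambda_1$. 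Subtracting yields
$$
E_{V_{a,x}^{LJ}}[\sqrt{A}\,L_1]-E_{V_{a,x}^{LJ}}[\Lambda_A]=a_2A^{-x_2}\bigl(\zeta_{L_1}(2x_2)-\zeta_{\Lambda_1}(2x_2)\bigr)-a_1A^{-x_1}\bigl(\zeta_{L_1}(2x_1)-\zeta_{\Lambda_1}(2x_1)\bigr).
$$

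The key input is Montgomery's Theorem \ref{Mgt}, through the remark following it: since $2x_1>2$ and $2x_2>2$, the lattice $\Lambda_1$ is the \emph{unique} minimizer of $L_1\mapsto\zeta_{L_1}(2x_i)$ among area-one lattices, so both differences $\zeta_{L_1}(2x_i)-\zeta_{\Lambda_1}(2x_i)$ are strictly positive for every $L_1\neq\Lambda_1$. With $a_1,a_2>0$ and $x_2-x_1>0$, the energy difference above is therefore nonnegative if and only if $a_2A^{-x_2}\bigl(\zeta_{L_1}(2x_2)-\zeta_{\Lambda_1}(2x_2)\bigr)\geq a_1A^{-x_1}\bigl(\zeta_{L_1}(2x_1)-\zeta_{\Lambda_1}(2x_1)\bigr)$; dividing by the positive quantity $a_1A^{-x_2}\bigl(\zeta_{L_1}(2x_1)-\zeta_{\Lambda_1}(2x_1)\bigr)$ and taking the $(x_2-x_1)$-th root, this is equivalent to
$$
A\leq\left(\frac{a_2(\zeta_{L_1}(2x_2)-\zeta_{\Lambda_1}(2x_2))}{a_1(\zeta_{L_1}(2x_1)-\zeta_{\Lambda_1}(2x_1))}\right)^{\frac{1}{x_2-x_1}}.
$$

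To finish, I would observe that $\Lambda_A$ is a minimizer among area-$A$ lattices precisely when $E_{V_{a,x}^{LJ}}[\sqrt{A}\,L_1]\geq E_{V_{a,x}^{LJ}}[\Lambda_A]$ holds for every competitor $L_1\neq\Lambda_1$, hence exactly when the displayed bound holds for all such $L_1$, i.e. when $A\leq\inf_{|L_1|=1,\,L_1\neq\Lambda_1}(\cdots)$, which is the asserted equivalence. The $(\cdots)^{1/(x_2-x_1)}\to 0$ behaviour as the infimum is approached then gives the final clause: for $A$ larger than this infimum, some $L_1$ produces a lattice $\sqrt{A}\,L_1$ with strictly smaller energy than $\Lambda_A$.

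There is no real analytic obstacle here; the only point deserving a word of care is the boundary value $A=\inf$. If the infimum is attained at some $L_1^\star$ then $E_{V_{a,x}^{LJ}}[\sqrt{A}\,L_1^\star]=E_{V_{a,x}^{LJ}}[\Lambda_A]$ and $\Lambda_A$ is still a (possibly non-unique) minimizer, while if it is not attained the inequality is strict for every competitor; in either case the ``$\leq$'' in the statement is correct, and conversely any $A$ strictly above the infimum is excluded. The entire content is thus the passage from the one-sided estimate of Proposition \ref{LDInvPower} to a sharp equivalence, made possible by the exact two-term sign analysis.
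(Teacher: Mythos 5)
Your proof is correct, and its computational core --- the scaling identity $\zeta_{\sqrt{A}L_1}(2s)=A^{-s}\zeta_{L_1}(2s)$, Montgomery's Theorem \ref{Mgt} giving strict positivity of $\zeta_{L_1}(2x_i)-\zeta_{\Lambda_1}(2x_i)$ for every $L_1\neq\Lambda_1$ of area one, and then a two-term sign analysis in $A$ --- is exactly the computation underlying the paper's argument. The difference is one of packaging. The paper's proof is a single line: it applies Proposition \ref{LDInvPower} with $\sharp\{I_+\}=1$, i.e. it reuses the Cauchy-rule bound, which in the two-term case is the exact positive root of $p_{a,x,L_1}$ rather than merely an upper bound for it. Strictly read, that citation delivers only one direction of the claimed equivalence (non-optimality of $\Lambda_A$ once $A$ reaches the infimum), and it even concludes ``not a minimizer'' from the non-strict inequality $E_{V_{a,x}^{LJ}}[\Lambda_A]-E_{V_{a,x}^{LJ}}[L_A]\geq 0$, which is imprecise at the boundary value of $A$. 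Your write-up supplies what the paper leaves implicit: the converse direction ($A$ below the infimum forces $E_{V_{a,x}^{LJ}}[\sqrt{A}\,L_1]\geq E_{V_{a,x}^{LJ}}[\Lambda_A]$ for every competitor, because the two-term inequality is an equivalence, not just an implication), and the correct treatment of the case where $A$ equals the infimum, where $\Lambda_A$ remains a possibly non-unique minimizer, consistent with the ``$\leq$'' in the statement. So your argument is essentially the paper's, made self-contained and sharpened at the boundary; what the citation-based proof buys in exchange is brevity and the explicit link to the general $n$-term Proposition \ref{LDInvPower}.
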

\begin{proof}
We apply directly Proposition \ref{LDInvPower} with $\sharp\{I_+\}=1$ and
$$
\inf_{L\neq \Lambda_1 \atop |L|=1} \max_{i\in I_+} \left\{\left( \frac{\sharp\{I_+\}a_i (\zeta_{L}(2x_i)-\zeta_{\Lambda_1}(2x_i))}{|a_1|(\zeta_{L}(2x_1)-\zeta_{\Lambda_1}(2x_1))} \right)^{\frac{1}{x_n-x_i}}  \right\}=\inf_{|L|=1,L\neq \Lambda_1}\left( \frac{a_2(\zeta_L(2x_2)-\zeta_{\Lambda_1}(2x_2))}{a_1(\zeta_L(2x_1)-\zeta_{\Lambda_1}(2x_1))} \right)^{\frac{1}{x_2-x_1}}.
$$
\end{proof}

\begin{remark}
More precisely we can found an explicit computable bound (but not optimal) if we take $L=\Z^2$ and use \eqref{zetasquare} and \eqref{zetatriang}. We give here densities $d_0$ such that for any $0<d<d_0$, $E_{V_{a,x}^{LJ}}[d^{-1/2}\Z^2]\leq E_{V_{a,x}^{LJ}}[d^{-1/2}\Lambda_1]$, i.e. square lattice have less energy than triangular lattice, with $a_1=a_2=1$.\\
\begin{center}
\begin{tabular}[c]{|c| c| c| c| c|c|c|c|c|c|c|}
\hline
\backslashbox{$x_2$}{$x_1$} & $1.1$ & $1.5$ & $2$ & $3$ & $4$ & $5$ & $6$ & $7$ & $8$ & $9$ \\
\hline $1.5$ & $0.05$ &\cellcolor{gray!40} & \cellcolor{gray!40}&\cellcolor{gray!40} & \cellcolor{gray!40}&\cellcolor{gray!40} &\cellcolor{gray!40} & \cellcolor{gray!40}& \cellcolor{gray!40}& \cellcolor{gray!40} \\
\hline $2$ & $0.14$ & $0.31$ &\cellcolor{gray!40} & \cellcolor{gray!40}&\cellcolor{gray!40} &\cellcolor{gray!40} &\cellcolor{gray!40} & \cellcolor{gray!40}&\cellcolor{gray!40} & \cellcolor{gray!40} \\
\hline $2.5$ & $0.21$ & $0.37$ & $0.43$ &\cellcolor{gray!40} & \cellcolor{gray!40}& \cellcolor{gray!40} &\cellcolor{gray!40} & \cellcolor{gray!40} & \cellcolor{gray!40}& \cellcolor{gray!40} \\
\hline $3$ & $0.27$ & $0.41$ & $0.47$ &\cellcolor{gray!40} &\cellcolor{gray!40} &\cellcolor{gray!40} &\cellcolor{gray!40} &\cellcolor{gray!40} & \cellcolor{gray!40}&\cellcolor{gray!40} \\
\hline $3.5$ & $0.31$ & $0.45$ & $0.50$ & $0.58$ &\cellcolor{gray!40} &\cellcolor{gray!40} &\cellcolor{gray!40} &\cellcolor{gray!40} &\cellcolor{gray!40} & \cellcolor{gray!40}\\
\hline $4$ & $0.35$ & $0.48$ & $0.53$ & $0.61$ &\cellcolor{gray!40} &\cellcolor{gray!40} &\cellcolor{gray!40} &\cellcolor{gray!40} &\cellcolor{gray!40} &\cellcolor{gray!40}  \\
\hline $5$ & $0.42$ & $0.53$ & $0.58$ & $0.65$ & $0.71$ &\cellcolor{gray!40} &\cellcolor{gray!40} &\cellcolor{gray!40} &\cellcolor{gray!40} &\cellcolor{gray!40}  \\
\hline $6$ & $0.47$ & $0.58$ & $0.63$ & $0.69$ & $0.74$ & $0.78$ &\cellcolor{gray!40} & \cellcolor{gray!40}& \cellcolor{gray!40}& \cellcolor{gray!40} \\
\hline $7$ & $0.52$ & $0.62$ & $0.66$ & $0.72$ & $0.77$ & $0.80$ &  $0.83$ &\cellcolor{gray!40} & \cellcolor{gray!40}&\cellcolor{gray!40}  \\
\hline $8$ & $0.56$ & $0.65$ & $0.69$ & $0.75$ & $0.79$ & $0.82$ & $0.84$ & $0.86$ & \cellcolor{gray!40}&\cellcolor{gray!40}  \\
\hline $9$ & $0.60$ & $0.68$ & $0.72$ & $0.77$ & $0.81$ & $0.84$ & $0.86$ & $0.88$ & $0.89$ & \cellcolor{gray!40}\\
\hline $10$ & $0.62$ & $0.70$ & $0.74$ & $0.79$ & $0.83$ & $0.85$ & $0.87$ & $0.89$ & $0.90$ & $0.91$ \\
\hline 
\end{tabular}\\ \vspace{2mm}
\textbf{Table 2.} Non-optimal critical densities for non-optimality of triangular lattice.
\end{center}
\end{remark}

\section{Potentials with exponential decay}

\subsection{ Definition and prove of Theorem \ref{THM1}.A for $f_{a,x,b,t}$}

\begin{defi}
Let $a=(a_1,...,a_n)\in (\R^*)^n$ with $a_n>0$, $x=(x_1,...,x_n)$ be such that $3/2<x_1<x_2<...<x_n$, $b=(b_1,...,b_m)\in (\R^*)^m$ and $t=(t_1,...,t_m)\in (\R^*)^m$, we define
$$
f_{a,x,b,t}(r):=\sum_{i=1}^n a_i r^{-x_i}+\sum_{j=1}^m b_j e^{-t_j\sqrt{r}}.
$$
We set $I_-:=\{i;a_i<0\}$ and $\displaystyle B:=\sum_{j=1}^m |b_j| t_j$.
\end{defi}

\begin{remark}
As explained in \cite{Koishi}, Fumi and Tosi proposed in \cite{FumiTosi} a potential for interaction between ions $Na^+$ and $Cl^-$ defined by
$$
V(r)=\frac{a_1}{r}+b_1e^{-t_1 r}-\frac{a_2}{r^6}-\frac{a_3}{r^8}.
$$
Obviously, potential $\displaystyle r\mapsto \frac{a_1}{r}$ is not admissible but the form of $V$ is close to $f_{a,x,b,t}$.
\end{remark}
\noindent Let us prove Theorem \ref{THM2}.A for this potential.

\begin{prop}\label{Genexpo} If we have
\begin{equation}\label{expgen}
A\leq \min \left\{\pi \min_{i\in I_-}\left\{ \left(\frac{a_n\Gamma(x_i)}{(2\sharp\{I_-\}+2)|a_i|\Gamma(x_n)}  \right)^{\frac{1}{x_n-x_i}} \right\}, \left( \frac{a_n \pi^{x_n+1}}{(\sharp\{ I_-\}+1)B \Gamma(x_n)} \right)^{\frac{1}{x_n+1/2}}  \right\}
\end{equation}
then $\Lambda_A$ is the unique minimizer of $E_{f_{a,x,b,t}}$, up to rotation, among Bravais lattices of fixed area $A$.
\end{prop}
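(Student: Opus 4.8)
The strategy is the same as for Propositions \ref{HDInvPower} and \ref{LJones}: by Theorem \ref{THM1} it is enough to show that \eqref{expgen} forces
$$
g_A(y):=y^{-1}\mu_{f_{a,x,b,t}}\!\left(\tfrac{\pi}{yA}\right)+\mu_{f_{a,x,b,t}}\!\left(\tfrac{\pi y}{A}\right)\geq 0\quad\text{for a.e. }y\geq 1.
$$
First I would compute the inverse Laplace transform. The power part gives, as in Proposition \ref{HDInvPower}, the terms $\sum_i \tfrac{a_i}{\Gamma(x_i)}y^{x_i-1}$, while for the exponential part I would use the classical formula $\mathcal{L}^{-1}[e^{-t_j\sqrt r}](y)=\tfrac{t_j}{2\sqrt{\pi}}y^{-3/2}e^{-t_j^2/(4y)}$, which is legitimate because admissibility of $f_{a,x,b,t}$ forces $t_j>0$. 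Writing $\mu_j(y):=\tfrac{b_jt_j}{2\sqrt\pi}y^{-3/2}e^{-t_j^2/(4y)}$, this yields
$$
\mu_{f_{a,x,b,t}}(y)=\sum_{i=1}^n\frac{a_i}{\Gamma(x_i)}y^{x_i-1}+\sum_{j=1}^m\mu_j(y).
$$

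Next I would insert this into $g_A$ and factor out $y^{-x_n}$. The power-law contribution is exactly $y^{-x_n}p_{a,x}(y)$, with $p_{a,x}$ the generalized polynomial of the proof of Proposition \ref{HDInvPower}, whose leading monomial is $\tfrac{\alpha_n}{A^{x_n-1}}y^{2x_n-1}$ with $\alpha_n>0$. The exponential contribution $\sum_j\big(y^{-1}\mu_j(\tfrac{\pi}{yA})+\mu_j(\tfrac{\pi y}{A})\big)$ is not polynomial, so I would bound it in absolute value using $e^{-t_j^2/(4\cdot)}\leq 1$; a direct computation gives
$$
\left|\sum_{j=1}^m\left(y^{-1}\mu_j\!\left(\tfrac{\pi}{yA}\right)+\mu_j\!\left(\tfrac{\pi y}{A}\right)\right)\right|\leq \frac{A^{3/2}B}{2\pi^2}\left(y^{1/2}+y^{-3/2}\right),\qquad B=\sum_{j=1}^m|b_j|t_j.
$$

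Multiplying by $y^{x_n}$, it follows that $y^{x_n}g_A(y)\geq \tilde p(y)$, where $\tilde p$ is obtained from $p_{a,x}$ by subtracting $\tfrac{A^{3/2}B}{2\pi^2}(y^{x_n+1/2}+y^{x_n-3/2})$. This $\tilde p$ is again a generalized polynomial with positive leading coefficient $\alpha_n/A^{x_n-1}$ at degree $2x_n-1$ and with exactly $2\sharp\{I_-\}+2$ negative coefficients: the $2\sharp\{I_-\}$ coming from the monomials $y^{x_n-x_i},y^{x_n+x_i-1}$ with $i\in I_-$, plus the two coming from the exponential bound. The crucial point, and the reason for the hypothesis $x_1>3/2$, is that $x_n+1/2<2x_n-1$ exactly when $x_n>3/2$, so the two exponential exponents stay strictly below the degree of the leading term and the denominators $\nu_n-\nu_i$ in Cauchy's rule never vanish. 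I would then apply the generalized Cauchy bound \eqref{UBoundRoot} to $\tilde p$ with $\lambda=2\sharp\{I_-\}+2$.

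It remains to check that \eqref{expgen} is equivalent to $M_{\tilde p}\leq 1$. The two negative power monomials attached to each $i\in I_-$ both produce the base $\tfrac{\lambda|\alpha_i|}{\alpha_n}A^{x_n-x_i}$ in Cauchy's formula, so requiring it to be $\leq 1$ reduces, after substituting $\alpha_i=a_i\pi^{x_i-1}/\Gamma(x_i)$, to the first term of the minimum in \eqref{expgen}. The two exponential monomials produce the base $\tfrac{\lambda A^{x_n+1/2}B}{2\pi^2\alpha_n}$; requiring it to be $\leq 1$ gives, using $\lambda=2(\sharp\{I_-\}+1)$ so that the factor $2$ cancels the $\tfrac12$ in $\tfrac1{2\pi^2}$, exactly the second term. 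Hence $M_{\tilde p}\leq 1$, and since $\alpha_n>0$ this forces $\tilde p\geq 0$ on $[1,+\infty)$, whence $g_A\geq 0$ there and Theorem \ref{THM1} gives that $\Lambda_A$ is the unique minimizer up to rotation. The main obstacle is the clean control of the exponential part: extracting the sharp exponents $y^{1/2},y^{-3/2}$ from the kernel $y^{-3/2}e^{-t_j^2/(4y)}$ and verifying that $x_1>3/2$ is precisely what keeps these terms subdominant relative to $y^{2x_n-1}$.
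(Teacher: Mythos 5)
Your proof is correct and follows essentially the same route as the paper's: the same inverse Laplace transform, the same bound $e^{-t_j^2/(4y)}\leq 1$ producing the $\frac{BA^{3/2}}{2\pi^2}\left(y^{1/2}+y^{-3/2}\right)$ error term, the same factorization by $y^{-x_n}$, and the same application of the generalized Cauchy bound \eqref{UBoundRoot} with $\lambda=2\sharp\{I_-\}+2$, shown to be equivalent to hypothesis \eqref{expgen}. Your added observations (that admissibility forces $t_j>0$, and that $x_i>3/2$ keeps the exponents $x_n+1/2$, $x_n-3/2$ distinct from and subordinate to the leading exponent $2x_n-1$) are exactly the points the paper uses, stated slightly more explicitly.
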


\begin{proof}
As we have, by classical formula, for $a>0$,
$$
\mathcal{L}^{-1}[e^{-\sqrt{a.}}](y)=\frac{\sqrt{a}}{2\sqrt{\pi}}y^{-3/2}e^{-\displaystyle \frac{a}{4y}},
$$
taking $a=t_j^2$ for any $1\leq j\leq m$ and setting $\displaystyle \alpha_i=\frac{a_i\pi^{x_i-1}}{\Gamma(x_i)}$, we obtain, for any $y>0$,
\begin{align*}
\mu_{f_{a,x,b,t}}(y) &=\sum_{i=1}^n \alpha_i y^{x_i-1}+\sum_{j=1}^m \frac{b_j t_j}{2\sqrt{\pi}}y^{-3/2}e^{-\frac{t_j^2}{4y}}\geq \sum_{i=1}^n \alpha_i y^{x_i-1}-\frac{B}{2\sqrt{\pi}}y^{-3/2}
\end{align*}
and it follows that
\begin{align*}
g_A(y)&:=y^{-1}\mu_{f_{a,x,b,t}}\left(\frac{\pi}{yA}\right)+\mu_{f_{a,x,b,t}}\left(\frac{\pi y}{A}  \right)\\
&\geq \sum_{i=1}^n \frac{\alpha_i }{A^{x_i-1}}(y^{-x_i}+y^{x_i-1})-\frac{BA^{3/2}}{2\pi^2}\sqrt{y}-\frac{BA^{3/2}}{2\pi^2 y^{3/2}}\\
&=y^{-x_n}\left[\sum_{i=1}^n \frac{\alpha_i }{A^{x_i-1}}(y^{x_n-x_i}+y^{x_n+x_i-1})-\frac{BA^{3/2}}{2\pi^2}y^{x_n+1/2}-\frac{BA^{3/2}}{2\pi^2}y^{x_n-3/2}  \right].
\end{align*}
We set
$$
p_{a,x,b,t}(y):=\sum_{i=1}^n \frac{\alpha_i}{A^{x_i-1}}(y^{x_n-x_i}+y^{x_n+x_i-1})-\frac{BA^{3/2}}{2\pi^2}y^{x_n+1/2}-\frac{BA^{3/2}}{2\pi^2}y^{x_n-3/2} 
$$
and we notice that, for any $1\leq i\leq n$,
\begin{align*}
& x_n-x_i\neq x_n+1/2 \\
&x_n-x_i\neq x_n-3/2\\
&x_n+x_i-1 \neq x_n+1/2 \\
& x_n+x_i-1 \neq x_n-3/2
\end{align*}
because $x_i>3/2$. Hence the higher order term is $ \frac{\alpha_n}{A^{x_n-1}}y^{2x_n-1}$ with $\alpha_n>0$, and the number of negative terms is $2\sharp\{I_- \}+2$. Thus, by Cauchy's rule \eqref{UBoundRoot}, an upper bound on the values of the positive zero of $p_{a,x,b,t}$ is
\begin{align*}
M_{p_{a,x,b,t}}:=\max & \left\{\max_{i\in I_-}\left\{ \left( \frac{(2\sharp\{I_-\}+2)|\alpha_i|A^{x_n-x_i}}{\alpha_n} \right)^{\frac{1}{x_n+x_i-1}} \right\} ,\max_{i\in I_-}\left\{ \left( \frac{(2\sharp\{I_-\}+2)|\alpha_i|A^{x_n-x_i}}{\alpha_n} \right)^{\frac{1}{x_n-x_i}} \right\}, \right. \\
&\left. \left( \frac{B(2\sharp\{I_-\}+2)A^{x_n+1/2}}{2\pi^2\alpha_n} \right)^{\frac{1}{x_n-3/2}},\left( \frac{B(2\sharp\{I_-\}+2)A^{x_n+1/2}}{2\pi^2\alpha_n} \right)^{\frac{1}{x_n+1/2}}  \right\}.
\end{align*}
Now we have
\begin{align*}
& A\leq \min \left\{\pi \min_{i\in I_-}\left\{ \left(\frac{a_n\Gamma(x_i)}{(2\sharp\{I_-\}+2)|a_i|\Gamma(x_n)}  \right)^{\frac{1}{x_n-x_i}} \right\}, \left( \frac{2a_n \pi^{x_n+1}}{(2\sharp\{ I_-\}+2)B \Gamma(x_n)} \right)^{\frac{1}{x_n+1/2}}  \right\} \\
&\iff \forall i\in I_-, A\leq\pi\left(\frac{a_n\Gamma(x_i)}{(2\sharp\{I_-\}+2)|a_i|\Gamma(x_n)}  \right)^{\frac{1}{x_n-x_i}}\quad \text{ and } \quad A\leq \left( \frac{2a_n \pi^{x_n+1}}{(2\sharp\{ I_-\}+2)B \Gamma(x_n)} \right)^{\frac{1}{x_n+1/2}}\\
&\iff \forall i\in I_-, \frac{(2\sharp\{I_-\}+2)|\alpha_i|A^{x_n-x_i}}{\alpha_n}\leq 1 \quad \text{ and } \quad \frac{B(2\sharp\{I_-\}+2)A^{x_n+1/2}}{2\pi^2\alpha_n}\leq 1\\
&\iff M_{p_{a,x,b,t}}\leq 1,
\end{align*}
therefore if $y\geq 1\geq  M_{p_{a,x,b,t}}$ then $p_{a,x,b,t}(y)\geq 0$ hence $g_A(y)\geq 0$ and by Theorem \ref{THM1}, $\Lambda_A$ is the unique minimizer of $E_{f_{a,x,b,t}}$ among Bravais lattices of fixed area $A$.
\end{proof}

\begin{corollary}
If $I_-=\emptyset$ and
$$
A\leq \left( \frac{a_n \pi^{x_n+1}}{B\Gamma(x_n)} \right)^{\frac{1}{x_n+1/2}}
$$
then $\Lambda_A$ is the unique minimizer of $E_{f_{a,x,b,t}}$ among Bravais lattices of fixed area $A$.
\end{corollary}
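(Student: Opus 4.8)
The plan is to read this off as the degenerate case $I_-=\emptyset$ of Proposition~\ref{Genexpo}, after checking that the argument there survives when the power-law part contributes no negative coefficient. First I would revisit the inverse Laplace transform computed in that proof: since $a_i>0$ for every $i$, one has $\alpha_i=a_i\pi^{x_i-1}/\Gamma(x_i)>0$ for all $i$, so the only negativity in the lower bound
$$
\mu_{f_{a,x,b,t}}(y)\geq \sum_{i=1}^n\alpha_i y^{x_i-1}-\frac{B}{2\sqrt{\pi}}y^{-3/2}
$$
comes from the exponential-decay terms. Carrying this through $g_A$ exactly as in Proposition~\ref{Genexpo} produces the auxiliary function $p_{a,x,b,t}$ whose \emph{only} two negative monomials are $-\frac{BA^{3/2}}{2\pi^2}y^{x_n+1/2}$ and $-\frac{BA^{3/2}}{2\pi^2}y^{x_n-3/2}$, while the leading term $\frac{\alpha_n}{A^{x_n-1}}y^{2x_n-1}$ stays strictly positive.

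Next I would apply Cauchy's rule \eqref{UBoundRoot} with $\lambda=2$ negative coefficients. The two relevant exponent gaps are $(2x_n-1)-(x_n+1/2)=x_n-3/2$ and $(2x_n-1)-(x_n-3/2)=x_n+1/2$, both positive, so the upper bound on the positive zeros of $p_{a,x,b,t}$ is
$$
M=\max\left\{\left(\frac{BA^{x_n+1/2}}{\pi^2\alpha_n}\right)^{\frac{1}{x_n-3/2}},\ \left(\frac{BA^{x_n+1/2}}{\pi^2\alpha_n}\right)^{\frac{1}{x_n+1/2}}\right\},
$$
after simplifying the common ratio $\tfrac{2\cdot\frac{BA^{3/2}}{2\pi^2}}{\alpha_n/A^{x_n-1}}=\frac{BA^{x_n+1/2}}{\pi^2\alpha_n}$. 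Substituting $\alpha_n=a_n\pi^{x_n-1}/\Gamma(x_n)$ turns this ratio into $\frac{B\Gamma(x_n)A^{x_n+1/2}}{a_n\pi^{x_n+1}}$, so, because both exponents are positive, $M\leq 1$ holds if and only if that ratio is $\leq 1$, i.e. precisely when $A\leq\bigl(a_n\pi^{x_n+1}/(B\Gamma(x_n))\bigr)^{1/(x_n+1/2)}$. Then $p_{a,x,b,t}(y)\geq 0$ for $y\geq 1$, hence $g_A(y)\geq 0$ on $[1,+\infty)$, and Theorem~\ref{THM1} yields uniqueness of $\Lambda_A$.

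Equivalently, and more cheaply, I would simply set $\sharp\{I_-\}=0$ in the two-term minimum of \eqref{expgen}: the first term is a minimum over the empty index set $I_-$ and is therefore vacuous (interpreted as $+\infty$), while the second collapses to $\bigl(a_n\pi^{x_n+1}/((0+1)B\Gamma(x_n))\bigr)^{1/(x_n+1/2)}$, exactly the stated threshold. I expect essentially no obstacle here; the single point deserving a line of care is confirming that the Cauchy-rule step of Proposition~\ref{Genexpo} does not secretly require $I_-\neq\emptyset$—it does not, since the two exponential monomials already furnish $\lambda=2\geq 1$ negative coefficients, so the degenerate substitution is legitimate.
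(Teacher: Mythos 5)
Your proposal is correct and matches the paper's (implicit) argument: the corollary is exactly Proposition~\ref{Genexpo} specialized to $I_-=\emptyset$, where the first term of the minimum in \eqref{expgen} is vacuous and the second reduces to the stated bound. Your extra check that Cauchy's rule still applies—since the two exponential monomials supply $\lambda=2\geq 1$ negative coefficients even when the power-law part is wholly positive—is the right point of care and confirms the specialization is legitimate.
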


\begin{remark}
Obviously, for any $A_0$, there exists $B$ sufficiently small such that for any $A\in(0,A_0]$, $\Lambda_{A_0}$ is the unique minimizer of our energy among Bravais lattices of fixed area $A$. We will study a simple particular case in next subsection in order to illustrate this fact. Furthermore we skipped the completely monotonic case but in the next following part we will give explicit condition for complete monotonicity in a simple case (see Proposition \ref{expCM}).
\end{remark}

\subsection{Example : opposite of Buckingham type potential}

In this part we study opposite of Buckingham type potential. Indeed, we cannot study Buckingham potential
$$
V_B(r)=a_1 e^{-\alpha r}-\frac{a_2}{r^6}-\frac{a_3}{r^8}
$$
because $\lim_{r\to 0} V_B(r)=-\infty$ and $\lim_{r\to +\infty} V_B(r)=0$ and it is sufficient to do $\|u\|\to 0$ in order to have $E_{V_B}[L]\to -\infty$. Hence we choose to treat simple general approximation of its opposite, well-adapted for our problem of minimization among Bravais lattices. Moreover we simplify notations in order to have only two parameters :

\begin{defi} For $a=(a_1,a_2)\in (0,+\infty)^2$ and for $x=(x_1,x_2)\in (0,+\infty)\times (3/2,+\infty)$, we define, for $r>0$, 
$$
f_{a,x}(r)=a_2 r^{-x_2}-a_1e^{-x_1\sqrt{r}}.
$$
\end{defi}

\begin{lemma} \textbf{(Variations of potential $r\mapsto f_{a,x}(r^2)$)}
We have the following two cases :
\begin{enumerate}
\item if $\displaystyle (2x_2+1)\left[\ln\left( \frac{2x_2+1}{x_1} \right)-1\right]\leq \ln\left(  \frac{2a_2x_2}{a_1x_1}\right)$, then $r\mapsto f_{a,x}(r^2)$ is decreasing on $(0,+\infty)$;
\item if $\displaystyle (2x_2+1)\left[\ln\left( \frac{2x_2+1}{x_1} \right)-1\right]> \ln\left(  \frac{2a_2x_2}{a_1x_1}\right)$ then there exists $r_m,r_M \in (0,+\infty)$ such that $r_m<\frac{2x_2+1}{x_1}<r_M$ and $r\mapsto f_{a,x}(r^2)$ is decreasing on intervals $(0,r_m)$ and $(r_M,+\infty)$ and increasing on $(r_m,r_M)$.
\end{enumerate}
\end{lemma}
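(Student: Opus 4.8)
The plan is to reduce the claim to the analysis of a single unimodal auxiliary function. First I would write, for $r>0$,
\[
F(r):=f_{a,x}(r^2)=a_2 r^{-2x_2}-a_1 e^{-x_1 r},
\]
so that $F'(r)=-2a_2x_2 r^{-2x_2-1}+a_1x_1 e^{-x_1 r}$. Since every quantity involved is positive, the sign of $F'$ is governed by a logarithmic inequality: $F'(r)\ge 0$ if and only if $a_1x_1 r^{2x_2+1}e^{-x_1 r}\ge 2a_2x_2$, which after taking logarithms reads
\[
G(r)\ge c,\qquad\text{where } G(r):=(2x_2+1)\ln r-x_1 r \text{ and } c:=\ln\!\left(\frac{2a_2x_2}{a_1x_1}\right).
\]

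The second step is to describe $G$ completely. Its derivative is $G'(r)=\frac{2x_2+1}{r}-x_1$, which vanishes only at $r^\ast:=\frac{2x_2+1}{x_1}$, is positive on $(0,r^\ast)$ and negative on $(r^\ast,+\infty)$; hence $G$ is strictly increasing then strictly decreasing, with $G(r)\to-\infty$ both as $r\to 0^+$ and as $r\to+\infty$. Its maximal value is
\[
G(r^\ast)=(2x_2+1)\left[\ln\!\left(\frac{2x_2+1}{x_1}\right)-1\right],
\]
which is exactly the left-hand side appearing in the dichotomy of the statement.

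It then remains to compare $G(r^\ast)$ with $c$. If $G(r^\ast)\le c$, then $G(r)\le c$ for all $r>0$, so $F'\le 0$ throughout (vanishing at most at the single point $r^\ast$) and $F$ is decreasing on $(0,+\infty)$, giving case~1. If instead $G(r^\ast)>c$, the intermediate value theorem applied to the strictly monotone branches of $G$ on $(0,r^\ast)$ and on $(r^\ast,+\infty)$ produces exactly two solutions $r_m<r^\ast<r_M$ of $G(r)=c$; by the monotonicity of $G$ one has $G<c$ on $(0,r_m)$, $G>c$ on $(r_m,r_M)$ and $G<c$ on $(r_M,+\infty)$. Translating back through the equivalence $F'\ge 0\iff G\ge c$ yields $F'<0$ on $(0,r_m)$, $F'>0$ on $(r_m,r_M)$ and $F'<0$ on $(r_M,+\infty)$, which is precisely case~2, together with the announced location $r_m<\frac{2x_2+1}{x_1}<r_M$.

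There is no genuine obstacle here: the only points requiring care are the positivity needed to take logarithms (guaranteed since $a_1,a_2,x_1,x_2>0$) and the bookkeeping of the sign of $F'$ on each subinterval; the uniqueness of the two crossing points follows automatically from the strict monotonicity of $G$ on either side of its maximum.
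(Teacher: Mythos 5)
Your proof is correct and follows essentially the same route as the paper: both reduce the sign of the derivative of $r\mapsto f_{a,x}(r^2)$ to a logarithmic inequality, analyze the resulting unimodal auxiliary function (your $G$ is the paper's $g$ plus the constant $\ln(2a_2x_2/(a_1x_1))$), whose maximum at $(2x_2+1)/x_1$ yields exactly the stated dichotomy. No substantive difference.
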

\begin{proof}
We have $f(r):=f_{a,x}(r^2)=a_2 r^{-2x_2}-a_1e^{-x_1 r}$ and
$$
f'(r)=-\frac{2a_2x_2}{r^{2x_2+1}}+a_1x_1e^{-x_1r}.
$$
Thus we get
$$
f'(r)\geq 0 \iff e^{-x_1r}r^{2x_2+1}\geq \frac{2a_2x_2}{a_1x_1}\iff g(r)\geq 0
$$
where
$$
g(r)=-x_1r+(2x_2+1)\ln r -\ln\left(  \frac{2a_2x_2}{a_1x_1}\right).
$$
As $\displaystyle g'(r)=\frac{-x_1r+2x_2+1}{r}$, $g$ is  increasing on $\left(0,\frac{2x_2+1}{x_1}\right)$ and decreasing on $\left(\frac{2x_2+1}{x_1},+\infty  \right)$. Moreover $g(r)$ goes to $-\infty$ as $r\to 0$ or $r\to +\infty$.\\ Hence if $g\left(\frac{2x_2+1}{x_1}  \right)\leq 0$, i.e.
$$
(2x_2+1)\left[\ln\left( \frac{2x_2+1}{x_1} \right)-1\right]\leq \ln\left(  \frac{2a_2x_2}{a_1x_1}\right)
$$
then $g(r)\leq 0$ and $f'(r)\leq 0$ on $(0,+\infty)$, i.e. $f$ is decreasing on $(0,+\infty)$.\\
Furthermore, if $g\left(\frac{2x_2+1}{x_1}  \right)> 0$ then there exists $r_m,r_M$ such that $r_m<\frac{2x_2+1}{x_1}<r_M$ and $f$ is decreasing on intervals $(0,r_m)$ and $(r_M,+\infty)$ and increasing on $(r_m,r_M)$. \\ \\
\end{proof}

\begin{center}
\includegraphics[width=8cm,height=60mm]{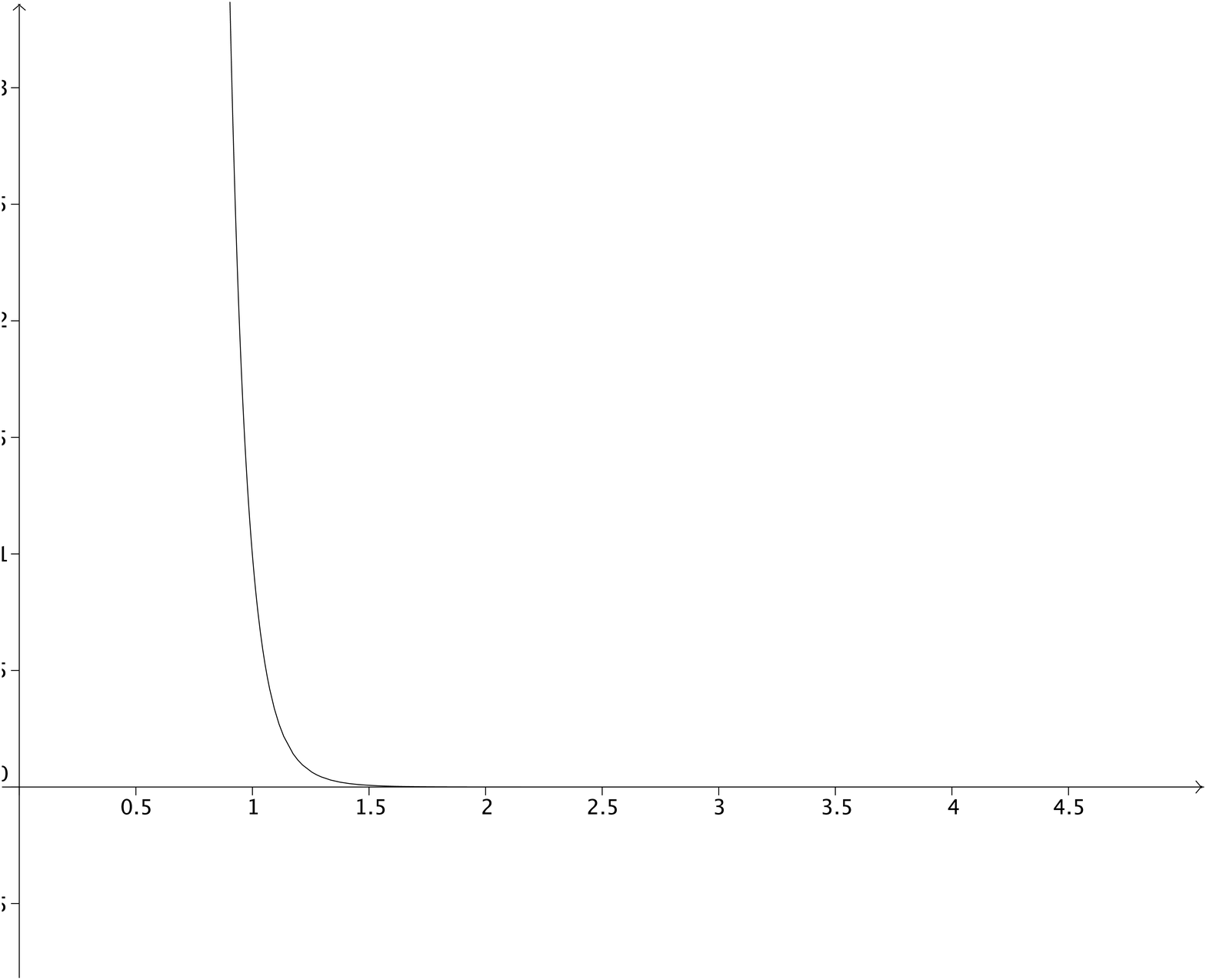} \includegraphics[width=8cm,height=60mm]{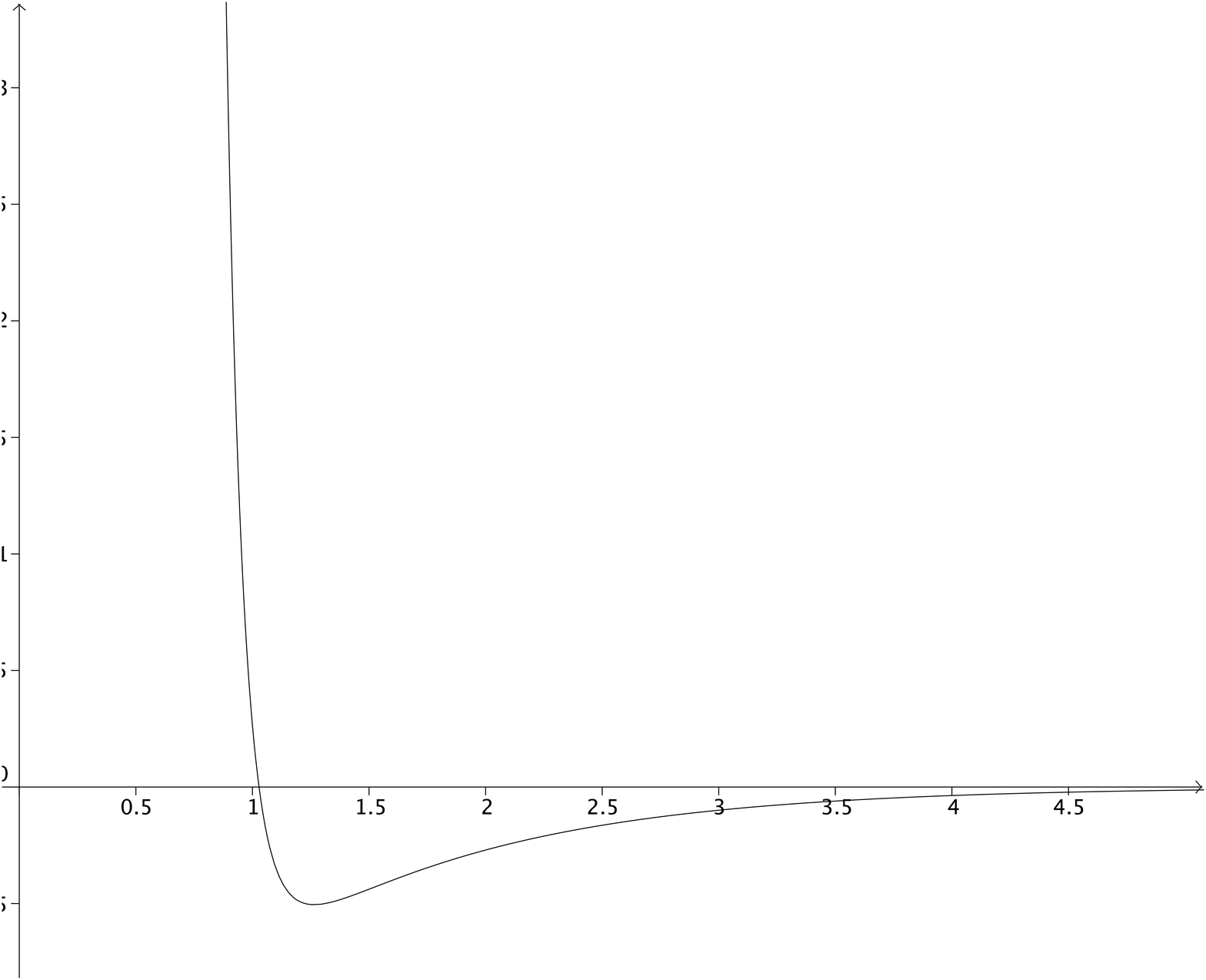}\\
\textbf{Fig. 9 :} Graph of $r\mapsto f_{a,x}(r^2)$ for $a=(1,1)$, $x=(5,6)$ on the left and $a=(1,2)$, $x=(1,6)$ on the right.
\end{center}
\vspace{5mm}

\begin{prop}\label{expCM} We have the following two cases :
\begin{itemize}
\item If it holds
$$
(x_2+1/2)\left[1+\ln\left( \frac{x_1^2}{4x_2+2} \right)  \right]\geq \ln\left(\frac{a_1x_1\Gamma(x_2)}{2\sqrt{\pi}a_2}  \right)
$$
then for any $A>0$, $\Lambda_A$ is the unique minimizer of $E_{f_{a,x}}$, up to rotation, among Bravais lattices of fixed area $A$.
\item If it holds
$$
A\leq \left(\frac{a_2\pi^{x_2+1}}{a_1x_1\Gamma(x_2)}  \right)^{\frac{1}{x_2+1/2}} \quad \text{ and } \quad (x_2+1/2)\left[1+\ln\left( \frac{x_1^2}{4x_2+2} \right)  \right]< \ln\left(\frac{a_1x_1\Gamma(x_2)}{2\sqrt{\pi}a_2}  \right)
$$
then $\Lambda_A$ is the unique minimizer of $E_{f_{a,x}}$, up to rotation, among Bravais lattices with fixed area $A$. Moreover, for any $a\in (0,+\infty)^2$, $x_2>3/2$, $A_0>0$ and any $x_1$ such that
$$
0<x_1\leq C_{A_0}:=\frac{a_2 \pi^{x_2+1}}{a_1 A_0^{x_2+1/2}\Gamma(x_2)},
$$
$\Lambda_A$ is the unique minimizer of $E_{f_{a,x}}$, up to rotation, among Bravais lattices of fixed area $A\in (0,A_0]$.
\end{itemize}
\end{prop}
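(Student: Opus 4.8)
The plan is to compute $\mu_{f_{a,x}}$ explicitly and then split along the two cases, the first amounting to complete monotonicity and the second to the Cauchy-root argument already used for $V_{a,x}$ and $f_{a,x,b,t}$. Using $\mathcal{L}^{-1}[r^{-x_2}](y)=y^{x_2-1}/\Gamma(x_2)$ together with the identity $\mathcal{L}^{-1}[e^{-x_1\sqrt{\cdot}}](y)=\frac{x_1}{2\sqrt{\pi}}y^{-3/2}e^{-x_1^2/(4y)}$ recalled in the proof of Proposition~\ref{Genexpo}, and having checked that $f_{a,x}$ is admissible (both terms are analytic on $\{\mathrm{Re}(z)>0\}$ and dominated by $|z|^{-x_2}$ with $x_2>3/2>1$), I would first record
$$
\mu_{f_{a,x}}(y)=\frac{a_2}{\Gamma(x_2)}\,y^{x_2-1}-\frac{a_1x_1}{2\sqrt{\pi}}\,y^{-3/2}e^{-x_1^2/(4y)}.
$$

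For the first case, I would note that $\mu_{f_{a,x}}(y)\ge 0$ for all $y>0$ is, after multiplying by $\frac{2\sqrt{\pi}}{a_1x_1}y^{3/2}e^{x_1^2/(4y)}$, equivalent to $\phi(y):=y^{x_2+1/2}e^{x_1^2/(4y)}\ge \frac{a_1x_1\Gamma(x_2)}{2\sqrt{\pi}a_2}$. Minimizing $\ln\phi(y)=(x_2+\tfrac12)\ln y+\frac{x_1^2}{4y}$ gives a unique critical point $y^\ast=\frac{x_1^2}{4x_2+2}$, and $\ln\phi(y^\ast)=(x_2+\tfrac12)[1+\ln(x_1^2/(4x_2+2))]$, so the stated inequality is exactly $\min_{y>0}\mu_{f_{a,x}}\ge 0$. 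Under it $\mu_{f_{a,x}}\ge 0$ on $(0,+\infty)$, hence $f_{a,x}$ is completely monotonic and Proposition~\ref{CrystCM} (equivalently $g_A\ge 0$ in Theorem~\ref{THM1}) yields uniqueness of $\Lambda_A$ for every $A>0$.

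For the second case, I would use $e^{-x_1^2/(4y)}\le 1$ to get $\mu_{f_{a,x}}(y)\ge \frac{a_2}{\Gamma(x_2)}y^{x_2-1}-\frac{a_1x_1}{2\sqrt{\pi}}y^{-3/2}$ and feed this into $g_A(y)=y^{-1}\mu_{f_{a,x}}(\frac{\pi}{yA})+\mu_{f_{a,x}}(\frac{\pi y}{A})$. Writing $\alpha_2=a_2\pi^{x_2-1}/\Gamma(x_2)$, this gives $g_A(y)\ge y^{-x_2}p(y)$ with
$$
p(y)=\frac{\alpha_2}{A^{x_2-1}}\bigl(1+y^{2x_2-1}\bigr)-\frac{a_1x_1A^{3/2}}{2\pi^2}\bigl(y^{x_2+1/2}+y^{x_2-3/2}\bigr).
$$
Since $x_2>3/2$, all four exponents are positive and $2x_2-1$ is strictly the largest, with positive leading coefficient. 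Applying the Cauchy bound \eqref{UBoundRoot} with $\lambda=2$ negative coefficients, the two resulting exponents $(2x_2-1)-(x_2+\tfrac12)=x_2-\tfrac32$ and $(2x_2-1)-(x_2-\tfrac32)=x_2+\tfrac12$ are both positive, so $M_p\le 1$ if and only if the common ratio $R=\frac{a_1x_1\Gamma(x_2)}{a_2\pi^{x_2+1}}A^{x_2+1/2}$ satisfies $R\le 1$, i.e. exactly $A\le(\frac{a_2\pi^{x_2+1}}{a_1x_1\Gamma(x_2)})^{1/(x_2+1/2)}$. Then $p\ge 0$ on $[1,+\infty)$, so $g_A\ge 0$ there and Theorem~\ref{THM1} applies. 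The ``moreover'' part is then immediate: asking $A_0$ to be at most this area bound rearranges to $x_1\le C_{A_0}$, so for such $x_1$ the bound holds for every $A\in(0,A_0]$ (and should the first case's inequality instead hold, minimality is valid for all $A$).

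I expect the only delicate point to be the constant bookkeeping in the second case --- carrying $\alpha_2$ and $\frac{a_1x_1}{2\sqrt{\pi}}$ through the substitution into $g_A$ and checking that both Cauchy exponents are positive (which is exactly where $x_1>0$ and $x_2>3/2$ enter), so that $M_p\le 1\iff R\le 1$ collapses to the single stated area inequality; the first case is a routine one-variable minimization.
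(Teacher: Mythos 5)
Your proof is correct and follows essentially the same route as the paper: the first case is the identical one-variable minimization of $y\mapsto \frac{x_1^2}{4y}+(x_2+\tfrac12)\ln y$ showing $\mu_{f_{a,x}}\ge 0$ is equivalent to the stated inequality, followed by Proposition \ref{CrystCM}. For the second case the paper simply cites Proposition \ref{Genexpo} (its corollary with $I_-=\emptyset$, $B=a_1x_1$), and your argument is just that citation unfolded --- the bound $e^{-x_1^2/(4y)}\le 1$, the function $p(y)$, and the Cauchy-root threshold $R\le 1$ reproduce exactly the paper's computation and yield the same area bound, with the ``moreover'' part obtained by the same rearrangement $A_0\le\bigl(\tfrac{a_2\pi^{x_2+1}}{a_1x_1\Gamma(x_2)}\bigr)^{1/(x_2+1/2)}\iff x_1\le C_{A_0}$.
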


\begin{proof}
By classical formula, we get
$$
\mu_{f_{a,x}}(y)=\frac{a_2}{\Gamma(x_2)}y^{x_2-1}-\frac{a_1x_1}{2\sqrt{\pi}}y^{-3/2}e^{-\displaystyle \frac{x_1^2}{4y}}.
$$
Our theorem is a consequence of Proposition \ref{CrystCM} because
$$
\forall y>0, \mu_{f_{a,x}}(y)\geq 0 \iff (x_2+1/2)\left[1+\ln\left( \frac{x_1^2}{4x_2+2} \right)  \right]\geq \ln\left(\frac{a_1x_1\Gamma(x_2)}{2\sqrt{\pi}a_2}  \right).
$$
Indeed, we have
\begin{align*}
\forall y>0, \mu_{f_{a,x}}(y)\geq 0 &\iff e^{x_1^2/4y}y^{x_2+1/2}\geq \frac{a_1 x_1 \Gamma(x_2)}{2\sqrt{\pi}a_2}\\
&\iff \frac{x_1^2}{4y}+(x_2+1/2)\ln y - \ln\left( \frac{a_1 x_1 \Gamma(x_2)}{2\sqrt{\pi}a_2}\right)\geq 0.
\end{align*}
We set 
$$
g(y)=\frac{x_1^2}{4y}+(x_2+1/2)\ln y - \ln\left( \frac{a_1 x_1 \Gamma(x_2)}{2\sqrt{\pi}a_2}\right)
$$
and we have $g'(y)=-\frac{x_1^2}{4y^2}+\frac{x_2+1/2}{y}$. It follows that $g$ is decreasing on $\left(0, \frac{x_1^2}{4x_2+2}\right)$ and increasing on $\left(\frac{x_1^2}{4x_2+2}, +\infty  \right)$. As $g$ goes to $+\infty$ as $y$ goes to $0$ or $+\infty$, it is clear that
\begin{align*}
\forall y>0, g(y)\geq 0 &\iff g\left(\frac{x_1^2}{4x_2+2}  \right)\geq 0\\
&\iff (x_2+1/2)\left[1+\ln\left( \frac{x_1^2}{4x_2+2} \right)  \right]\geq \ln\left(\frac{a_1x_1\Gamma(x_2)}{2\sqrt{\pi}a_2}  \right).
\end{align*}
Now, if $f_{a,x}$ is not completely monotonic, we apply directly Proposition \ref{Genexpo} to obtain second point.\\
Third point is clear because for any $(a_1,a_2)\in (0,+\infty)^2$ and any $x_2>3/2$,
$$
x_1\mapsto \left(\frac{a_2\pi^{x_2+1}}{a_1x_1\Gamma(x_2)}  \right)^{\frac{1}{x_2+1/2}}
$$ 
is an increasing function which goes to infinity as $x_1\to 0$.
\end{proof}

\begin{example}
For instance, we can choose $a=(1,1)$, $x_2=6$ and $A_0=1$. Thus we get 
$$
C_1=\frac{\pi^{13}}{11!}\approx 0.0727432
$$
and for any $x_1\leq C_1$, $\Lambda_1$ is the unique minimizer of $E_{f_{a,x}}$ among Bravais lattices of unit fixed area.\\
Moreover the form of the potential $y\mapsto f_{a,x}(y^2)$ is such that the decay to $0$ at infinity is slow as $x_1$ goes to 0.
\begin{center}
\includegraphics[width=12cm,height=90mm]{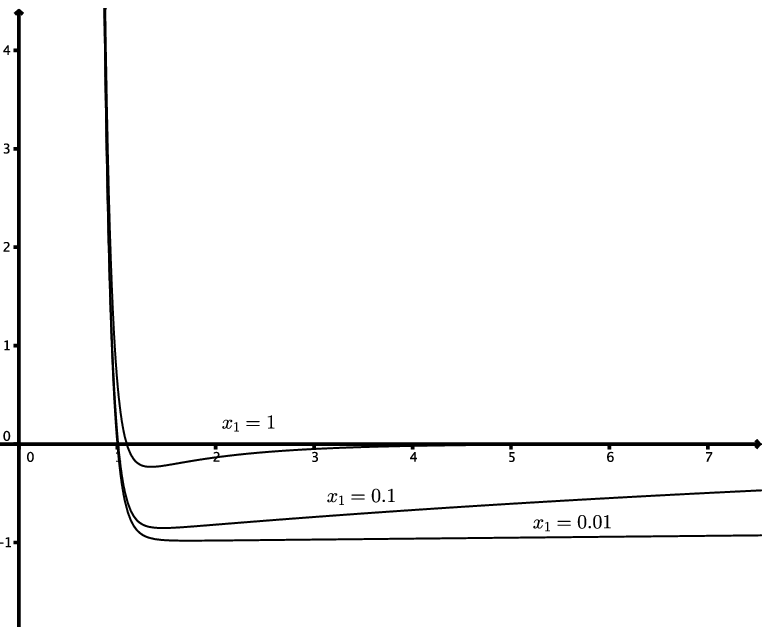}\\
\textbf{Fig. 10 :} Graph of $\displaystyle y\mapsto f_{a,x}(y^2)=\frac{1}{y^{12}}-e^{-x_1y}$ for $x_1\in \{0.01,0.1,1 \}$.
\end{center}
\end{example}
\begin{remark}
Our argument used in proofs of Theorem \ref{THM2}, based on variations of potential, can't be applied for our potentials $f_{a,x}$. 
\end{remark}

\noindent \textbf{Acknowledgements:} I am grateful to Etienne Sandier, Florian Theil, Salvatore Torquato and my colleague Peng Zhang for their interest and helpful discussions.

\bibliographystyle{plain}
\bibliography{biblio}

\end{document}